\documentclass{statsoc}

\usepackage{preamble}

%%%%%%%%%%%%%%%%%%%%%%%%%%%%%%%%%%%%%%%%%%%%%%
%%                                          %%
%% Uncomment next line to change            %%
%% the type of equation numbering           %%
%%                                          %%
%%%%%%%%%%%%%%%%%%%%%%%%%%%%%%%%%%%%%%%%%%%%%%
%\numberwithin{equation}{section}
%%%%%%%%%%%%%%%%%%%%%%%%%%%%%%%%%%%%%%%%%%%%%%
%%                                          %%
%% For Axiom, Claim, Corollary, Hypothezis, %%
%% Lemma, Theorem, Proposition              %%
%% use \theoremstyle{plain}                 %%
%%                                          %%
%%%%%%%%%%%%%%%%%%%%%%%%%%%%%%%%%%%%%%%%%%%%%%
%\theoremstyle{plain}
\newtheorem{theorem}{Theorem}[section]
\newtheorem{lemma}[theorem]{Lemma}

\newtheorem{proposition}[theorem]{Proposition}

% \renewenvironment{proof}{Proof}{$\Box$}

%%%%%%%%%%%%%%%%%%%%%%%%%%%%%%%%%%%%%%%%%%%%%%
%%                                          %%
%% For Assumption, Definition, Example,     %%
%% Notation, Property, Remark, Fact         %%
%% use \theoremstyle{remark}                %%
%%                                          %%
%%%%%%%%%%%%%%%%%%%%%%%%%%%%%%%%%%%%%%%%%%%%%%
%\theoremstyle{definition}
\newdefinition{definition}[theorem]{Definition}
\crefname{definition}{Definition}{Definitions}
\newdefinition{assumption}[theorem]{Assumption}
\crefname{assumption}{Assumption}{Assumptions}

\title{Frequentist Inference for Semi-mechanistic Epidemic Models with Interventions}

\author[Bong {\it et al.}]{Heejong Bong}
\address{Department of Statistics, University of Michigan, Ann Arbor, MI, USA.}
\email{hbong@umich.edu}

\author[Bong {\it et al.}]{Val\'erie Ventura}
\address{Department of Statistics \& Data Science and Delphi Research Group, Carnegie Mellon University, Pittsburgh, PA, USA.}
\email{vventura@stat.cmu.edu}

\author[Bong {\it et al.}]{Larry Wasserman}
\address{Department of Statistics \& Data Science, Machine Learning Department and Delphi Research Group, Carnegie Mellon University, Pittsburgh, PA, USA.}
\email{larry@stat.cmu.edu}

\begin{document}
\maketitle

\begin{abstract}
The effect of public health interventions
on an epidemic
are often estimated by
adding the intervention to
epidemic models.
During the Covid-19 epidemic, numerous papers 
used such methods for making scenario predictions.
The majority of these papers use Bayesian methods
to estimate the parameters of the model.
In this paper we show how to use
frequentist methods for estimating these effects
which avoids having to specify prior distributions.
We also use model-free shrinkage methods
to improve estimation when there are many different geographic regions.
This allows us to borrow strength from different regions
while still getting confidence intervals
with correct coverage and
without having to specify a hierarchical model.
Throughout, we focus on a semi-mechanistic model
which provides a simple, tractable alternative to compartmental methods.
\end{abstract}

\section{Introduction}

We consider the problem
of estimating epidemic models
that include public health interventions.
We use the term
``intervention'' very broadly
to refer to any variable whose effect on the epidemic
is of interest such as:
mask usage, changes in social mobility, lockdowns, vaccines, etc.
We focus on the semi-mechanistic model
from \citet{bhatt2020semi}
which we describe in \cref{sec::themodel}.
The model
provides a simple, tractable alternative to compartmental models
(such as the SIR model of \citealp{sir})
but, as shown in \citet{bhatt2020semi},
it still captures enough dynamics to
provide accurate modeling of the epidemic process.

During the Covid-19 epidemic, numerous papers 
used similar models for making scenario predictions;
see, for example, \cite{perra2021non, g2022, L2022s, vytla2021, baker2020}.
Virtually all of these papers use Bayesian methods
to estimate the parameters of the model.
Typically, the posterior is approximated by
Monte Carlo methods.
Bayesian methods provide a powerful approach for combining
prior information with data.
But these methods do have some drawbacks  \citep{Bong2023}.
First, one has to specify many priors
and the inference can be sensitive to the choice of priors.
Second, Bayesian inferences do not come with frequency guarantees:
a 95 percent posterior interval does not contain the true parameter value
95 percent of the time.
Frequentist methods require no priors
and have proper frequency guarantees.

Our comparisons of frequentist and
Bayesian methods in this paper
is not meant as a criticism of the Bayesian approach ---
which can be a powerful way to incoporate prior information ---
but rather to \V{provide methods to perform frequentist inferences and } highlight the differences in the two approaches
when modeling epidemics.

In some cases, we have data from several different regions.
The accuracy of the estimates can be improved by properly combining
the information from the different regions.
In the Bayesian framework, information is combined by positing
a hierarchical model. This can introduce bias if this
model is mis-specified. We show how information can be combined in a frequentist manner 
in a way that does not require specifying a hierarchical model.

\medskip

{\bf Related Work.} 
The literature on inference
for epidemic models is huge
and we cannot provide a complete review here.
In addition to the references mentioned earlier
we also point to
\cite{chatzilena2019, fintzi2022linear, li2021efficient}
as recent examples of some of the approaches.
Again we emphasize that almost all work
uses Bayesian inference in contrast to our approach.
The work in \citet{bhatt2020semi}
is the closest to ours and is, in fact,
the main inspiration for our work.

\medskip

{\bf Paper Outline.}
In \cref{sec::themodel} we describe the model and in \cref{sec::inference} we 
describe how to fit the model to data 
by maximum likelihood (ML).
It is important to check the model fit so that we can trust inferences about model parameters
and derived forecasts, which we address in \cref{sec::diagnostics}.
In \cref{sec::predictions} we develop inference for ML estimates and scenario predictions, which are robust to model mis-specifications. 
In \cref{sec::shrinkage} we show how to use shrinkage methods
to improve estimation when there are several different geographic regions,
without having to specify a hierarchical model.
In \cref{sec::examples}
we apply our methods to simulated and observational data. We conclude in \cref{sec::discussion}.

\section{The Semi-mechanistic Model for Infections, Outcomes and Treatments} 
\label{sec::themodel}

Let
$Y_1,Y_2,\ldots, Y_T$ denote \V{a scalar time series of observed} outcomes (such as deaths) \V{at a location} and
$I_{-\infty},\ldots, I_0,I_1,\ldots, I_T$ 
%let $I_1,I_2,\ldots, I_T$ 
denote \V{the associated process of} new infections where
$t$ typically indexes days or weeks;
 we start their indexing at $-\infty$ because $Y_t$ depends on infections in the past.
The $I_t$'s are \V{typically} latent (unobserved) \V{because they are difficult to measure}.
We use overlines to denote histories, for example,
$\overline I_t = (I_{-\infty},\ldots,I_t)$ and
$\overline Y_t = (Y_1,\ldots,Y_t)$.
% We use overbars to denote histories.
% For example,
% $\overline{I}_t= (I_{-k},\ldots,I_t)$ and
% $\overline{Y}_t= (Y_0,\ldots,Y_t)$.

We focus on 
the model from \citet{bhatt2020semi} which is
\begin{equation} \label{eq::model0}
\begin{aligned}
    \E[Y_t| \overline I_t, \overline Y_{t-1}] &= \alpha_t \sum_{s:s < t}  I_s \pi_{t-s}\\
    \E[I_t| \overline I_{t-1}, \overline Y_{t-1}] &= R_t \sum_{s:s < t} I_s g_{t-s}
\end{aligned}
\end{equation}
% \begin{align} \label{eq::model0}
%     \E[Y_t|  \overline I_t, \overline Y_{t-1}] &= \alpha_t \sum_{s \le t}  I_s \pi_{t-s} \nonumber\\
%     \E[I_t| \overline I_{t-1}, \overline Y_{t-1}] &= R_t \sum_{s < t} I_s g_{t-s}
% \end{align}
for $t>0$, where $g_r \ge 0$, 
$\sum_{t=1}^\infty g_t = 1$, is the generating distribution that specifies the probabilities of the 
possible lags between an infection and a secondary infection, 
and the lag between infection and outcome has distribution
$\pi_r \ge 0$, with $\sum_{t=1}^\infty \pi_t = 1$. 
The ascertainment rate $\alpha_t$ is the probability that an infection at $t$ leads to an outcome
and the reproduction number $R_t > 0$ is the mean number of infections generated in the future by an infection at $t$. \V{The reproduction number is typically the parameter of main interest because it quantifies the progression of the epidemic. Finally, as is standard in time series analyses, the expectations are with respect to the conditional distributions of $Y_t$ and $I_t$ given the past. These distributions are often assumed to be Poisson or Negative Binomial (NB), the latter being more flexible and having the former as a limiting case.
}

This model, in its general form,
where $g$ and $\pi$ are unknown, and $\alpha_t$ and $R_t$ can vary with $t$, is not identified,
as there are more parameters than data points.
Hence there is no consistent estimator of the parameters.
Typically, $g$ and $\pi$ are estimated separately, which we assume as well.
Hence, we take them as fixed in the rest of the paper; they are plotted in \cref{fig::gpi}.
Similarly, $\alpha_t$ is also typically estimated from other data sources.
Finally, to be able to estimate the reproduction number $R_t$ consistently, one needs to assume a model with a finite number of degrees of freedom,
for example $R_t = \sum_{j=1}^k \gamma_j b_j(t)$
for basis functions $b_1,\ldots, b_k$.

Now let $A_{-\infty},\ldots, A_0,A_1,\ldots,A_T$ represent a \V{scalar or vector} time series of some interventions
such as mobility, masks, vaccines, lockdowns etc. 
To model the effect of this intervention we will modify the model
to include the $A_t$'s.
(There are other ways to model the intervention 
such as marginal structural models; see \citealp{Bonvini2022covid}).
We consider
the model 
in \citet{bhatt2020semi}:
\begin{equation}\label{eq::model1}
\E[I_t| \overline I_{t-1}, \overline Y_{t-1}, \overline A_t] = R(\overline A_t ,\beta) \sum_{s:s<t}  I_s g_{t-s},
\end{equation}
where $\overline A_t = (A_{-\infty},\ldots,A_t)$ and 
$R(\overline A_t,\beta)$
is some parametric model with the property that
it
does not depend on $A$ when $\beta$
(or some component of $\beta$) is 0. 
\V{This model assumes that the reproduction number changes only through the interventions, which is reasonable if the period of study is short enough that conditions other than the interventions -- e.g. the prevalence of population immunity, number of susceptibles, etc. -- that can affect disease transmission remain unchanged.} 
The parametrization used in \citet{bhatt2020semi} is
\begin{equation} 
\label{eq::Rt.model0}
R_t \equiv R(\overline A_t,\beta) = K \left(1 + e^{ -(\beta_0 + \beta_1^\top A_t)}\right)^{-1},
\end{equation}
where $\beta=(\beta_0, \beta_1)$, $K$ is the maximum possible transmission rate and 
$A_t$ is a vector 
%with first element $1$ and the other elements 
representing \V{one or} several interventions. \V{That is, $R_t/K$ is modelled as the logistic function 
of $(\beta_0 + \beta_1^\top A_t)$.}
\V{Note that $\beta_1$ measures
the change in the reproduction number $R_t$ 
with respect to changes in $A_t$, but it is hard to interpret it more meaningfully because 
the logistic function is non-linear in $A_t$. 
We can, however, look at $R_t$ itself, which is more interpretable.
We discuss this point more in Section \ref{sec::data}.
(In the application of 
\cref{sec::data}, $\beta_1^\top A_t$ is not sufficiently small to linearize the logistic function.)}
An alternative to the reproduction number in \cref{eq::Rt.model0} would be to include the intervention in the sum, leading to the exponential model used in \cite{Bonvini2022covid}
\begin{equation*} %\label{eq::Rt.model2}
    \E[I_t| \overline I_{t-1}, \overline Y_{t-1}, \overline A_t] = \sum_{s:s<t} e^{\beta_0 + \beta_1^\top A_{s}}I_s g_{t-s}.
\end{equation*}
In either case, 
the assumption is that
the intervention $A$ affects the outcome $Y$ by way of infections $I$, \V{which is reasonable since $Y$, e.g. deaths or cases, cannot happen without infections.}
See \cref{fig::dag1}.
%One possible option for $g$ is the log-link. This provides easily interpretable effect sizes; a one unit change in a covariate multiplies transmission by a constant factor. However, this can lead to prior mass on unreasonably hightransmission rates. With this in mind, an alternative is to use a generalisation of the logit link for which
%\begin{equation*}
 %   g^{-1}(x) = \frac{K}{1 + e^{-x}},
%\end{equation*}g
%where K is the maximum possible value for transmission rates. This serves a similar purpose to the carrying capacity in a logistic growth model}
In the rest of the paper, we focus on the model in \cref{eq::Rt.model0} \V{so that our results can be compared to \citet{bhatt2020semi}.}

Our two primary goals are to obtain point estimates and confidence intervals for $\beta$ and $R_t$, so we can quantify the effects of interventions on the outcome, and \V{use these estimates to produce} scenario predictions.

{\bf Remark:}
{\em \V{
The parameter determines
the local effect of $A$ on $R_t$.
But if we want the average treatment effect
of the entire treatment vector $a_1,\ldots, a_T$
on the final value $Y_T$
then one needs to use the $g$-formula
\cite{robins2013estimation, bates2022causal}. 
}}

\begin{figure}
\begin{center}
\begin{tikzpicture}[->, shorten >=2pt,>=stealth, node distance=1cm, noname/.style={ ellipse, minimum width=5em, minimum height=3em, draw } ]
\node[] (1) {$A_1$};
\node[circle,draw,fill=lightgray] (2) [right= of 1] {$I_1$};
\node (3) [right= of 2] {$Y_1$};
\node (4) [right= of 3] {$A_2$};
\node[circle,draw,fill=lightgray] (5) [right= of 4] {$I_2$};
\node (6) [right= of 5] {$Y_2$};

\path (1) edge node {} (2);
\path (1) edge [bend left=60pt] node {} (4);
\path (1) edge [bend left=60pt] node {} (5);

\path (2) edge node {} (3);
\path (2) edge [bend right=60pt] node {} (4);
\path (2) edge [bend right=60pt] node {} (5);
\path (2) edge [bend right=60pt] node {} (6);

\path (3) edge node {} (4);

\path (4) edge node {} (5);

\path (5) edge node {} (6);

\end{tikzpicture} 
\end{center}
\caption{\textbf{ DAG for the model}. Note that $A$ affects $Y$ only through $I$.}
\label{fig::dag1}
\end{figure}
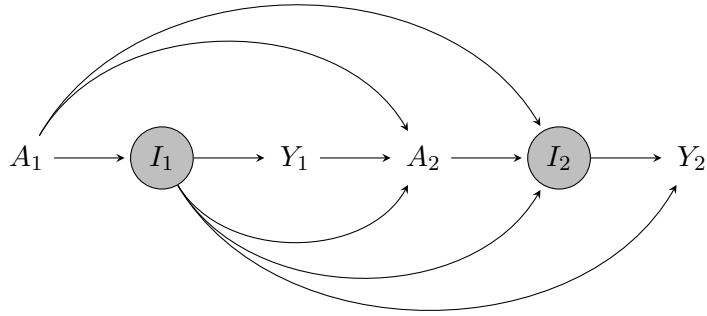

\section{Estimation \label{sec::inference}}

Here we discuss maximum likelihood inference for the model parameters.
We focus on one single observed time series but
in \cref{sec::shrinkage}
we consider shrinkage methods for dealing with time series at multiple locations. So far we have only specified the means of the infection and death processes in \cref{eq::model0}. To complete the model,
we assume that the $Y_t$ are either Gaussian or Negative Binomial (NB) distributed, 
%as in \citet{bhatt2020semi}, 
but our methods can
easily be extended to other distributions.
Just as in \citet{bhatt2020semi}, we only model the distribution of the outcome and let the latent infection process be deterministic but unobserved, \V{which implies that the $Y_t$ are independent conditional on its past $\overline Y_{t-1}$. }
This is an unrealistic assumption --
infections are random --
but because we observe only one repeat of the outcome process, we
cannot identify the distributions of both $I_t$ and $Y_t$.
Therefore, instead of fitting \cref{eq::model0}, we fit
\begin{equation} \label{eq::model00}
\begin{aligned} 
\E[Y_t| \overline I_t, \overline Y_{t-1}, \overline{A}_t] \V{ = \E[Y_t| \overline{A}_t] } &= \alpha_t \sum_{s: s < t}  I_s \pi_{t-s}\\
I_t &= R(\overline A_t, \beta) \sum_{s: s < t} I_s g_{t-s}
\end{aligned}
\end{equation}
where $Y_t$ is either Gaussian or NB.
Both distributions have an additional nuisance parameter $\nu_t$, the standard deviation parameter for the Gaussian distribution and the inverse dispersion parameter for the NB distribution.
We assume $\nu_t = \nu$ for all $t$. 
\V{The parameter to be estimated is $\theta=(\beta, \nu, \mu)$, where $\beta$ is the reproduction number parameter, $\nu$ is the nuisance parameter and $\mu$ is the seeding value, described below. We obtain the maximum likelihood estimator (MLE) of $\theta$ 
using the Block Coordinate Descent Algorithm described in
\cref{sec::append.algos}.}

\paragraph{Infection process seeding} 
\V{The expectation of $Y_t$ in \cref{eq::model00} depends on all infections $I_t$ prior to time $t$. 
Because $Y_t$ is observed for $t \ge 1$, we cannot infer $I_t$ prior to $t=1$.
%,so $I_t$ for $t \leq 0$ are unknown parameters, called {\em seeding values}, rather than latent variables.
% 
\cite{bhatt2020semi} thus assumed
$I_{t}=0$ for $t \leq -T_0$ and $I_{t}=e^\mu$ for $t = -T_0+1, \dots, 0$, where $\mu$ is a parameter to be estimated and $T_0=6$.
We proceeded similarly but with $T_0 = 40$, because the infection-to-death distribution $\pi$ in \cref{fig::gpi} has substantial mass on $\{1, \dots, 40\}$, and we 
obtained better fits than with $T_0=6$, especially for small $t$.
}

\V{
Note that the chosen ``seeding" does not conform 
with the data or the model -- infections cannot remain constant over time -- which induces some bias.
We initially tried seeding only one value at time $-T_0$, specifically 
$I_{-T_0} = e^\mu$, 
with $\mu$ and $T_0$ to be estimated, 
and $I_{t}=0$ for $t \leq -T_0$.
%, and letting the model in \cref{eq::model00} determine the subsequent values of $I_t$, that is $ I_{-T_0 + 1} =  R_{-T_0 + 1} I_{-T_0} g_1$, $ I_{-T_0 + 2} =  R_{-T_0 + 2} \{ I_{-T_0+1} g_1 + I_{-T_0} g_2 \}$ and so on.
But because there is little information in the data about $I_t$ for $t \le 0$, this approach produced very large confidence intervals for the parameters, so we abandoned it.
The optimal seeding strategy involves a bias-variance tradeoff and remains an open problem.
}

\paragraph{Parameter identifiability}

We deal with the nonidentifiability that arises from having too many
parameters by imposing restrictions
such as taking $\alpha_t = \alpha$ to be constant \V{and assuming that $R_t$ 
varies with $t$ only through $A_t$. Both these assumptions are reasonable for the data analysis in \cref{sec::data} because the study period is short enough that $\alpha_t$ and $R_t$ would not naturally vary on their own. Otherwise one could include a time varying intercept with a small number of parameters in the model for $R_t$, as in \cite{Bonvini2022covid}.}
In the Bayesian framework, it is not uncommon to ignore the
identifiability issues since, formally, the posterior can be computed
even when there is nonidentifiability.
But this is dangerous because then the posterior is driven by the 
prior even with an infinite sample size.
The effect of the nonidentifiability is then hidden from the user.
\V{However, there are methods
to deal with identifiability issues
in Bayesian inference; see \cite{gustafson2015, giacomini2021}.}

\begin{proposition} \label{thm::identifiability}
Assume that the data follow model (\ref{eq::model00}) 
\V{and that $Y_t$ given $(\overline I_T, \overline Y_{t-1}, \overline{A}_t)$ is either Gaussian or Negative Binomial with dispersion parameter $\nu$.}
Also assume that
(a) if $\beta \neq 0$ then
$P(\beta_0 + \beta_1^\top A_t = 0,\ \text{for\ all\ }t=1,\ldots, T-T_0)=0$ and 
(b) $\sum_{t=1}^{T_0} \pi_t > 0$ and $\sum_{t=1}^{T_0} g_t > 0$. 
Then $\theta = (\beta, \nu, \mu)$ is identified.
\end{proposition}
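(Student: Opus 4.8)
The plan is to use that, because the infection path $\{I_t^{\theta}\}$ is a deterministic function of $\theta=(\beta,\nu,\mu)$ and of the fixed, known quantities $(\alpha_t)_t,\pi,g,T_0$, the outcome mean $m_t(\theta):=\E[Y_t|\overline A_t]$ does not depend on $\overline Y_{t-1}$, so that conditionally on $\overline A_T$ the $Y_t$ are independent with law $f_{\nu}(\,\cdot\,;m_t(\theta))$ --- Gaussian with mean $m_t(\theta)$ and standard deviation $\nu$, or Negative Binomial with mean $m_t(\theta)$ and inverse dispersion $\nu$. Hence the conditional law of $\overline Y_T$ is $\prod_{t=1}^{T}f_{\nu}(\,\cdot\,;m_t(\theta))$, and two such product laws agree only if all their marginals agree; since $(m,\nu)$ is recovered from $f_{\nu}(\,\cdot\,;m)$ whenever $m>0$ (the variance being $\nu^2$ in the Gaussian case and $m+m^2/\nu$ in the NB case, whereas for $m=0$ the NB law degenerates to a point mass at $0$), it suffices to show that every parameter yields $m_1(\theta)>0$ and that $(\beta,\mu)\mapsto(m_1(\theta),\dots,m_T(\theta))$ is injective, as then $\nu$ is recovered from the first marginal.

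First I would pin down $\mu$ and $\nu$. Unrolling \cref{eq::model00} with the seeding $I_s=e^{\mu}$ for $-T_0<s\le0$ and $I_s=0$ for $s\le-T_0$ gives $m_1(\theta)=\alpha_1 e^{\mu}\sum_{j=1}^{T_0}\pi_j$, which is strictly positive by assumption~(b) (together with $\alpha_t>0$). As $\alpha_1,\pi,T_0$ are known, $m_1(\theta)=m_1(\theta')$ forces $\mu=\mu'$, and $\nu$ is then identified because $m_1>0$.

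Next I would recover the whole infection path and the reproduction numbers by a deconvolution carried out by induction on $t$. Let $k=\min\{j\ge1:\pi_j>0\}$, so $k\le T_0$ by assumption~(b), and assume $T>T_0$, as the statement implicitly does. For $t\ge1$, $m_{t+k}(\theta)$ equals $\alpha_{t+k}$ times a seed term depending only on $\mu$, plus $\alpha_{t+k}\sum_{s=1}^{t+k-1}I_s^{\theta}\pi_{t+k-s}$; in this last sum the only discrepancy with $\theta'$ comes from $s=t$, which carries the weight $\pi_k$, since indices $s>t$ give lags $<k$ hence weight $0$, while indices $s<t$ satisfy $I_s^{\theta}=I_s^{\theta'}$ by the inductive hypothesis. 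Using $\mu=\mu'$ and $\pi_k>0$, equating $m_{t+k}(\theta)=m_{t+k}(\theta')$ therefore gives $I_t^{\theta}=I_t^{\theta'}$ for $t=1,\dots,T-k$, and so for $t=1,\dots,T-T_0$. A second induction shows $I_t^{\theta}>0$ for every $t\ge1$: fixing $j^{\star}\le T_0$ with $g_{j^{\star}}>0$ (assumption~(b)), the index $s=t-j^{\star}$ lies in $\{\,s:-T_0<s\le 0\,\}\cup\{1,\dots,t-1\}$ and carries a strictly positive infection (a seed $e^{\mu}$, or $I_s^{\theta}>0$ by induction), so $\sum_{s<t}I_s^{\theta}g_{t-s}>0$, whence $I_t^{\theta}=R(\overline A_t,\beta)\sum_{s<t}I_s^{\theta}g_{t-s}>0$ because $R(\,\cdot\,,\beta)\in(0,K)$. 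Since this convolution factor is strictly positive and, by $\mu=\mu'$ and the already-matched $I_1^{\theta}=I_1^{\theta'},\dots,I_{t-1}^{\theta}=I_{t-1}^{\theta'}$, takes the same value under $\theta$ and $\theta'$, dividing in the infection equation recovers $R(\overline A_t,\beta)=R(\overline A_t,\beta')$ for $t=1,\dots,T-T_0$.

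Finally, because $r\mapsto K(1+e^{-r})^{-1}$ is strictly increasing, the previous step gives $(\beta_0-\beta_0')+(\beta_1-\beta_1')^{\top}A_t=0$ for $t=1,\dots,T-T_0$; when $\beta=0$ this reads $\beta_0'+(\beta_1')^{\top}A_t=0$ for all such $t$, an event to which assumption~(a) assigns probability $0$ unless $\beta'=0$, and the symmetric argument handles $\beta\ne0$, so $\beta=\beta'$ and $\theta$ is identified. I expect the two delicate points to be (i) the positivity bookkeeping that justifies dividing by $\sum_{s<t}I_s^{\theta}g_{t-s}$ at every step of the induction, and (ii) turning the content of assumption~(a) --- in effect, that the vectors $(1,A_t^{\top})^{\top}$, $t=1,\dots,T-T_0$, span their ambient space with probability one --- into recovery of $\beta$ itself rather than only of the affine functional $a\mapsto\beta_0+\beta_1^{\top}a$ at the observed $A_t$; the rest is routine unrolling of the recursion.
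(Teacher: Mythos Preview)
Your proposal is correct and follows essentially the same route as the paper's proof: identify $\mu$ from $m_1$, then $\nu$ from the law of $Y_1$, deconvolve the outcome means using the minimal positive index of $\pi$ to recover $I_1,\dots,I_{T-T_0}$ and hence the $R_t$'s, and finally invoke assumption~(a). Your positivity bookkeeping for $I_t$ is in fact more explicit than the paper's, and the final step is cleanest if you apply~(a) directly to the difference $\beta-\beta'$ rather than splitting into the cases $\beta=0$ and $\beta\neq0$.
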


\V{The proof is in \cref{sec::append.proofs}.
Condition (a) requires that $\beta_0 + \beta_1^\top A_t$ not always be 0, which is a very weak assumption. The generating and infection-to-death distributions $g$ and $\pi$ (\cref{fig::gpi}) satisfy
condition (b) for any small $T_0$ -- in fact the sums are close to 1 for $T_0=40$.
}

\section{Model Checking \label{sec::diagnostics}}

It is important to check the adequacy of the regression mean so that we can trust inferences about model parameters, and while we could set confidence intervals for parameters based on the asymptotic normal distribution of ML estimates, checking the distribution of the data is also important because epidemiological models are often used to forecast the future time course of epidemics (see \cref{sec::predictions}).

Regression methods provide a battery of diagnostics to assess model fit and identify outliers and influential observations.
Because our model assumes that the latent infection process is deterministic, it implies that the observed deaths are independent conditional on the past. Hence our model fits within the regression framework, although we cannot write analytically the regression mean. We can nevertheless apply standard diagnostics.

Most diagnostics are based on residuals.
For non-Gaussian data, deviance residuals are close to normally distributed, and therefore more appropriate for diagnostics, than raw residuals, although they are comparable when the outcome count data are large. For brevity and simplicity here, we therefore focus on raw residuals. They are defined as 
$e_t \equiv Y_t - \hat Y_t$, where 
$\hat Y_t = \Exp_{\hat\theta}[Y_t|\overline A_{t}]$. 
Standardized residuals are given by
$r_t = e_t / \sqrt{(1 - h_t) \hat \Var(Y_t)}$, where
$\hat \Var(Y_t) = \hat \nu^2 = n^{-1} \sum(Y_t - \hat Y_t)^2$ for Gaussian data, 
$\hat \Var(Y_t) = \hat{Y}_t (\hat Y_t^2 / \nu)$ for NB data, and $h_t$ is the leverage of observation $t$, obtained from the diagonal of the hat matrix,
that is, the projection matrix corresponding to
the linearization at the MLE.
 We have no explicit hat matrix here, but because the covariates, time and intervention(s), are equally spaced for the former and designed or taking values in a range for the later, all observations should have similar, and thus very small, leverage, so that $(1-h_t) \approx 1$ for all $t$.
Studentized residuals are defined like standardized residuals, except that to obtain the residual for $Y_t$, all parameters are estimated without $Y_t$.
 
Model diagnostics aim to assess the adequacy of the regression mean function and the assumed distribution of the data. For the former we plot the standardized residuals versus all covariates and fitted values -- they should look random and homoskedastic -- and for the later, we produce a Q-Q plot of the standardized residuals versus the standard normal quantiles.

Case diagnostics aim to identify influential points, that is, points that have large influence on parameter estimates, measured by Cook's distance:
$$ 
D_t = \sum_{t=1}^{T} (\hat Y_t - \hat Y_{(-t)})^2 / (p s^2),
$$
where $s^2=  \sum(Y_t - \hat Y_t)^2/(n-p)$,
$p$ is the number of parameters, and the subscript $(-t)$ indicates that estimates were obtained without $Y_t$. A value of $D_t$ larger than one usually signals influence.
Large values of 
$C_t = (\hat \beta_1 - \hat \beta_{1,(-t)}) / \sqrt{\Var(\hat \beta_1)}$ more specifically detect points that influence $\hat \beta_1$, and similarly for other $\beta_j$'s if there are several interventions.
Influential points are either high leverage points -- but there shouldn't be any since $h_t \approx 0$ for all $t$ -- or outliers, so it is also useful to identify outliers using standardized residuals.

\section{Model Free Standard Errors, CLT and Confidence Set}
\label{sec::CLT}

% Next we show how to get standard errors and confidence sets.
% These will be robust standard errors meaning that they are valid even if the model is misspecified.
% In particular, the estimated confidence sets are valid even when the observed death $(y_t: t \in [T])$ is not i.i.d. conditional on the underlying infection process $(I_t: t \in [T])$ and our model assumption in Eq. XX mis-specifies the data. Define the score function
% $$
% s(\beta) = \sum_t s_t(\beta) \equiv \sum_t \nabla_\beta \log p(y_t;\beta)
% $$
% where $\nabla_\beta$ denotes the gradient with respect to $\beta$
% and let $H$ be the corresponding Hessian. Under the model mis-specification setting, the stochastic process $(s_t(\beta): t \in [T])$, defining the score function and MLE, is dependent across $t$. To establish consistency and asymptotic normality of the MLE under the temporal dependence, we need the process have a sufficiently "fading memory" \citep{potscher1997dynamic}.

We now turn to the issue of obtaining standard errors and confidence sets for parameter $\theta=(\beta, \nu, \mu)$
defined in \cref{sec::inference}.
Note that under model misspecification, the stochastic process $\overline Y_{T}$ may exhibit temporal dependence conditional on $\overline I_{T}$. 
We want an estimate of the standard error that is robust to
extra dependence of this sort.
In order to establish consistency and asymptotic normality of the MLE under this setting, we require that the process has a sufficiently fading memory as discussed by \citet{potscher1997dynamic}. As an illustration, we will assume that $\overline Y_{T}$ is well-approximated by an $\alpha$-mixing process, which is a class of random processes that have the suitable ``fading memory" property; see Chapter 6 in \citet{potscher1997dynamic}, and below for the definitions of (a) $\alpha$-mixing processes and (b) the concept of approximation.

\begin{definition} \label{def::alpha}
Let $\overline Y_{T}$ be a stochastic process on a 
probability space $(\reals, \mathfrak{F}, \Pr)$.
    \begin{enumerate}
        \item Define
        \begin{equation*}
            \alpha(j) = \sup_{k \in \nats} \sup \{ \abs{ \Pr[F \cap G] - \Pr[F] \Pr[G]}: F \in \mathfrak{F}_{[1,k]}, G \in \mathfrak{F}_{[k+j, T]} \},
        \end{equation*}
        where $\mathfrak{F}_{[t,s]}$ is the $\sigma$-field generated by $Y_t, \dots, Y_s$ for $1 \leq t < s \leq T$.
        If $\lim_{j \rightarrow \infty} \alpha(j) = 0$, we say that the process $(Y_t)$ is {\it $\alpha$-mixing}.
        \medskip
        
        \item \V{Let $(Y^o_t \in \reals^p: t = -\infty, \dots, \infty)$ be a random process we take as a baseline. We say $\overline Y_{T}$ is {\it near epoch dependent of size $-q$ on the baseline process $Y^o_t$} if
        \begin{equation*}
            \sup_t \Exp[\abs{Y_t - \Exp[Y_t | Y^o_{t-m}, \dots, Y^o_{t+m}]}^2] = O(m^{-2q}).
        \end{equation*}}
    \end{enumerate}
\end{definition}

\V{The rest of this section relies heavily on  \citet{potscher1997dynamic} Theorems 7.1 and 11.2, and associated assumptions. The theorems entail consistency and asymptotic normality of our estimator conditional on $\overline{A}_T$, so our assumptions below are presented under the $\sigma$-algebra $\mathfrak{F}(\overline{Y}_T)$ generated by $\overline{A}_T$. First, we make the following two assumptions on $\overline Y_{T}$.} 

\begin{assumption} \label{assmp::mixing}
\V{Conditional on $\overline A_T$,}
    $\overline Y_{T}$ is near epoch dependent of size $-2(\gamma-1)/(\gamma-2)$ on an $\alpha$-mixing baseline process with mixing coefficients $\alpha(j) = O(j^{-2\gamma/(\gamma-2)})$ for some $\gamma > 2$,
    \V{almost surely.}
\end{assumption}

\begin{assumption} \label{assmp::bounded}
    For $\gamma > 2$ in Assumption~\ref{assmp::mixing}, 
    there exists $\mathfrak{F}(\overline{A}_T)$-measurable random variable $Y_{4\gamma,\max} \in (0, \infty)$ such that 
    \begin{equation*}
        \sup_t \Exp[Y_t^{4\gamma} \mid \bar{A}_t] \leq Y_{4\gamma,\max}, ~\text{a.s.}
        % \sup_t \Exp[Y_t^{4\gamma} | \overline Y_{t-1}, \overline I_{t}, \overline A_t] \leq Y_{4\gamma,\max}.
    \end{equation*}
    \V{and the conditional distributions of $Y_t$ on $\overline{A}_t$ satisfy tightness property, i.e.,
    \begin{equation*}
        \lim_{m \rightarrow \infty} \sup_{T} T^{-1} \sum_{t=1}^T \Pr[Y_t \notin K_m | \overline{A}_t] = 0, ~\text{a.s.,}
    \end{equation*}
    for some sequence of $\mathfrak{F}(\overline{A}_T)$-measurable random compact sets $K_m \subset \reals$.}
\end{assumption}
\V{Assumption \ref{assmp::mixing} requires that
the correlations between the residuals die off
at a polynomial rate as the observations get further apart in time.
Some assumption of this form is used in most time
series analysis. This assumption appears to be met in the data analysis in \cref{sec::data}; see Appendix \cref{fig::ACF}.} 
\V{Assumption \ref{assmp::bounded} further requires that certain moments are bounded, which is automatically satisfied if the random variables are bounded, which they are in \cref{sec::data}.}

Next, we need an identifiability assumption. In \cref{thm::identifiability}, 
we made a weak one in assumption (a), but we need a stronger one for the consistency and asymptotic normality of the MLE.

\begin{assumption} \label{assmp::eigenvalue}
    There exists $t \in \{1, \dots, T_0\}$ such that $\pi_t > 0$, and $\pi_t = g_t = 0$ for any $t \geq \tau$ for some $\tau < T$. Furthermore, 
    there exists $\mathfrak{F}(\overline{A}_T)$-measurable random variables $\lambda_{\min,A}$ and $A_{\max}$ such that
    $0 < \lambda_{\min,A} \cdot \norm{\beta}_2^2 \leq \frac{1}{T-t} \norm{\beta^\top (A_{t+1}, \dots, A_{T})}_2^2$ and $\norm{A_t}_2 \leq A_{\max}$ for any $0 < t < T$ and $\beta$, almost surely.
\end{assumption}
\V{This assumption holds with
probability one if $A$ has a continuous distribution. This is like the usual
assumption that the design matrix in regression is non-singular.}

Next, we define a parameter space $\Theta$ as follows.

\begin{definition} \label{def::Theta}
For given constants $I_{\min}$, $I_{\max}$, $D_{\max}^{(1)}$, $D_{\max}^{(2)} > 0$, $r_{\min}$ and $r_{\max}$, let $\Theta$ be a $\mathfrak{F}(A)$-measurable random set of $\theta \equiv (\beta, \mu, \nu)$ %which is measurable under the $\sigma$-algebra generated by $\overline{Y}_T$
such that
\begin{enumerate}
    \item $0 < I_{\min} \leq \Exp_\theta[I_t|\overline A_t] \leq I_{\max} < \infty$ for all $t \geq 1$,
    \medskip
    \item $\norm{\nabla_{(\beta,\mu)} \Exp_\theta[I_t|\overline A_t]}_\infty \leq D_{\max}^{(1)}$,
    $\norm{\nabla_{(\beta,\mu)}^2 \Exp_\theta[I_t|\overline A_t]}_\infty \leq D_{\max}^{(2)}$
    for any $t \geq 1$, and
    \medskip
    \item $0 < r_{\min} \leq \frac{\sum_{t=1}^T \Var_\theta[Y_t |\overline A_t]}{\sum_{t=1}^T \Exp_\theta[Y_t |\overline A_t]} \leq r_{\max}$ for all $T \geq 1$,
\end{enumerate}
almost surely.
\end{definition}
% \begin{assumption} \label{assmp::parameter_space}
% \V{We assume that $\theta^* \in \Theta$ almost surely.}
% \end{assumption}
%
\V{Parameters in $\Theta$ satisfy that $I_t$ and its derivatives have
bounded moments given $\overline{A}_t$. It automatically holds if $I_t$ is bounded, which is the case 
in the data analysis in \cref{sec::data}.
}

\V{Let $\ell(\theta) \equiv \sum_{t=1}^T \ell_t(\theta)$ denotes the log-likelihood function, where $\ell_t(\theta) \equiv \log p_{Y_t}(Y_t | \overline{I}_{t}, \overline{Y}_{t-1}, \overline{A}_t, \theta)$ ($=\log p_{Y_t}(Y_t | \overline{A}_t, \theta)$ since the $I_t$ are assumed to
be deterministic). Let $\hat \theta \equiv {\arg\max}_{\theta \in \Theta} \ell(\theta)$.} 
Also define $\ell^*(\theta) \equiv \Exp[ \ell(\theta)|\overline A_{T}]$. Because our model in \cref{eq::model00} is nonlinear, the convexity of $\ell^*(\theta)$ and hence the uniqueness of the global maximum are not guaranteed. We assume that the global maximum is identifiably unique. The assumption is common in the consistency proof of MLEs \citep{potscher1997dynamic,van2000asymptotic}.

\begin{assumption} \label{assmp::identified_minimizer}
    %    For all $T \in \nats$, $\overline \ell^*_{T}$ has an identifiably unique maximizer $\theta^*_T$ in $\Theta$. 
    $\ell^*$ has an identifiably unique maximizer $\theta^*$ in $\Theta$, almost surely.l 
    That is, $\theta^*$ is in the interior of $\Theta$, and for every $\epsilon > 0$,
    \begin{equation*}
        % \liminf_{n \rightarrow \infty} \left[ \inf_{\theta \in \Theta: \norm{\theta - \theta^*_T}_2 \geq \epsilon} \overline \ell^*_{T}(\theta) - \overline \ell^*_{T}(\theta^*_T) \right] > 0.
        \liminf_{n \rightarrow \infty} \left[ \inf_{\theta \in \Theta: \norm{\theta - \theta^*}_2 \geq \epsilon} \ell^*(\theta) - \ell^*(\theta^*) \right] > 0.
    \end{equation*}
    We further assume that
    \begin{equation*}
    \begin{aligned}
        %        & \liminf_{T \rightarrow \infty} \lambda_{\min}\left(- \frac{1}{T} \nabla_\theta^2 \overline \ell^*_{T}(\theta^*_T) \right) > 0, \\
        %        & \liminf_{T \rightarrow \infty} \lambda_{\min}\left( \frac{1}{T} \cdot \Var\left[\nabla_\theta \overline \ell_{T}(\theta^*_T)\right] \right) > 0,
        & \liminf_{T \rightarrow \infty} \lambda_{\min}\left(- \frac{1}{T} \nabla_\theta^2 \ell^*(\theta^*) \right) > 0, \\
       & \liminf_{T \rightarrow \infty} \lambda_{\min}\left( \frac{1}{T} \cdot \Var\left[\nabla_\theta \ell(\theta^*) | \overline{A}_T \right] \right) > 0,
    \end{aligned}
    \end{equation*}
    almost surely.
    % and there exists $\delta > 0$ such that 
    % $\{\theta: \norm{\theta-\theta^*}_2 < \delta\} \subset \Theta$.
\end{assumption}
\V{Note that $\theta^*$ is defined conditional on $\overline{A}_T$, so $\theta^*$ is a $\mathfrak{F}(\overline{A}_T)$-measurable random variable. If model correctly specifes the data generation, $\theta^*$ is essentially constant at the \emph{true} value. 
}

\V{We now have all the elements to proceed with the normality and consistency of the MLE of $\theta = (\beta, \mu, \nu)$.}

\begin{theorem} \label{thm::asymp_norm}
    Under Assumptions~\ref{assmp::mixing}, \ref{assmp::bounded}, \ref{assmp::eigenvalue} and \ref{assmp::identified_minimizer}, 
    % \begin{equation*}
    %     \sqrt{T} (\hat\theta_T - \theta^*_T) = \Upsilon_T^{1/2} Z_n + o_p(1)
    % \end{equation*}
    % with $Z_n \overset{d}{\rightarrow} \distNorm(0, id)$, 
    \begin{equation*}
       % T^{1/2} \Upsilon_T^{-1/2} (\hat\theta_T - \theta^*_T) \overset{d}{\rightarrow} \distNorm(0, id)
        T^{1/2} \Upsilon^{-1/2} (\hat\theta - \theta^*) \overset{d}{\rightarrow} \distNorm(0, id),
    \end{equation*}
   % where $H_T(\theta)$ is the Hessian matrix of $\overline \ell_{T}(\theta)$,
   % $s_T(\theta) \equiv \sum_{t=1}^T \nabla_\theta \log p(Y_t;\theta)$ is the score function and
   % $\Upsilon_T \equiv (\E[H_T(\theta_T^*)])^{-1} \E[s_T(\theta_T^*) s_T(\theta_T^*)^\top] (\E[H_T(\theta_T^*)])^{-1}$.
    where $\Upsilon \equiv (\E[H(\theta^*)|\overline{Y}_T])^{-1} \E[s(\theta^*) s(\theta^*)^\top|\overline{Y}_T] (\E[H(\theta^*)|\overline{Y}_T])^{-1}$,
    $s(\theta) \equiv \nabla_\theta \ell(\theta)$ 
    is the score function and
    $H(\theta)$ is the Hessian matrix of $\ell(\theta)$.
    % $s(\theta) \equiv \sum_{t=1}^T \nabla_\theta \log p(Y_t;\theta)$ 
\end{theorem}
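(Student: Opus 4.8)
The plan is to verify the hypotheses of Theorems~7.1 (consistency) and~11.2 (asymptotic normality) of \citet{potscher1997dynamic} for the quasi-MLE $\hat\theta$, working conditionally on $\overline A_T$ throughout, so that the ``random constants'' in \cref{def::Theta} and in \cref{assmp::mixing,assmp::bounded,assmp::eigenvalue,assmp::identified_minimizer} are fixed. (We read the conditioning $\E[\cdot\mid\overline Y_T]$ in the definition of $\Upsilon$ as $\E[\cdot\mid\overline A_T]$, consistently with the assumptions.) The argument splits into three parts: (i) reduce $\Theta$ to a compact set and record that $\theta^*$ is interior; (ii) establish consistency $\hat\theta\overset{p}{\rightarrow}\theta^*$; (iii) mean-value expand the score at $\theta^*$, apply a uniform law of large numbers (ULLN) to the Hessian and a central limit theorem (CLT) to the score, and assemble the sandwich. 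For step (i): by part~1 of \cref{def::Theta} the mean infection $\E_\theta[I_t\mid\overline A_t]$ is bounded above and away from $0$; since $\E_\theta[I_t\mid\overline A_t]=R(\overline A_t,\beta)\sum_{s<t}\E_\theta[I_s\mid\overline A_s]\,g_{t-s}$ with $R$ the logistic link of \cref{eq::Rt.model0} and $A$ bounded and non-degenerate by \cref{assmp::eigenvalue}, these bounds confine $\beta$ to a bounded set and, through the seeding $I_t=e^{\mu}$, confine $\mu$ to a bounded set, while part~3 of \cref{def::Theta} confines $\nu$; together with the closed constraints of \cref{def::Theta} this makes $\Theta$ compact, and interiority of $\theta^*$ is part of \cref{assmp::identified_minimizer}.

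For step (ii) we invoke \citet{potscher1997dynamic} Theorem~7.1, which requires (a) identifiable uniqueness of $\theta^*$ as the maximizer of $\ell^*$ --- exactly \cref{assmp::identified_minimizer} --- and (b) a ULLN, $\sup_{\theta\in\Theta}|T^{-1}\ell(\theta)-T^{-1}\ell^*(\theta)|\overset{p}{\rightarrow}0$. For (b), conditionally on $\overline A_T$ the regression mean $m_t(\theta)=\alpha\sum_{s<t}I_s(\theta)\pi_{t-s}$ and its $\theta$-derivatives up to second order are deterministic and, by \cref{def::Theta}, bounded and bounded away from $0$; hence $\ell_t(\theta)$ --- for the Gaussian model a quadratic in $Y_t$ with bounded coefficients, for the NB model a smooth function of $m_t(\theta)$ and $Y_t$ whose log-gamma terms are evaluated at arguments bounded away from $0$ --- is Lipschitz in $\theta$ over $\Theta$ with a Lipschitz constant dominated by $c(1+Y_t+Y_t^2)$. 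By \cref{assmp::bounded} this dominating variable has uniformly bounded $(2\gamma)$-th conditional moments (since $\E[Y_t^{4\gamma}\mid\overline A_t]$ is bounded) and satisfies the tightness condition there, and it is $L_2$-near-epoch dependent on the $\alpha$-mixing baseline of \cref{assmp::mixing} because it is a locally Lipschitz function of $Y_t$ with $L_2$-dominated local Lipschitz constants; the NED/ULLN machinery of \citet{potscher1997dynamic}, Chapter~6, then delivers the ULLN, and the same argument applies to $T^{-1}\nabla_\theta^2\ell(\theta)$, which we use below.

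For step (iii): by consistency and interiority the first-order condition $\nabla_\theta\ell(\hat\theta)=0$ holds eventually, so a mean-value expansion gives $0=T^{-1/2}\nabla_\theta\ell(\theta^*)+\bigl(T^{-1}\nabla_\theta^2\ell(\bar\theta)\bigr)T^{1/2}(\hat\theta-\theta^*)$ with $\bar\theta$ on the segment from $\hat\theta$ to $\theta^*$. The Hessian ULLN from step (ii) together with $\bar\theta\overset{p}{\rightarrow}\theta^*$ gives $T^{-1}\nabla_\theta^2\ell(\bar\theta)-T^{-1}\E[\nabla_\theta^2\ell(\theta^*)\mid\overline A_T]\overset{p}{\rightarrow}0$, and the first eigenvalue condition in \cref{assmp::identified_minimizer} makes the limit invertible for large $T$; write $\mathcal H_T\equiv-T^{-1}\E[\nabla_\theta^2\ell(\theta^*)\mid\overline A_T]$. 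For the score, $T^{-1/2}\nabla_\theta\ell(\theta^*)=T^{-1/2}\sum_t\nabla_\theta\ell_t(\theta^*)$ obeys the NED-CLT underlying \citet{potscher1997dynamic} Theorem~11.2: the summands are NED on the $\alpha$-mixing baseline of the same size as $Y_t$ (for the Gaussian model $\nabla_\theta\ell_t(\theta^*)$ is affine in $Y_t$; for the NB model it is a bounded-derivative function of $Y_t$), they have uniformly bounded $(2\gamma)$-th conditional moments (the $\nu$-component is quadratic in $Y_t$, which is precisely why \cref{assmp::bounded} asks for $4\gamma$ moments of $Y_t$), the mixing and NED sizes of \cref{assmp::mixing} are exactly those required for this moment exponent, and the asymptotic variance is non-degenerate by the second eigenvalue condition in \cref{assmp::identified_minimizer}. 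Writing $\Sigma_T\equiv T^{-1}\Var[\nabla_\theta\ell(\theta^*)\mid\overline A_T]$, the CLT gives $\Sigma_T^{-1/2}T^{-1/2}\nabla_\theta\ell(\theta^*)\overset{d}{\rightarrow}\distNorm(0, id)$, hence $T^{1/2}(\hat\theta-\theta^*)=\mathcal H_T^{-1}T^{-1/2}\nabla_\theta\ell(\theta^*)+o_p(1)$, and since $\Upsilon=\mathcal H_T^{-1}\Sigma_T\mathcal H_T^{-1}$ Slutsky's theorem yields $T^{1/2}\Upsilon^{-1/2}(\hat\theta-\theta^*)\overset{d}{\rightarrow}\distNorm(0, id)$.

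The main obstacle is the near-epoch-dependence bookkeeping in steps (ii)--(iii): verifying that $\ell_t$, $\nabla_\theta\ell_t$ and $\nabla_\theta^2\ell_t$ genuinely inherit NED from $Y_t$ with the correct ``size'', and that the moment exponent ($4\gamma$ on $Y_t$) together with the mixing and NED sizes of \cref{assmp::mixing,assmp::bounded} are matched precisely to the requirements of the Pötscher--Prucha ULLN and NED-CLT --- this is where the specific Gaussian/NB likelihood (with $Y_t$ entering quadratically and, for NB, through digamma terms) must be reconciled with their general template. Once $\Theta$ is seen to be compact, $\theta^*$ interior, and these preservation-of-NED and moment-matching facts are in place, the remainder is routine.
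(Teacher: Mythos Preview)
Your proposal is correct and follows essentially the same route as the paper: both invoke \citet{potscher1997dynamic} Theorems~7.1 and~11.2(b), establish compactness of $\Theta$ from \cref{def::Theta} and \cref{assmp::eigenvalue} (the paper packages this as a separate lemma), and verify the moment, equicontinuity and NED conditions on $\ell_t$, $\nabla_\theta\ell_t$, $\nabla_\theta^2\ell_t$ needed for the Pötscher--Prucha ULLN and CLT, with your sketch being somewhat more explicit about the underlying mean-value/Slutsky mechanics and NED inheritance than the paper's assumption-by-assumption checklist. The one step present in the paper that you omit is the final passage from conditional-on-$\overline A_T$ convergence (which holds almost surely) to the unconditional statement via Fubini and dominated convergence; this is routine, but worth including since the theorem is stated unconditionally.
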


The proof of \cref{thm::asymp_norm} is 
in~\cref{app::CLT}.
To construct confidence intervals,
we need to estimate the variance matrix %$\Upsilon_T$.
$\Upsilon$.
A plug-in estimator is 
$$
%\hat\Upsilon_T = 
%[H_T(\hat\theta_T)]^{-1} s_T(\hat\theta_T) s_T(\hat\theta_T)^\top [H_T(\hat\theta_T)]^{-1}.
\hat\Upsilon = 
[H(\hat\theta)]^{-1} s(\hat\theta) s(\hat\theta)^\top [H(\hat\theta)]^{-1}.
$$
\V{However, this estimator does not have full rank.}
% is not robust to additional correlation not captured
% by the model.
%If $\nabla_\theta \ell_t(\hat\theta_T)$
If $\nabla_\theta \ell_t(\hat \theta)$ for
$t=1, \ldots, T$ are mutually independent, 
then
$\sum_{t=1}^T \nabla_\theta \ell_t(\hat\theta) \nabla_\theta \ell_t(\hat\theta)^\top$ is a consistent estimator of $\Exp[s(\theta^*) s(\theta^*)^\top|\overline{Y}_T]$. 
But because there may be
dependence due to
model mis-specification,
we instead use the \V{heteroskedasticity and autocorrelation consistent} (HAC) estimator \citep{newey1987simple}:
\begin{equation*}
\hat{S} \equiv 
\sum_{t, s \in (0,T]} w(\abs{t-s}, T) \cdot 
\nabla_\theta 
\ell_t(\hat\theta) \nabla_\theta \ell_s(\hat\theta)^\top,
\end{equation*}
where $w$ is a weight function satisfying 
$\lim_{T \rightarrow \infty} w(t, T) = 1$ for any fixed $t$. 
In particular, we set $w(t, T) = \max\{ 1 - \abs{t}/\tau, 0 \}$ 
where $\tau = \floor{4(T/100)^{2/9}}$. 
(This is a commonly used default value.)
The resulting estimator,
$$
\hat\Upsilon \equiv 
[H(\hat\theta)]^{-1} \hat{S} ~[H(\hat\theta)]^{-1},
$$ 
is usually
referred to as a {\it sandwich estimator}.

\begin{proposition} \label{thm::sandwich}
 Suppose that $\Exp[\nabla_\theta \ell_t(\theta^*)|\overline{A}_t] = 0$ at each $t$ with probability $1$. Under Assumptions~\ref{assmp::mixing},
 \ref{assmp::bounded}, \ref{assmp::eigenvalue} and \ref{assmp::identified_minimizer}, 
 $\hat\Upsilon$ is a consistent estimate of $\Upsilon$.
\end{proposition}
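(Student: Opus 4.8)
The plan is to establish consistency of $\hat\Upsilon = [H(\hat\theta)]^{-1}\hat S [H(\hat\theta)]^{-1}$ by showing that each of the three factors converges to its population counterpart and then appealing to the continuous mapping theorem. Since $\Upsilon = (\E[H(\theta^*)|\overline Y_T])^{-1}\E[s(\theta^*)s(\theta^*)^\top|\overline Y_T](\E[H(\theta^*)|\overline Y_T])^{-1}$, it suffices to prove (i) $\frac{1}{T}H(\hat\theta) \to \frac{1}{T}\E[H(\theta^*)|\overline A_T]$ in probability, with the limiting matrix bounded away from singularity (which is exactly the first eigenvalue condition in Assumption~\ref{assmp::identified_minimizer}), and (ii) $\frac{1}{T}\hat S \to \frac{1}{T}\E[s(\theta^*)s(\theta^*)^\top|\overline A_T]$ in probability. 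The normalization by $T$ cancels in the sandwich, so the final statement follows. Throughout, everything is conditional on $\overline A_T$, consistent with the conditioning used in \cref{thm::asymp_norm}.

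For step (i), I would combine the consistency $\hat\theta \xrightarrow{p}\theta^*$ (which is part of \cref{thm::asymp_norm}, itself invoking Theorem 7.1 of \citet{potscher1997dynamic}) with a uniform law of large numbers for $\frac1T\nabla_\theta^2\ell(\theta)$ over $\Theta$. The ULLN is available because, under Assumptions~\ref{assmp::mixing} and \ref{assmp::bounded}, the summands $\nabla_\theta^2\ell_t(\theta)$ form a near-epoch-dependent array on an $\alpha$-mixing base with enough moments (Assumption~\ref{assmp::bounded} controls $\E[Y_t^{4\gamma}|\overline A_t]$, and \cref{def::Theta} controls $I_t$ and its first two derivatives), and $\Theta$ is compact with $\theta^*$ in its interior; this is the setting of Theorem 11.2 in \citet{potscher1997dynamic}. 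Hence $\frac1T\nabla_\theta^2\ell(\hat\theta)\to \frac1T\E[\nabla_\theta^2\ell(\theta^*)|\overline A_T]$, and Assumption~\ref{assmp::identified_minimizer} keeps the limit invertible.

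For step (ii), the hypothesis $\E[\nabla_\theta\ell_t(\theta^*)|\overline A_t]=0$ makes the score increments a martingale-difference-type array (so the cross terms in $S\equiv\E[s(\theta^*)s(\theta^*)^\top|\overline A_T]$ that decay in $|t-s|$ are precisely the residual dependence the HAC estimator is built to capture). The argument has two pieces: first, $\hat S$ with the true $\theta^*$ plugged in is consistent for $\frac1T S$ — this is the standard Newey--West argument of \citet{newey1987simple}, which requires exactly a fading-memory condition of the type in Assumption~\ref{assmp::mixing}, the moment bound in Assumption~\ref{assmp::bounded}, and the bandwidth $\tau=\lfloor 4(T/100)^{2/9}\rfloor \to\infty$ with $\tau/T\to 0$; second, replacing $\theta^*$ by $\hat\theta$ changes $\hat S$ negligibly, using $\hat\theta\xrightarrow{p}\theta^*$, a stochastic-equicontinuity/ULLN bound on $\nabla_\theta\ell_t$ and $\nabla_\theta^2\ell_t$ over $\Theta$ (again from Assumptions~\ref{assmp::bounded} and \cref{def::Theta}), and the fact that the weights sum to $O(\tau)=o(T)$. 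Combining (i) and (ii) with the continuous mapping theorem applied to $A\mapsto A^{-1}$ and to the matrix product gives $\hat\Upsilon\xrightarrow{p}\Upsilon$.

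The main obstacle I anticipate is step (ii) — in particular verifying that the HAC consistency result of \citet{newey1987simple}, which is classically stated for strong-mixing (or NED) stationary sequences, transfers to our conditionally-specified, possibly non-stationary array of score increments, and controlling the estimation error from using $\hat\theta$ rather than $\theta^*$ uniformly over the $O(T\tau)$ many terms in the double sum. This needs a careful uniform bound: writing $\hat S - S(\theta^*) = \sum_{t,s} w(|t-s|,T)\big(\nabla_\theta\ell_t(\hat\theta)\nabla_\theta\ell_s(\hat\theta)^\top - \nabla_\theta\ell_t(\theta^*)\nabla_\theta\ell_s(\theta^*)^\top\big)$, a mean-value expansion in $\theta$ together with $\sup_t\sup_{\theta\in\Theta}\|\nabla_\theta^2\ell_t(\theta)\| = O_p(1)$ on average and $\|\hat\theta-\theta^*\|=o_p(1)$ bounds this by $o_p(\tau)\cdot O_p(1)=o_p(T)$ after normalizing, so the term vanishes. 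The eigenvalue lower bounds in Assumption~\ref{assmp::identified_minimizer} are what guarantee that, after dividing numerator and denominator by $T$, the inverses in the sandwich remain well-behaved in the limit.
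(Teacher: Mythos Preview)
Your outline is correct and would work, but the paper takes a shorter route. Rather than decomposing the sandwich and invoking \citet{newey1987simple} for the HAC piece, the paper applies Theorem~13.1(b) of \citet{potscher1997dynamic} directly. That theorem is a packaged HAC-consistency result for M-estimators under near-epoch dependence on an $\alpha$-mixing base, and it already absorbs the ULLN for the Hessian, the Newey--West argument in the non-stationary NED setting, and the plug-in error from using $\hat\theta$ rather than $\theta^*$. This dissolves exactly the obstacle you flag: extending Newey--West to the conditionally-specified, non-stationary array here is precisely what P\"otscher--Prucha's Chapter~13 does, so your route would end up reproducing or citing that result anyway.

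The substantive work in the paper's proof is therefore not the decomposition but the verification of P\"otscher--Prucha's hypotheses~11.5* and~13.1, namely: (i) $\nabla_\theta\ell_t(\theta^*)$ inherits the NED structure of $Y_t$; (ii) $\sup_T\sup_t\E[|\nabla_\theta\ell_t(\theta^*)|^{2\gamma}\mid\overline A_t]<\infty$; and (iii)--(iv) $\sup_T T^{-1}\sum_t\E[\sup_{\theta\in\Theta}|\nabla_\theta\ell_t(\theta)|^2\mid\overline A_t]<\infty$ together with the analogous bound for $\nabla_\theta^2\ell_t$. You gesture at these via Assumption~\ref{assmp::bounded} and Definition~\ref{def::Theta}, but the paper carries them out concretely: it writes out $\partial\ell_t/\partial\nu$, $\partial\ell_t/\partial m_t$, and the second partials explicitly for both the Normal and NB likelihoods, bounds each term by a polynomial in $Y_t$ with coefficients uniformly controlled over $\Theta$ (using $\nu\in[\nu_{\min},\nu_{\max}]$ and the $m_t$ bounds from Lemma~\ref{lemma::compact} and Definition~\ref{def::Theta}), and then invokes the $4\gamma$-moment bound on $Y_t$. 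The extra hypothesis $\E[\nabla_\theta\ell_t(\theta^*)\mid\overline A_t]=0$ is exactly what upgrades P\"otscher--Prucha's Assumption~11.3 to~11.3*, which their Theorem~13.1(b) requires. Finally, the paper closes with a Fubini step to pass from almost-sure conditional consistency given $\overline A_T$ to unconditional consistency, which your proposal does not mention.
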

The proof of Proposition~\ref{thm::sandwich} is 
in~\cref{app::CLT}.
\V{For Normal and NB models, the additional assumption stated in Theorem~\ref{thm::sandwich} implies that $\Exp[Y_t|\overline{A}_t] = \Exp_{\theta^*}[Y_t|\overline{A}_t]$. That is, for our inference based on the HAC variance estimate to work, the mean of $Y_t$ must be correctly specified for each $t$; distributional misspecifications do not affect the validity
of the inference. This assumption is dubious in the data analysis of \cref{sec::data} because the residual diagnostics 
in \cref{fig::diagnostic_ny} show 
some lack of fit at the start and end of the epidemic.} 

\V{It follows that an asymptotic joint
$1-\alpha$ confidence interval 
for $\theta$ is 
\begin{equation*}
\begin{aligned}
    \mathcal{C} 
    & = \hat\theta + \chi_{d,\alpha} \cdot T^{-1/2} \hat\Upsilon^{1/2} \mathcal{B}_d
    \equiv \{ \hat\theta + \chi_{d,\alpha} \cdot T^{-1/2} \hat\Upsilon^{1/2} x: x \in \mathcal{B}_d \} \\
    & = \{\theta: (\theta-\hat\theta)^T \hat\Upsilon^{-1} 
    (\theta - \hat\theta) \leq T^{-1} \chi_{d,\alpha}^2\},
\end{aligned}
\end{equation*}
where $\chi_{d,\alpha}$ is the $\alpha$-quantile of the chi-square distribution with degree of freedom $d$,
$\hat\Upsilon^{1/2}$ is the matrix square-root of $\hat\Upsilon$, and
$\mathcal{B}_d \equiv \{x \in \reals^d: \norm{x}_2 \leq 1\}$ is the $d$-dimensional unit ball.
}
% for $\beta_j$ is $\hat\beta_j \pm z_{\alpha/2}\sqrt{\hat\Upsilon_{jj}}$. \VV{Need to define joint CS here.}
Under model misspecifcation,
the MLE is a consistent estimate of the distribution 
closest to the true distribution in Kullback-Leibler distance.
The sandwich estimator is still valid in this case.

\paragraph{Global Confidence Bands for Scenario Predictions}
\label{sec::predictions}

Having obtained a $(1-\alpha)$ \V{joint} confidence set $\mathcal{C}$ for $\theta$, we can make scenario forecasts as follows.
We fix $A_1,\ldots, A_T$.
Define
$$
l_t = \min_{\theta\in \mathcal{C}} \Exp_\theta[Y_t|\overline A_t]
$$
and
$$
u_t = \max_{\theta\in \mathcal{C}} \Exp_\theta[Y_t|\overline A_t].
$$
It follows that,
under the scenario $(A_1,\ldots, A_T)$,
$$
P(l_t \leq \Exp[Y_t] \leq u_t\ {\rm for\ all\ }t)\geq 1-\alpha.
$$

{\bf Remark.}
{\em
We can also construct prediction bands
for $Y_t$ rather than for the mean.
We find that these tend to be very large.
If one uses Bayesian prediction bands with
strong priors, it is possible to obtain narrow prediction bands
but these are very dependent on the prior and the
distributional assumptions on $Y_t$.}

\section{Shrinkage: Borrowing strength across locations \label{sec::shrinkage}}

Suppose now that we have data
from $N$ different geographic regions
with corresponding parameters
\V{$\theta_{[1]}, \ldots, \theta_{[N]}$}.
We would like to estimate each $\theta_{[j]}$
while taking advantage of any similarity between the parameters.
This is usually done by shrinking the individual estimates towards each other
which reduces the mean squared error
of the estimators (\citealp{james1960estimation}).
One approach to shrinkage is to use a hierarchical model
where we treat the
$\theta_{[j]}$'s as draws from a distribution $g(\theta;\xi)$
depending on hyper-parameters $\xi$.
For example,
the R package \texttt{epidemia}
\citep{epidemia} uses a Normal $(0,\Sigma)$ distribution.
However, this requires specifying $g(\theta;\xi)$,
which is difficult because the $\theta_{[j]}$'s are not observed.
Instead, we use the approach from \cite{armstrong2022robust} which provides 
a shrinkage method that does not require
specifying this distribution.
The method provides model-free confidence intervals for univariate parameters. 
In particular, this method corrects for the bias
that is induced by shrinkage, which is important
for getting correct frequentist coverage.
Because our parameters $\theta_{[1]}, \dots, \theta_{[N]}$ are multivariate, we extend this method to the multivariate case. 

Suppose that $\theta_{[1]}, \dots, \theta_{[N]} \in \reals^d$ are drawn from some unknown distribution with mean $\theta_o$ and variance-covariance matrix $\Phi^{(2)}$ and that,
conditional on $\theta_{[j]}$,
\V{\begin{equation} \label{eq::normality}
\hat{\theta}_{[j]} | \theta_{[j]} \sim \distNorm(\theta_{[j]}, T_{[j]}^{-1} \Upsilon_{[j]}),
\end{equation}
where $T_{[j]}=T_{[j]}(N)$ is the number of observed time points at region $j$, and $\Upsilon_{[j]} \in \reals^{d \times d}$ is the $T_{[j]}$-normalized variance matrix of the estimation error at region $j = 1, \dots, N$.}
This Normality assumption holds, at least approximately,
due to the asymptotic Normality of the MLE.
Define the 
shrinkage estimator
\begin{equation} \label{eq::shrink}
\tilde{\theta}_{[j]}
\equiv \theta_o + W_{[j]}(\hat{\theta}_{[j]} - \theta_o),
\end{equation}
where 
% $(\det(S_i))^{-1/d}$, 
%\LW{\st{Heejong what is this?}}
\V{$$
W_{[j]} \equiv T_{[j]} (T_{[j]} \Upsilon_{[j]}^{-1} + \Phi^{(2)-1})^{-1} \Upsilon_{[j]}^{-1} = 
\Phi^{(2)} (\Phi^{(2)} + T_{[j]}^{-1} \Upsilon_{[j]})^{-1}.
$$}
For now, assume that
$\theta_o$ and $\Phi^{(2)}$ are specified.
Later, $\theta_o$ and $\Phi^{(2)}$ will be replaced
by empirical estimates.
(The estimator in \cref{eq::shrink}
can be motivated by noting that it is the
posterior mean of $\theta_{[j]}$ if
$\theta_{[j]} \sim \distNorm(\theta_o, \Phi^{(2)})$.
However, this distributional assumption
is not used in what follows.)

While shrinkage has the effect of
reducing the variance
of the estimator to $T_{[j]}^{-1} \tilde\Upsilon_{[j]}$ where $\tilde{\Upsilon}_{[j]} \equiv W_{[j]} \Upsilon_{[j]} W_{[j]}$ (note that $W_{[j]} \preceq id$),
it adds bias,
and as a result, standard confidence intervals will
not have correct coverage.
Instead, 
consider a confidence region for $\theta_{[j]}$ of the form
\V{\begin{equation*}
\begin{aligned}
    \mathcal{C}_{[j]} 
    & = \tilde\theta_{[j]} + \chi \cdot T_{[j]}^{-1/2} \tilde\Upsilon_{[j]}^{1/2} \mathcal{B}_d
    \equiv \{ \tilde\theta_{[j]} + \chi \cdot T_{[j]}^{-1/2} \tilde\Upsilon_{[j]}^{1/2} x: x \in \mathcal{B}_d \} \\
    & = \{\theta_{[j]}: (\theta_{[j]}-\tilde\theta_{[j]})^T \tilde\Upsilon_{[j]}^{-1} 
    (\theta_{[j]} - \tilde\theta_{[j]}) \leq T_{[j]}^{-1} \chi^2\}
\end{aligned}
\end{equation*}}
where $\chi>0$,
$\tilde\Upsilon_{[j]}^{1/2}$ is the matrix square-root of $\tilde\Upsilon_{[j]}$, 
$\mathcal{B}_d \equiv \{x \in \reals^d: \norm{x}_2 \leq 1\}$ is the $d$-dimensional unit ball.

%\LW{\st{Heejong: what si the definition of $\tilde\Upsilon$?}}

If the model
$\theta_{[j]} \sim \distNorm(\theta_o, \Phi^{(2)})$
were correct,
we could use
$\chi = \chi_{1-\alpha}$ where $\chi^2_{1-\alpha}$ is the $1-\alpha$ quantile of $\chi^2$ distribution with degree of freedom $d$. However, 
the resulting confidence set does not have $1-\alpha$ coverage
in general.
See \citet{armstrong2022robust} for a detailed discussion in univariate cases. 
We instead use the approach by \citet{armstrong2022robust}.
\V{Conditional on $\theta_{[j]}$, the estimator $\tilde\theta_{[j]}$ has bias $\theta_o + W_{[j]} (\theta_{[j]} - \theta_o) - \theta_{[j]}$ 
and variance $T_{[j]}^{-1} W_{[j]} \Upsilon_{[j]} W_{[j]}^\top$.}
We define the normalized bias $b_{[j]}$ by
\V{\begin{equation*}
\begin{aligned}
    b_{[j]}
    & \equiv T_{[j]}^{1/2} (W_{[j]} \Upsilon_{[j]}^{1/2})^{-1} 
    (\theta_o + W_{[j]} (\theta_{[j]} - \theta_o) - \theta_{[j]}) \\
    & = T_{[j]}^{1/2} \Upsilon_{[j]}^{-1/2} (id - W_{[j]}^{-1}) \epsilon_{[j]}
    = - T_{[j]}^{-1/2} \Upsilon_{[j]}^{1/2} \Phi^{(2)-1} \epsilon_{[j]},
\end{aligned}
\end{equation*}}
where $\epsilon_{[j]} = \theta_{[j]} - \theta_o$. 
The non-coverage $r_{d-1}$ of $\mathcal C_{[j]}$ depends only on the distribution of $\norm{b_{[j]}}_2$:
\begin{equation} \label{eq::r_d-1}
% r(b_i, \chi) = 
r_{d-1}(\norm{b_{[j]}}_2^2, \chi) \equiv \Pr[ \norm{Z_d - b_{[j]}}_2 \geq \chi | \theta_{[j]} ] = 
\Pr[ \chi_{d-1}^2 + (Z - \norm{b_{[j]}}_2)^2 \geq \chi^2],
\end{equation}
due to the symmetry of $Z_d$, where $Z_d$, $\chi_{d-1}^2$ and $Z$ are a $d$-dimensional standard Gaussian random vector, a Chi-squared random variable with degree of freedom $d-1$ and a univariate standard Gaussian random variable. 
\V{Now let
\begin{equation} \label{eq::maximal_noncoverage}
\begin{aligned}
\rho_{m^{(2)}_{[j]},m^{(4)}_{[j]}}(\chi) \equiv 
\sup_{F \in \mathcal{F}} \Exp_F[r_{d-1}(u, \chi)] 
\text{ such that } \Exp_F[u] = m^{(2)}_{[j]} \textand \Exp_F[u^2] = m^{(4)}_{[j]},
\end{aligned}
\end{equation}
where $m^{(2)}_{[j]} = \Exp[\norm{b_{[j]}}_2^2]$, $m^{(4)}_{[j]} = \Exp[\norm{b_{[j]}}_2^4]$ and $\mathcal{F}$ is the collection of all such probability distribution on $[0, \infty)$.} 
That is,
$\rho_{m^{(2)}_{[j]},m^{(4)}_{[j]}}(\chi)$
is the maximum non-coverage of $\mathcal C_j$ at each $\chi$, %bias of the estimator,
subject to moment constraints.
The moments of $\norm{b_{[j]}}_2$ are calculated from the moments of $\theta_{[j]}$: if
\begin{equation*}
    \Exp[(\theta_{[j]} - \theta_o)(\theta_{[j]} - \theta_o)^\top] = \Phi^{(2)} \textand
    \Exp[(\theta_{[j]} - \theta_o)^{\otimes 4}] = \Phi^{(4)},
\end{equation*}
where $\otimes$ is the tensor product, the moments of $\norm{b_{[j]}}_2$ are
\V{\begin{equation} \label{eq::m_from_Upsilon}
\begin{aligned}
    & m^{(2)}_{[j]} = T_{[j]}^{-1} \tr\left( \Upsilon_{[j]}^{1/2} \Phi^{(2)-1} \Upsilon_{[j]}^{1/2} \right) \\
    & m^{(4)}_{[j]} = T_{[j]}^{-2} \sum_{i_1=1}^d \sum_{i_2=1}^d \Psi_{i_1,i_2}, \\
\end{aligned}
\end{equation}}
where $\Psi_{i_1,i_2} \equiv \inner*{\Phi^{(4)}, (\Upsilon_{[j]}^{1/2}\Phi^{(2)-1})_{i_1}^{\otimes 2} \otimes (\Upsilon_{[j]}^{1/2}\Phi^{(2)-1})_{i_2}^{\otimes 2}}$.
As in univariate cases, $\rho_{m^{(2)}_{[j]},m^{(4)}_{[j]}}$ can be estimated by numerically solving \cref{eq::maximal_noncoverage} with discretized support of $\norm{b_{[j]}}_2$; see Appendix B of \citet{armstrong2022robust}.
Finally, we define the critical value 
\begin{equation*}
\tilde\chi_{[j],1-\alpha} \equiv 
\inf\{ \chi: \rho_{m^{(2)}_{[j]},m^{(4)}_{[j]}}(\chi) \leq \alpha\},
\end{equation*}
and define the confidence set
\V{\begin{equation} \label{eq::EBCR}
\tilde {\mathcal C}_{[j]} \equiv \tilde{\theta}_{[j]} + \tilde\chi_{[j],1-\alpha} T_{[j]}^{-1/2} \tilde\Upsilon_{[j]}^{1/2} \mathcal{B}_d.
\end{equation}}
% \commhb{tilde? no tilde?}
Since $\theta_o$, $\Phi^{(2)}$ and $\Phi^{(4)}$ are unknown, 
we estimate them from $\hat\theta_{[1]} \dots \hat\theta_{[N]}$, and let $\hat\theta_o$, $\hat\Phi^{(2)}$ and $\hat\Phi^{(4)}$ be the estimates. Plugging the estimates in \cref{eq::EBCR}, we obtain the proposed empirical Bayesian confidence set. 

\V{
The form of the shrinkage estimator was motivated
by assuming that
$\theta_{[j]} \sim \distNorm(\theta_o, \Phi^{(2)})$
but we show that
$\tilde{\mathcal C}_{[j]}$
has the correct coverage
without that assumption.
Indeed, we do not even need to 
assume that $\theta_{[1, \dots, N]}$
are random.
We instead formulate the coverage conditional on $\theta_{[1,\dots,N]}$ (and $\Upsilon_{[1,\dots,N]}$) 
which implies robustness of the coverage against prior misspecification. 
As in \cite{morris1983parametric},
we say the confidence regions $\mathcal{C}_{[j]}$ has asymptotically $1 - \alpha$ 
average coverage intervals (ACIs) if 
\begin{equation*}
\liminf_{N \rightarrow \infty} \frac{1}{N} 
\sum_{j=1}^N \Pr[\theta_{[j]} \in \mathcal{C}_{[j]} | 
\theta_{[1, \dots, N]}, \Upsilon_{[1, \dots, N]}] \geq 1 - \alpha.
\end{equation*}}
\V{The asymptotic coverage 
relies on the conditional normality (\cref{eq::normality}) of $\hat\theta_{[j]}$.
Instead, we will only assume that $\hat\theta_{[j]}$ 
is asymptotically Normal with an estimated covariance matrix $\hat\Upsilon_{[j]}$ (in our case, $\hat\Upsilon_{[j]}$ in \cref{thm::sandwich}).}

\begin{assumption} \label{assmp::asymp_normality}
$T_{[j]}(N) \rightarrow \infty$ as $N \rightarrow \infty$ so that the estimators are uniformly asymptotically Normal:
given $\mathcal{U}$ be the collection of balls in $\reals^d$, suppose that
\begin{equation*}
\lim_{N \rightarrow \infty} 
\max_{1 \leq j \leq N} \sup_{U \in \mathcal{U}} 
\Bigl|\Pr[\V{T_{[j]}^{1/2}} \hat\Upsilon_{[j]}^{-1/2} 
(\hat\theta_{[j]} - \theta_{[j]}) \in U | \theta_{[1, \dots, N]}, \Upsilon_{[1, \dots, N]}]-
\Pr[Z_d \in U]\Bigr| = 0,
\end{equation*}
where $Z_d$ is the $d$-dimensional standard Normal random variable.
\end{assumption}

\begin{assumption} \label{assmp::var_consistency}
$\hat\Upsilon_{[j]}$ is consistent.
That is, for every $\epsilon >0$,
\begin{equation*}
\lim_{N \rightarrow \infty} \max_{1 \leq j \leq N} 
\Pr\Bigl[\norm{\hat\Upsilon_{[j]} - \Upsilon_{[j]}}_2 \geq 
\epsilon \Bigm| \theta_{[1, \dots, N]}, \Upsilon_{[1, \dots, N]} \Bigr] = 0.
\end{equation*}
\end{assumption}

% We note that Assumptions~\ref{assmp::asymp_normality} and \ref{assmp::var_consistency} are guaranteed in our problem setting (\cref{sec::inference}) due to \cref{thm::asymp_norm,thm::sandwich}. \commhb{Stronger version?}

\V{%Suppose that there exists $\theta_o$ such that $\hat\theta_o \rightarrow \theta_o$ in probability.
Another essential ingredient of the asymptotic coverage is the homoskedasticity of $\theta_{[j]}$ in the second and fourth moments. Because the theoretical argument is given conditional on $\theta_{[1,\dots,N]}$, we formulate and assume the homoskedasticity as follows. 
% We assume a notion of independence between $\{(\theta_{[j]} - \theta_o)^{\otimes 2}, (\theta_{[j]} - \theta_o)^{\otimes 4}\}$ and $\Upsilon_{[j]}$ as follows. %Then, the parameters 
%$\theta_o$, $\Phi^{(2)}$ and $\Phi^{(4)}$ are defined under the following assumption.
}

\begin{assumption} \label{assmp::var_space}
    % $\lambda_{\min,*}, \lambda_{\max,*} \in (0, \infty)$, 
    \V{Suppose that $\sup_{[j]} \lambda_{\max}(\Upsilon_{[j]}) \leq \lambda_{\max,\Upsilon}$ for some $\lambda_{\max,\Upsilon} \in (0, \infty)$ and that $\mathcal{U}$ is the collection of $d \times d$ positive semidefinite matrices $\Upsilon$ satisfying
    $
        \lambda_{\max}(\Upsilon) \leq \lambda_{\max,\Upsilon}.
    $
    % \commhb{But $\mathcal{U}$ should be compact?}
    We assume there exists $\theta_o$ such that $\hat\theta_o \rightarrow \theta_o$ in probability and $(\Phi^{(2)}, \Phi^{(4)})$ such that
    \begin{equation*}
        %\frac{1}{N_\mathcal{X}} \sum_{j: j \in \mathcal{I}_\mathcal{X}} \theta_{[j]} \rightarrow \theta_o, ~
        \frac{1}{N_\mathcal{X}} \sum_{j: j \in \mathcal{I}_\mathcal{X}} (\theta_{[j]} - \theta_o)^{\otimes 2} \rightarrow \Phi^{(2)} \textand
        \frac{1}{N_\mathcal{X}} \sum_{j: j \in \mathcal{I}_\mathcal{X}} (\theta_{[j]} - \theta_o)^{\otimes 4} \rightarrow \Phi^{(4)}
    \end{equation*}
    as $N_\mathcal{X} \rightarrow \infty$ for any Borel set $\mathcal{X} \subseteq \mathcal{U}$, where $\mathcal{I}_\mathcal{X} \equiv \{j: \Upsilon_{[j]} \in \mathcal{X}\}$, $N_\mathcal{X} \equiv \abs{\mathcal{I}_\mathcal{X}}$.
    In addition, we assume there exists $\lambda_{\min,\Phi}, \lambda_{\max,\Phi} \in (0, \infty)$ such that
    \begin{equation*}
    \begin{aligned}
        & % \lambda_{\min,*} 
        \lambda_{\min,\Phi} \leq \lambda_{\min}(\Phi^{(2)}) \leq \lambda_{\max}(\Phi^{(2)}) \leq \lambda_{\max,\Phi}, \\
        & % \lambda_{\min,*} 
        \lambda_{\min,\Phi} \leq \lambda_{\min}(\Phi^{(4)}) \leq \lambda_{\max}(\Phi^{(4)}) \leq \lambda_{\max,\Phi}.
    \end{aligned}
    \end{equation*}
    For brevity, let $\lambda_{\max,*} \equiv \max\{\lambda_{\max,\Upsilon}, \lambda_{\max,\Phi}, \lambda_{\min,\Phi}^{-1}$\}.}
\end{assumption}

\begin{assumption} \label{assmp::moment_consistency}
Finally, we assume the estimates $\hat\Phi^{(2)}$ and $\hat\Phi^{(4)}$ are consistent. 
    \begin{equation*}
        % \hat\theta_o \stackrel{P}{\to} \theta_o, ~
        \hat\Phi^{(2)} \stackrel{P}{\to} \Phi^{(2)} \textand
        \hat\Phi^{(4)} \stackrel{P}{\to} \Phi^{(4)}
        ~\text{as}~ N \rightarrow \infty.
    \end{equation*}
\end{assumption}

% We formulate the coverage property of $\tilde{\mathcal C}_j$ when $\hat\theta_j$ is asymptotically Normal conditional on $\theta_{[1, \dots, N]} \equiv (\theta_1, \dots, \theta_N)$ and $\Upsilon_{[1, \dots, N]} \equiv (\Upsilon_1, \dots, \Upsilon_N)$, and $\hat\Upsilon_j$ are consistent estimators of $\Upsilon_j$. 
% To express this mathematically, we formally write $T=T(N)$ and we assume that $T(N)\to\infty$ as $N\to\infty$.

% \begin{proposition} \label{thm::moment_consistency}
%     % Suppose that $N_\mathcal{X} \rightarrow \infty$ for any $\mathcal{X} \subset \mathcal{U}$. 
%     Under Assumptions \ref{assmp::asymp_normality} to \ref{assmp::var_space},
%     \begin{equation*}
%         % \hat\theta_o \stackrel{P}{\to} \theta_o, ~
%         \hat\Phi^{(2)} \stackrel{P}{\to} \Phi^{(2)} \textand
%         \hat\Phi^{(4)} \stackrel{P}{\to} \Phi^{(4)}
%     \end{equation*}
%     as $N \rightarrow \infty$.
% \end{proposition}

% \begin{proof}
%     Here we prove only $\hat\Phi^{(2)} \overset{P}{\rightarrow} \Phi^{(2)}$. The proof easily extends to the fourth moment. For any $\epsilon > 0$, consider the collection of Borel subsets $\mathcal{X} \subset \mathcal{U}$ satisfying 
%     \begin{equation*}
%         \Upsilon_1, \Upsilon_2 \in \mathcal{X} \implies \frac{\det(\Upsilon_1)}{\det(\Upsilon_2)} \leq 1+\epsilon.
%     \end{equation*}
%     Since 
% \end{proof}

% \commhb{Assumption? Theorem? It seems pretty easy to prove}

\V{\cref{thm::average_coverage}
shows that the average miscoverage
of the confidence sets
is $\alpha$, asymptotically. The proof is in~\cref{app::shrinkage}.}

% \VV{Heejong: what is ${\cal X}$ in the next theorem? And what is $\mathcal{I}_\mathcal{X}$?
% Also, the result below does not match the definition of coverage above.}

\begin{theorem} \label{thm::average_coverage}
    % Under the condition of \cref{thm::moment_consistency}, 
    Under assumptions~\ref{assmp::asymp_normality}, \ref{assmp::var_consistency}, \ref{assmp::var_space} and \ref{assmp::moment_consistency},
    for any Borel set $\mathcal{X} \subseteq \mathcal{U}$, 
    \begin{equation*}
        \Exp\left[ \left. \frac{1}{N_{\mathcal{X}}} \tsum_{j \in \mathcal{I}_\mathcal{X}} \mathbb{I}\{\theta_{[j]} \notin \tilde{\mathcal{C}}_{[j]} \} \right| \theta_{[1, \dots, N]}, \Upsilon_{[1, \dots, N]} \right] \leq \alpha + o(1),
    \end{equation*}
    where $\mathcal{I}_\mathcal{X} \equiv \{j: \Upsilon_{[j]} \in \mathcal{X}\}$, and $N_\mathcal{X} \equiv \abs{\mathcal{I}_\mathcal{X}}$.
\end{theorem}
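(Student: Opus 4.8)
The plan is to adapt the univariate construction of \citet{armstrong2022robust} to the ellipsoidal sets $\tilde{\mathcal C}_{[j]}$ and to the conditional-on-$(\theta_{[1,\dots,N]},\Upsilon_{[1,\dots,N]})$ formulation. The backbone is an exact identity: when $\theta_o,\Phi^{(2)},\Phi^{(4)},\Upsilon_{[j]}$ are known and $\hat\theta_{[j]}\mid\theta_{[j]}\sim\distNorm(\theta_{[j]},T_{[j]}^{-1}\Upsilon_{[j]})$ exactly, the decomposition $\tilde\theta_{[j]}-\theta_{[j]}=(id-W_{[j]})(\theta_o-\theta_{[j]})+W_{[j]}(\hat\theta_{[j]}-\theta_{[j]})$, after normalizing the quadratic form defining $\mathcal C_{[j]}$ by the square root $W_{[j]}\Upsilon_{[j]}^{1/2}$ of $\tilde\Upsilon_{[j]}$, shows that $\{\theta_{[j]}\notin\mathcal C_{[j]}\}$ equals $\{\norm{Z_d+b_{[j]}}_2>\chi\}$ with $Z_d=T_{[j]}^{1/2}\Upsilon_{[j]}^{-1/2}(\hat\theta_{[j]}-\theta_{[j]})$ and $b_{[j]}$ the normalized bias. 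Hence, conditional on $\theta_{[j]}$, the miscoverage of $\mathcal C_{[j]}$ at radius $\chi$ is $r_{d-1}(\norm{b_{[j]}}_2^2,\chi)$ from \cref{eq::r_d-1}, while by construction $\rho_{m^{(2)}_{[j]},m^{(4)}_{[j]}}(\tilde\chi_{[j],1-\alpha})\le\alpha$ with $(m^{(2)}_{[j]},m^{(4)}_{[j]})$ given by \cref{eq::m_from_Upsilon}.

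First I would reduce to this oracle situation. Since the sets actually used plug in $\hat\theta_o,\hat\Phi^{(2)},\hat\Phi^{(4)}$ and $\hat\Upsilon_{[j]}$, I would use $\hat\theta_o\to\theta_o$ (Assumption~\ref{assmp::var_space}), Assumptions~\ref{assmp::var_consistency} and~\ref{assmp::moment_consistency}, and the uniform spectral bounds of Assumption~\ref{assmp::var_space} to argue that the maps $\Upsilon\mapsto W$, $\mapsto\tilde\Upsilon$, $\mapsto(m^{(2)},m^{(4)})$, $\mapsto\rho$, and $\mapsto\tilde\chi_{1-\alpha}$ are uniformly continuous on the relevant compact range; on an event of probability tending to $1$ the random set $\tilde{\mathcal C}_{[j]}$ is then sandwiched between oracle ellipsoids with radii within $o(1)$ of $\tilde\chi_{[j],1-\alpha}$, uniformly in $j$, while the complementary event contributes $o(1)$ since the integrand is bounded by $1$. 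Second, I would invoke Assumption~\ref{assmp::asymp_normality} to replace the law of $T_{[j]}^{1/2}\hat\Upsilon_{[j]}^{-1/2}(\hat\theta_{[j]}-\theta_{[j]})$ by $Z_d$, uniformly over $j$ and over balls --- the relevant events are exactly the balls $-b_{[j]}+\chi\mathcal B_d$ --- obtaining $\Pr[\theta_{[j]}\notin\tilde{\mathcal C}_{[j]}\mid\theta_{[1,\dots,N]},\Upsilon_{[1,\dots,N]}]=r_{d-1}(\norm{b_{[j]}}_2^2,\tilde\chi_{[j],1-\alpha})+o(1)$ uniformly in $j$.

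The core of the argument is a partitioning and moment-matching step. Fix a Borel $\mathcal X\subseteq\mathcal U$. Since $\mathcal U$ is compact (bounded by $\lambda_{\max,\Upsilon}$ in spectral norm), for $\delta>0$ I would cover it by finitely many cells of diameter $<\delta$ and intersect with $\mathcal X$, obtaining a finite Borel partition $\mathcal X=\bigsqcup_k\mathcal X_k$ on which $\Upsilon_{[j]}$ is within $\delta$ of a fixed $\Upsilon_k$. Applying Assumption~\ref{assmp::var_space} to each (Borel) $\mathcal X_k$ gives $N_{\mathcal X_k}^{-1}\sum_{j\in\mathcal I_{\mathcal X_k}}(\theta_{[j]}-\theta_o)^{\otimes 2}\to\Phi^{(2)}$ and the analogue for the fourth tensor power, hence --- via the trace and contraction formulas \cref{eq::m_from_Upsilon}, letting the mesh $\delta\to0$ --- the empirical averages $N_{\mathcal X_k}^{-1}\sum\norm{b_{[j]}}_2^2$ and $N_{\mathcal X_k}^{-1}\sum\norm{b_{[j]}}_2^4$ converge to a common pair $(m^{(2)}_k,m^{(4)}_k)$, and $\tilde\chi_{[j],1-\alpha}\to\tilde\chi_k$ with $\rho_{m^{(2)}_k,m^{(4)}_k}(\tilde\chi_k)\le\alpha$. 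Treating $\{\norm{b_{[j]}}_2^2:j\in\mathcal I_{\mathcal X_k}\}$ as an empirical distribution on $[0,\infty)$ whose first two moments converge to $(m^{(2)}_k,m^{(4)}_k)$, the definition \cref{eq::maximal_noncoverage} of $\rho$ as the supremum of $\Exp_F[r_{d-1}(u,\chi)]$ over all $F$ with those moments --- together with continuity of $r_{d-1}$ and of $(m^{(2)},m^{(4)})\mapsto\rho_{m^{(2)},m^{(4)}}(\chi)$ --- yields $N_{\mathcal X_k}^{-1}\sum_{j\in\mathcal I_{\mathcal X_k}}r_{d-1}(\norm{b_{[j]}}_2^2,\tilde\chi_{[j],1-\alpha})\le\rho_{m^{(2)}_k,m^{(4)}_k}(\tilde\chi_k)+o(1)\le\alpha+o(1)$. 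Averaging over $k$ with weights $N_{\mathcal X_k}/N_{\mathcal X}$ and combining with the two reductions gives the claim.

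I expect the main obstacle to be making this last step rigorous: reconciling the per-region dependence of both $\Upsilon_{[j]}$ and $T_{[j]}$ with the single moment pair $(m^{(2)}_k,m^{(4)}_k)$ feeding into $\rho$ --- so the partition must be taken fine in the appropriate joint quantities, exploiting that Assumption~\ref{assmp::var_space} holds for \emph{every} Borel $\mathcal X$ --- together with establishing enough uniform continuity of $(m^{(2)},m^{(4)})\mapsto\rho_{m^{(2)},m^{(4)}}(\cdot)$ and of its generalized inverse $\tilde\chi_{1-\alpha}$ near the threshold, where the strict bounds $0<\lambda_{\min,\Phi}\le\lambda_{\max,\Phi}<\infty$ and $\lambda_{\max}(\Upsilon_{[j]})\le\lambda_{\max,\Upsilon}$ of Assumption~\ref{assmp::var_space} are essential, and a careful interchange of the $N\to\infty$ limit with the refinement $\delta\to0$ of the partition.
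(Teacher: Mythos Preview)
Your proposal is correct and follows essentially the same route as the paper: the decomposition of $\tilde\theta_{[j]}-\theta_{[j]}$ into a Gaussian part plus the normalized bias $b_{[j]}$, the replacement of estimated by oracle quantities via Assumptions~\ref{assmp::var_consistency}--\ref{assmp::moment_consistency}, the finite partition of the compact $\mathcal U$ so that $\Upsilon_{[j]}$ is nearly constant on each cell, and the moment-matching step that feeds the empirical distribution of $\norm{b_{[j]}}_2^2$ into the definition of $\rho$ are all exactly what the paper does (with Lemmas~\ref{thm::uniform_continuity_r}--\ref{thm::uniform_continuity_rho} supplying the equicontinuity you anticipate needing). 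The one technical difference is that the paper avoids your appeal to continuity of the generalized inverse $\tilde\chi_{1-\alpha}$ --- which could fail where $\rho$ is flat at level $\alpha$ --- by instead sandwiching $\rho$ between $\delta$-envelopes $\underline\rho^\delta\le\rho\le\bar\rho^\delta$, defining the threshold $\underline\chi_k$ through $\underline\rho^\delta$, and using $\sup(\bar\rho^\delta-\underline\rho^\delta)\to 0$ as $\delta\to 0$; this sidesteps the inverse entirely and is a cleaner way to handle precisely the obstacle you flag in your last paragraph.
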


\V{We note that the average coverage result in \cref{thm::average_coverage} remains valid no matter the particular choice of $\hat\theta_o$, $\hat\Phi^{(2)}$ and $\hat\Phi^{(4)}$ provided that Assumption~\ref{assmp::moment_consistency} holds.} For the subsequent simulation study and data analysis, we use the \emph{unconstrained} estimating scheme of \citet{armstrong2022robust} (see Appendix A.1 therein); namely, we take % Define \V{$\xi_{[j]} \equiv T_{[j]} (\det(\Upsilon_{[j]}))^{-1/d}$}
% and then define
\begin{equation*}
\begin{aligned}
    \hat\theta_o = \frac{\tsum_j \xi_{[j]} \hat\theta_{[j]}}{\tsum_j \xi_{[j]}},
    \ \ \ 
    \hat\Phi^{(2)} = \frac{\tsum_j \xi_{[j]} (\hat\epsilon_{[j]} \hat\epsilon_{[j]}^\top - \Upsilon_{[j]})}{\tsum_j \xi_{[j]}} \ \ \ \textand\ \ \ 
    \hat\Phi^{(4)} = \frac{\tsum_j \xi_{[j]} \hat\Phi^{(4)}_{[j]}}{\tsum_j \xi_{[j]}}, \\
    % \hat\Phi^{(4)} & = \frac{\sum_j \xi_{[j]} (\hat\epsilon_{[j]}^{\otimes 4} - \sum_{k=1}^3 \sum_{l=k+1}^4 \Upsilon_{[j]} \otimes_{k,l} \epsilon_{[j]}^{\otimes 2} + \sum_{k=2}^4 \Upsilon_{[j]} \otimes_{1,k} \Upsilon_{[j]})}{\tsum_j \xi_{[j]}}, \\
\end{aligned}
\end{equation*}
where $\hat\epsilon_{[j]} \equiv \hat\theta_{[j]} - \hat\theta_o$, $\xi_{[j]}$ is some weight, and $\hat\Phi^{(4)}_{[j]}$ is a $d^4$-dimensional order $4$ tensor such that
\begin{equation*}
\begin{aligned}
    \hat\Phi^{(4)}_{[j],i_1,i_2,i_3,i_4} & \equiv 
    \hat\epsilon_{[j],i_1} \hat\epsilon_{[j],i_2} \hat\epsilon_{[j],i_3} \hat\epsilon_{[j],i_4} 
    + \Upsilon_{[j],i_1,i_2} \Upsilon_{[j],i_3,i_4}
    + \Upsilon_{[j],i_1,i_3} \Upsilon_{[j],i_2,i_4}
    + \Upsilon_{[j],i_1,i_4} \Upsilon_{[j],i_2,i_3} \\
    & \quad - \Upsilon_{[j],i_1,i_2} \hat\epsilon_{[j],i_3} \hat\epsilon_{[j],i_4} 
    - \Upsilon_{[j],i_1,i_3} \hat\epsilon_{[j],i_2} \hat\epsilon_{[j],i_4}
    - \Upsilon_{[j],i_1,i_4} \hat\epsilon_{[j],i_2} \hat\epsilon_{[j],i_3} \\
    & \quad - \Upsilon_{[j],i_2,i_3} \hat\epsilon_{[j],i_1} \hat\epsilon_{[j],i_4}
    - \Upsilon_{[j],i_2,i_4} \hat\epsilon_{[j],i_1} \hat\epsilon_{[j],i_3}
    - \Upsilon_{[j],i_3,i_4} \hat\epsilon_{[j],i_1} \hat\epsilon_{[j],i_2},
\end{aligned}
\end{equation*}
where $\hat\epsilon_{[j],i}$ is the $i$-th component of the vector $\hat\epsilon_{[j]}$, and $\Upsilon_{[j],i_1,i_2}$ is the $(i_1,i_2)$ element of the matrix $\Upsilon_{[j]}$.

% \VV{are these $\hat\epsilon$ things defined somewhere? Also, shouldn't there be sums?}

\V{A simple choice of $\xi_{[j]}$ is $1$, motivated by the definition of $\Phi^{(2)}$ and $\Phi^{(4)}$ in Assumption~\ref{assmp::var_space}. However, our preferred choice is $\xi_{[j]} = T_{[j]} \det(\hat\Upsilon_{[j]})^{-1/d}$ for the sake of efficiency as suggested by \citet{armstrong2022robust}; see Appendix A.2 therein.}

\section{Examples \label{sec::examples}}

\begin{figure}[t!]
    \centering
    \subfigure[][]{\label{fig::g} \includegraphics[height=0.3\textwidth]{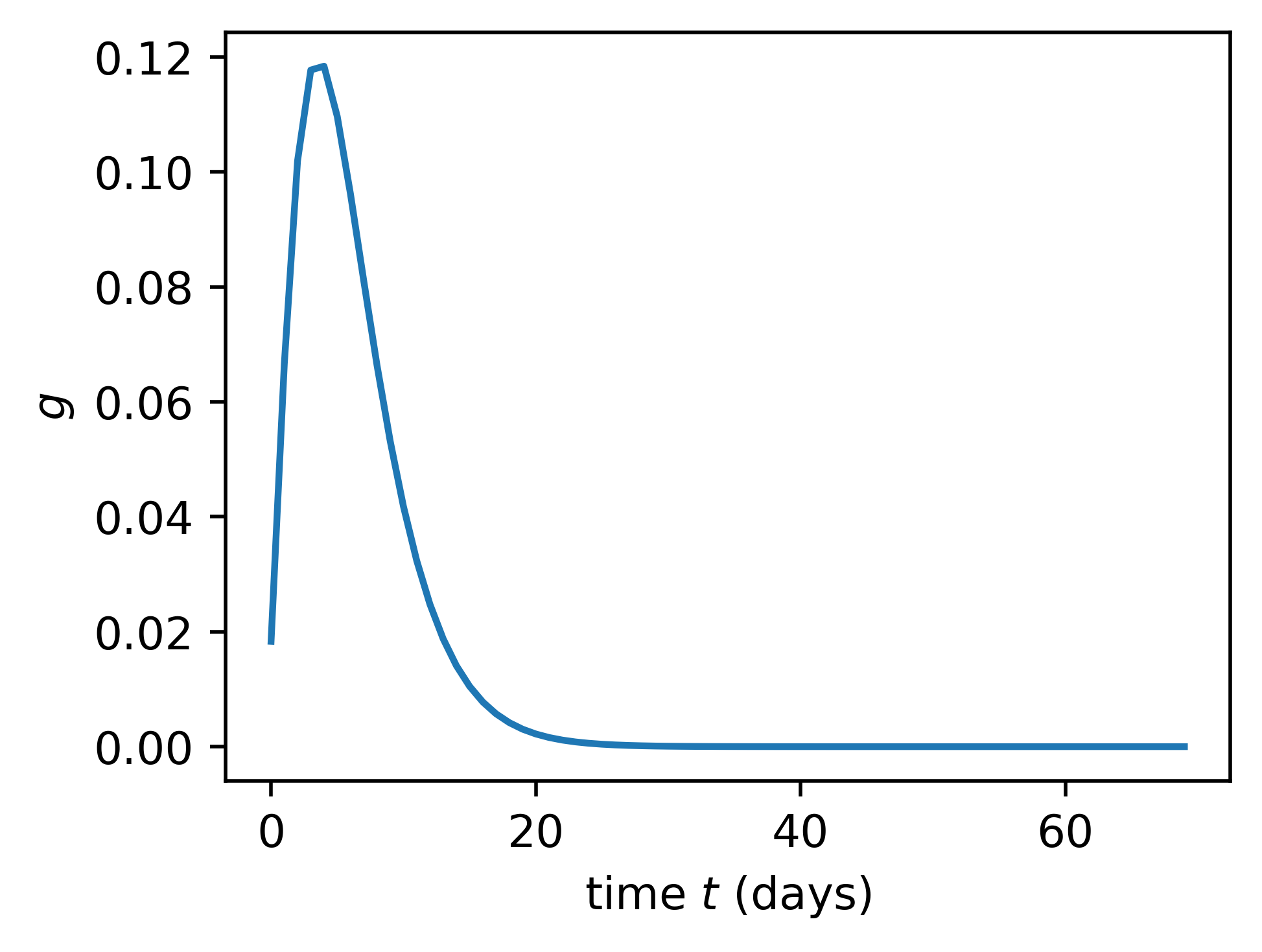}}
    \subfigure[][]{\label{fig::pi} \includegraphics[height=0.3\textwidth]{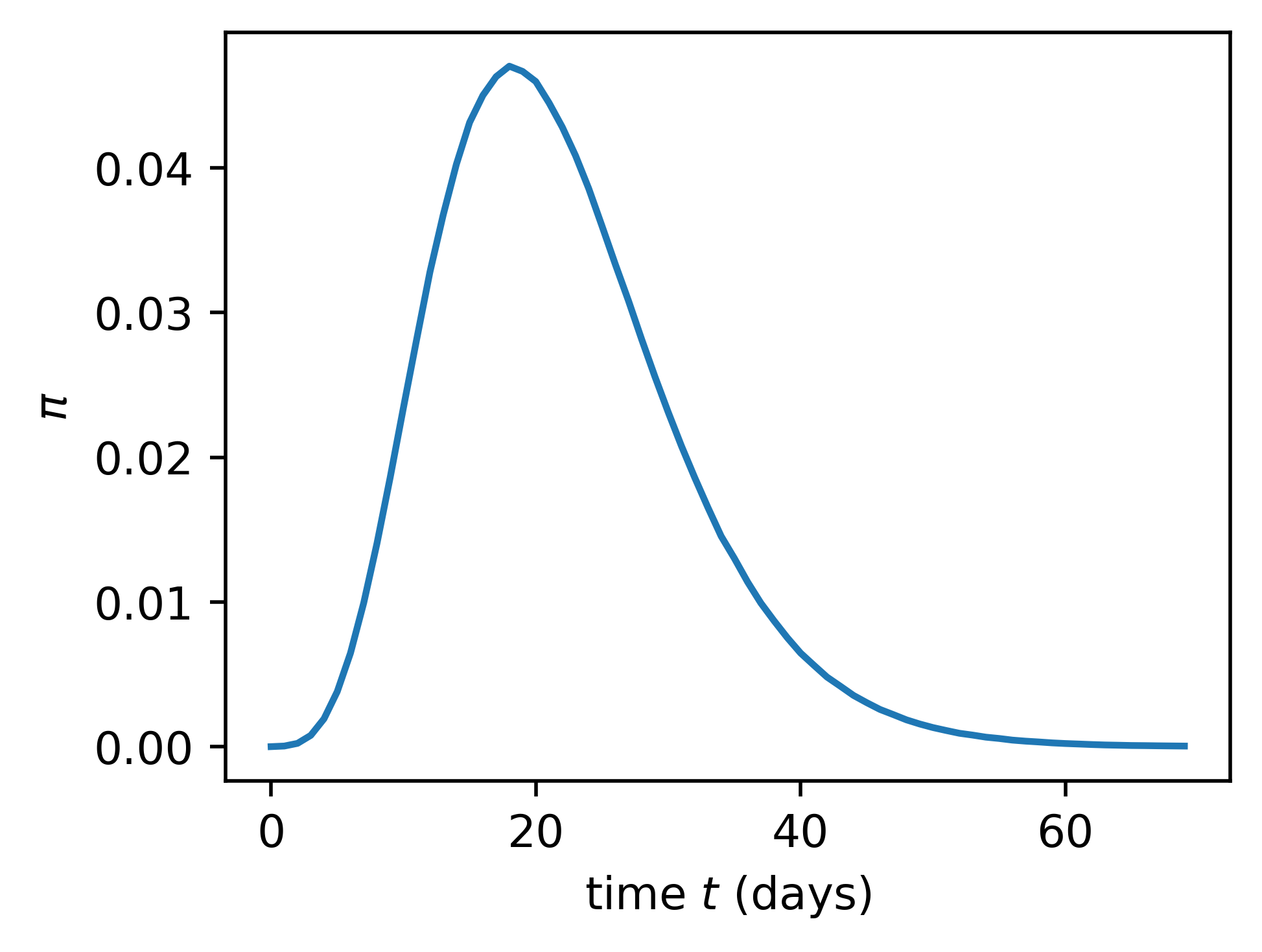}} 
    \caption{\textbf{ Generating and infection-to-death distributions $g$ and $\pi$} from \cite{bhatt2020semi}. % \VV{Heejong: it looks like there are 2 colors; how do we prevent that from happening?}
    }
    \label{fig::gpi}
\end{figure}

In \cref{sec::simulated} we illustrate the proposed methods and check their statistical validity using simulated data.
Then in \cref{sec::data} we analyze the effect of a mobility measure on the Covid-19 death data used in \citet{Bonvini2022covid}.
We assume that the observed data (deaths) are either Poisson or NB distributed, with means specified in \cref{eq::model00}.
We focus on a single intervention, so that $\beta_1$ and $A_t$ in \cref{eq::Rt.model0} are scalars.
% yielding the reproduction number model
%\begin{equation} 
%\label{eq::Rt.model0}
%R_t = R(\overline A_t,\beta_0, \beta_1) = K \left(1 + e^{-(\beta_0 + \beta_1 A_{1t})}\right)^{-1}.
%\end{equation}
Several model parameters are completely unidentifiable and need to be fixed. 
In particular, the maximum possible transmission rate was assumed to be $K=6.5$, which is the largest value found in the literature \citep{liu2020reproductive}.
'The generating distribution $g$ and infection-to-death distribution $\pi$ in \cref{eq::model00} were assumed to be known and equal to those given in \citet{bhatt2020semi}; they are shown in \cref{fig::gpi}.
We also assumed that the ascertainment rate (the probability of death given infection) in \cref{eq::model00} remained constant at $\alpha_t = 0.01$. \V{The constant assumption
is sensible here because the period was short and the number of susceptibles at the beginning of the epidemic was large.} The specific value $\alpha_t=0.01$ does not have to be 
accurate because it has no effect on the model parameters -- any other value would yield the same estimates; it only has a multiplicative effect on the unknown latent infection values, which are not of primary interest here.
Finally, we set $I_{t}=0$ for $t \leq -T_0$ and $I_{t}=e^\mu$ for $t = -T_0+1, \dots, 0$, where $\mu$ is a parameter to be estimated and $T_0=40$, as described in \cref{sec::inference} under `{\em seeding values}'.

\subsection{Analyses of simulated datasets \label{sec::simulated}}

To illustrate that the algorithm in \cref{sec::algorithm} yields sensible ML estimates, we simulated infection and death processes 
$I_t$ and $Y_t$, $t=1, \ldots 120$, from NB distributions \V{(both $I_t$ and $Y_t$ are random variables)} with means 
specified in \cref{eq::model0}
and ``number of successes" parameters $r_I = 100$ for the infection process and $r_Y = 10$ for the death process, $R_t$ in \cref{eq::Rt.model0} with
$(\mu,\beta_0, \beta_1) = (\log(100), 0, -2.2)$, 
and time-varying binary interventions $A_{1t}=0$ for $t<30$ and $A_{1t}=1$ for $t\ge 30$. 
We fit \V{the true mean model} to the data by ML and using the Bayesian approach in \citet{bhatt2020semi}, assuming the correct NB model for the deaths $Y_t$ but assuming that infections $I_t$ were deterministic, \V{since we 
cannot identify both the distributions of $Y_t$ and $I_t$; see \cref{sec::inference}}.
%distributed in \cref{fig::sim_freqepid_nbinom,fig::sim_epidemia_nbinom} or Gaussian in \cref{fig::sim_freqepid_normal,fig::sim_epidemia_normal}.
For the Bayesian approach, we used a shifted gamma prior with shape $1/6$, scale $1$ and shift $\log(1.05)/6$ for $\beta_1$, corresponding to mean $(1+\log(1.05))/6 = 0.17$ and variance $1/6$, and a $N(0,1/4)$ prior for $\beta_0$.
These choices match the model, priors, intervention values and parameter estimates 
\V{in \cite{bhatt2020semi}, where they analyzed the effect of lockdown ($A_t=1$) versus no lockdown ($A_t=0$)
on Covid cases in European countries.}
% \texttt{EuropeCovid} 
Finally, for $\mu$ we used the default prior 
in the \texttt{Epidemia} package:
$e^\mu \sim \mathrm{Exp}(e^{-\mu_o})$ with $e^{\mu_o} \sim \mathrm{Exp}(\lambda)$ and $\lambda = 0.03$, so that $\mu$ has prior mean $\log(1/0.03)=1.52$, which is close to the true value $\log(100)=1.6$.

True parameters and simulated data are shown in orange in 
\cref{fig::sim_nbinom}.
Bayesian and ML estimates for $(\mu,\beta_0, \beta_1)$ and for the mean infection and death processes are overlaid in blue,
together with confidence and credible intervals.
All estimates appear reasonable and the confidence/credible intervals cover the true values, but with some notable differences between estimation methods.
\V{Appendix \cref{fig::theta_epidemia_priors} shows that the Bayesian estimates and credible 
intervals can have poor properties when we use more informative priors.}
Next, we conduct a simulation study to evaluate the coverage of confidence and credible intervals.

\begin{figure}[t!]
    \centering
    \large{\bf Maximum Likelihood Estimation}
    
    \subfigure[][]{\label{fig::theta_freqepid_nbinom} 
    \includegraphics[width=0.66\textwidth]{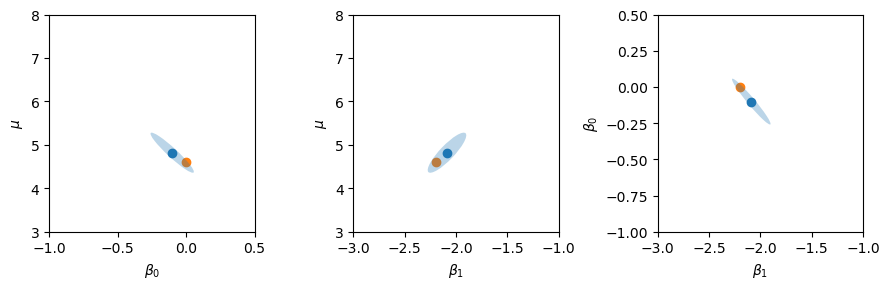}}

    \subfigure[][]{\label{fig::pred_I_freqepid_nbinom}
    \includegraphics[height=0.22\textwidth]{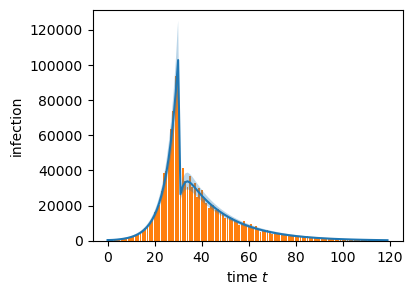}}
    \subfigure[][]{\label{fig::pred_EY_freqepid_nbinom}
    \includegraphics[height=0.22\textwidth]{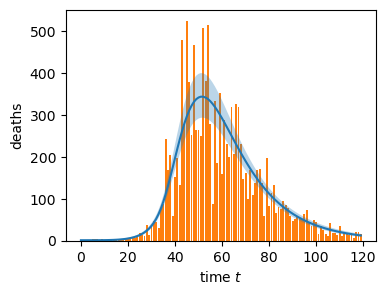}}

    \medskip
    
    \centering
    \large{\bf Bayesian Estimation}
    
    \subfigure[][]{\label{fig::theta_epidemia_nbinom} 
    \includegraphics[width=0.66\textwidth]{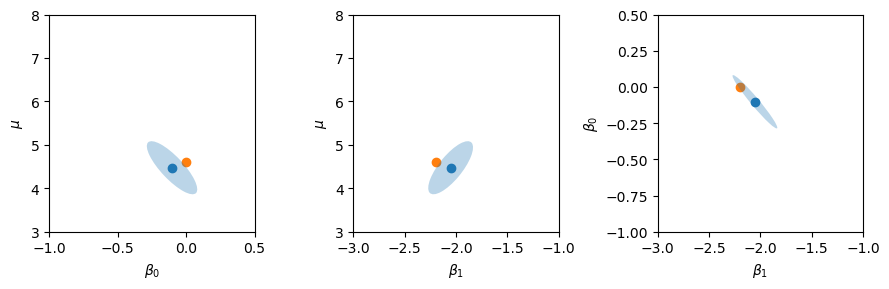}}

    \subfigure[][]{\label{fig::pred_I_epidemia_nbinom}
    \includegraphics[height=0.22\textwidth]{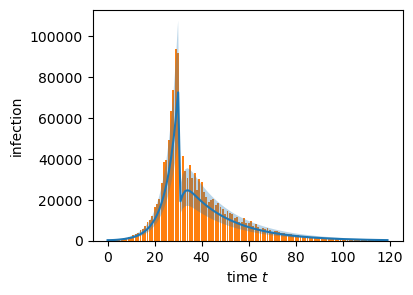}}
    \subfigure[][]{\label{fig::pred_EY_epidemia_nbinom}
    \includegraphics[height=0.22\textwidth]{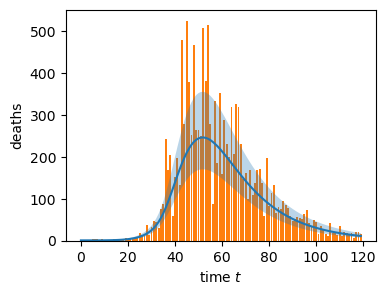}}
    
    \caption{\textbf{ True model parameters, simulated infections and deaths (orange), and overlaid ML and Bayesian estimates (blue), together with 95\% confidence/credible intervals (light blue).}
    The infection process in (b,e) and death process in (c,f) were simulated from NB distributions with means 
    in \cref{eq::model0}, $R_t$ in \cref{eq::Rt.model0}, true parameters $\mu= \log(100)$, $\beta_0=0$ and $\beta_1 = -2.2$ in (a,d), and binary intervention process $M_t=0$ for $t<30$ and $M_t=1$ for $t \ge 30$.
    The model in \cref{eq::model00,eq::Rt.model0} was fitted assuming NB deaths.}
    \label{fig::sim_nbinom}
\end{figure}

\paragraph{Coverage of ML and Bayesian intervals}

\begin{table}
  \caption{\label{tab::result_coverage} \emph{Coverage study results.} Coverages (\%) of ML confidence and Bayesian credible intervals of $\beta_1$ for $5$ different values of $\beta=(\beta_0, \beta_1)$ in \cref{eq::Rt.model0} and seeding parameter
  $\mu=\log(100)$ in all cases. The priors for $\mu$, $\beta_0$ and $\beta_1$ were hierarchical exponential,
  $N(0,1/4)$ and shifted Gamma with mean $0.17$ and variance $1/6$, respectively, as in \cite{bhatt2020semi}.}
  \begin{tabular}{lccccc}
    \toprule
    % & \multicolumn{5}{c}{\textbf{Coverage (\%) at $\beta^* = $}} \\
    % \cmidrule{2-6}
    \textbf{$\beta$} & \textbf{(0, -2.2)} & \textbf{(0.25, -2.45)} & \textbf{(0.5, -2.7)} & \textbf{(0.75, -2.95)} & \textbf{(1, -3.2)} \\
    \midrule
    ML fit & 93.2 & 95.5 & 93.7 & 96.1 & 95.0 \\
    \citet{bhatt2020semi} & 93.3 & 95.2 & 93.6 & 92.3 & 84.1 \\
    \bottomrule
  \end{tabular}
\end{table}

Accurate inference about parameters is important not only to understand, for example, the effect of particular interventions, measured by $\beta_1$, but also to produce accurate future predictions of the time course of epidemics, which is one main objective of epidemic models.

To evaluate the coverage of confidence and credible intervals, we set $(\mu,\beta_0, \beta_1) = (\log(100), 0, -2.2)$, simulated $1000$ NB infection and death processes as above, 
fitted the NB model using both ML and Bayesian paradigms with priors as above, and recorded the proportion of times the confidence and credible intervals contained the true parameter. \cref{tab::result_coverage} contains the resulting empirical 95\% coverages; other coverage values gave qualitatively similar results (not shown). We repeated the simulation for four other parameter values. The ML intervals have correct coverage up to simulation
error in all cases, whereas the coverage of the Bayesian interval degrades when the prior means of $\beta_0$ and $\beta_1$,
0 and 0.17 respectively, further deviate from the true values.
Note that it is not easy to choose a prior without additional information, as illustrated in \cref{fig::theta_0_vs_4}, which shows simulated death processes when $(\mu,\beta_0, \beta_1) = (\log(100), 0, -2.2)$ and $(\mu,\beta_0, \beta_1) = (\log(25), 1, -3.2)$: all simulated data look very similar and we would be hard pressed to pick different priors for them.
It would therefore be wise to select priors that are not too informative.

\paragraph{Model checking}

\begin{figure}[t!]
    \centering
    \begin{minipage}{0.3\textwidth}
        \subfigure[][]{\label{fig::diagnostic_normal_vs_t} 
        \includegraphics[height=0.75\textwidth]{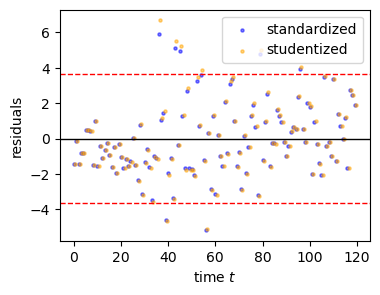}}
    \end{minipage}
    \hspace{-0.01\textwidth}
    \begin{minipage}{0.3\textwidth}
        \subfigure[][]{\label{fig::diagnostic_normal_vs_A} 
        \includegraphics[height=0.75\textwidth]{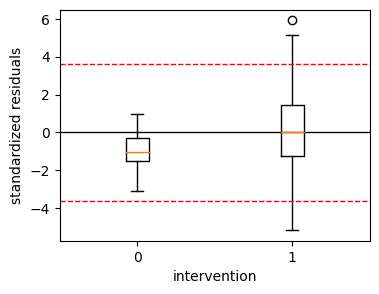}}
    \end{minipage}
    \hspace{-0.01\textwidth}
    \begin{minipage}{0.3\textwidth}
        \subfigure[][]{\label{fig::diagnostic_normal_vs_EYmle} 
        \includegraphics[height=0.75\textwidth]{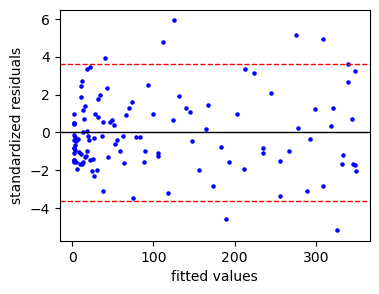}}
    \end{minipage}

    \begin{minipage}{0.3\textwidth}
        \subfigure[][]{\label{fig::diagnostic_normal_qq} 
        \includegraphics[height=0.75\textwidth]{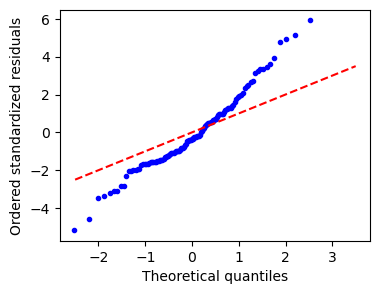}}
    \end{minipage}
    \hspace{-0.01\textwidth}
    \begin{minipage}{0.3\textwidth}
        \subfigure[][]{\label{fig::diagnostic_normal_influence} 
        \includegraphics[height=0.75\textwidth]{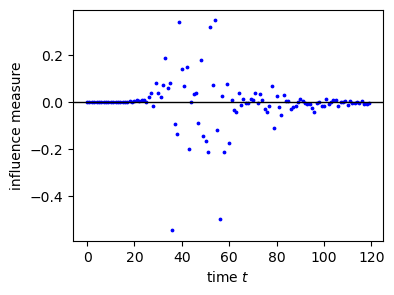}}
    \end{minipage}
    \hspace{-0.01\textwidth}
    \begin{minipage}{0.3\textwidth}
        \subfigure[][]{\label{fig::diagnostic_normal_cook} 
        \includegraphics[height=0.75\textwidth]{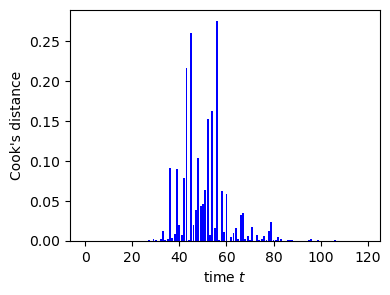}}
    \end{minipage}
    \vspace{-.1in}   
    \caption{\textbf{ Diagnostics for the Gaussian model fitted to the NB data shown in \cref{fig::sim_nbinom}.} (a) Standardized and studentized residuals versus $t$. (b) Standardized residuals versus $A_t$ and (c) versus $\hat Y_t$. (d) QQ plot of standardized residuals. (e) Influence on $\hat \beta_1$ and (f) Cook's distance, versus $t$. \V{Panels (b) and (d) suggest that the Gaussian distribution is not a good choice to model the variance of $Y_t$.}}
    \label{fig::diagnostic_normal}
\end{figure}

Bayesian and ML approaches provide the alternative option of fitting a Gaussian model to the data.
Here we illustrate that the standard diagnostics derived from an ML fit can detect model inadequacies; see \cref{sec::diagnostics}.
\cref{app::figs} \cref{fig::diagnostic_nbinom} shows these diagnostics for the NB model fitted by ML to the simulated NB data shown in \cref{fig::sim_nbinom}. As expected, all diagnostics look fine, since we fit the correct model (except that the model assumes that the latent infection process is deterministic).
\cref{app::figs} 
\cref{fig::sim_freqepid_normal} 
shows the Gaussian model fitted to the same data, and 
\cref{fig::diagnostic_normal} shows the 
corresponding diagnostics, where it is clear particularly from panels (b) and (d) that the Gaussian distribution is not a good choice to model the variance of $Y_t$.
(When model diagnostics fail, as is the case in \cref{fig::diagnostic_normal}, case diagnostics in (e,f) can look pathological even if there are no outliers or influential observations, so it is best to avoid over-interpreting them.)

\paragraph{\V{Empirical Bayes Analysis}}
Suppose now that we have data
from $N = 100 $ different geographic regions
with corresponding parameters $\theta_i = (\mu_i, \beta_{0i}, \beta_{1i})$, $i, \ldots, N$.
We estimate the $\theta_i$'s by maximum likelihood,
and take advantage of similarities between the parameters by shrinking the individual estimates towards each other, as described
in \cref{sec::shrinkage}. We now evaluate by simulation the coverage of the resulting confidence intervals, for the three scenarios \V{depicted in
\cref{fig::beta_ebcr_true}:} 
\begin{enumerate} %[label=(\alph*)]
    \item 
    $(\mu, \beta_0, \beta_1) $ are independent Gaussian random vectors with mean $(\log(300), 0, -2)$ and covariance diagonal $(0.5, 0.3, 0.3)$.
    \item 
    $(\beta_0, \beta_1)$ are Gaussian vectors with mean $(0, -2)$ and covariance 
    $\begin{pmatrix}
        0.3 & -0.28 \\ -0.28 & 0.3
    \end{pmatrix}$; $\mu$ is independent of $(\beta_0, \beta_1)$ and normally distributed with mean $\log(300)$ and variance $0.5$.    
    \item $(\mu, \beta_0, \beta_1)$ have a mixture distribution with two equal Gaussian components with means $(\log(150), 0.5, -2.5)$ and $(\log(600), -0.5, -1.5)$, and covariance matrix $\begin{pmatrix}
        1 & 0 & 0 \\ 
        0 & 0.5 & -0.25 \\
        0 & -0.25 & 0.5
    \end{pmatrix}$
    for both components. 
\end{enumerate}
The settings in (a), (b) and (c) were chosen so that the marginal means and variances of $\mu$, $\beta_0$ and $\beta_1$ would be the same, as well as the covariances of $(\beta_0, \beta_1)$ in (b) and (c).
These choices allowed us to use the same priors in the three scenarios, so coverage differences could not be attributed to difference in priors.
We used the hierarchical model of \citet{bhatt2020semi}:
\begin{equation}
\begin{aligned}
\label{eq::hier}
    e^{\mu_i} & \sim \mathrm{Expon}(e^{-\mu^*}), \\
    \beta_{0i} & \sim N(\beta_0^*, \sigma_0^2), \\
    \beta_{1i} & \sim N(\beta_1^*, \sigma_1^2),
\end{aligned}
\end{equation}
where 
$(\mu^*, \beta_0^*, \beta_1^*)$ and $(\mu_i, \beta_{0i}, \beta_{1i})$ denote the global and regional parameters, respectively,
with priors and hyper-priors
\begin{equation} \label{eq::hyper}
\begin{aligned}
    e^{\mu^*} & \sim \mathrm{Expon}(0.03), &  \\
    \beta_0^* & \sim N(0, 1/4), & \sigma_0 \sim \mathrm{Gamma}(2, 0.25),\\
    \beta_1^* & \sim -\log(1.05)/6 - \mathrm{Gamma}(1/6, 1), & \sigma_1 \sim \mathrm{Gamma}(0.5, 0.25). \\
\end{aligned}
\end{equation}

\cref{fig::beta_ebcr_true} shows $N=100$ simulated $(\beta_0, \beta_1)$ in each scenario,  \cref{fig::beta_ebcr_freqepid} shows their shrunk ML estimates together with 95\% confidence intervals, \cref{fig::beta_ebcr_epidemia} shows the corresponding Bayesian estimates, and
\V{\cref{tab:result_shrinkage}} contains the proportion of intervals that contain the true values of $(\mu, \beta_0, \beta_1)$.
Point and interval estimates are strikingly different across methods, the ML estimates being much closer to the true values.
The coverage of the credible intervals also degrades badly as the true dependence structure of the $(\mu_i, \beta_{0i}, \beta_{1i})$ deviates from their assumed independent prior distributions in \cref{eq::hier,eq::hyper}.
One could presumably obtain better Bayesian coverage with better priors, although they would be hard to design a priori.
On the other hand, 
ML estimation yields the correct coverage in all cases, up to random error.

\begin{figure}[t!]
    \centering
    \begin{minipage}{0.3\textwidth}
        \subfigure[][]{\label{fig::beta_true_ebcr_0} 
        \includegraphics[height=0.75\textwidth]{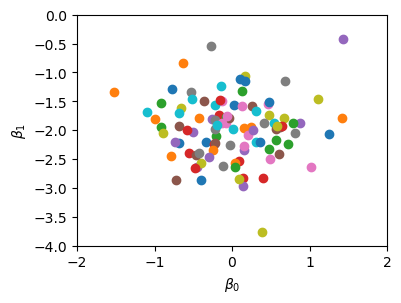}}
    \end{minipage}
    \hspace{-0.01\textwidth}
    \begin{minipage}{0.3\textwidth}
        \subfigure[][]{\label{fig::beta_true_ebcr_1} 
        \includegraphics[height=0.75\textwidth]{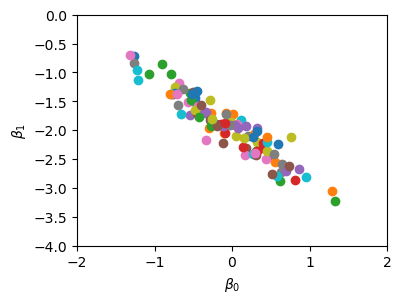}}
    \end{minipage}
    \hspace{-0.01\textwidth}
    \begin{minipage}{0.3\textwidth}
        \subfigure[][]{\label{fig::beta_true_ebcr_2} 
        \includegraphics[height=0.75\textwidth]{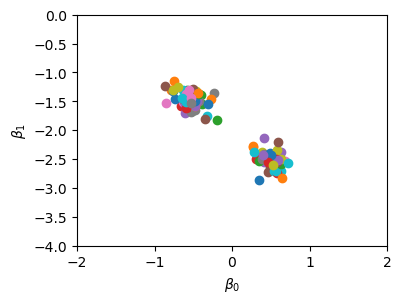}}
    \end{minipage}
    \vspace{-0.1in}
    \caption{\textbf{ Simultaneous estimation at $N=100$ locations.} True $\beta$'s simulated in three scenarios.}
    \label{fig::beta_ebcr_true}
%\end{figure}
\bigskip
%\begin{figure}[t!]
    \centering
    \begin{minipage}{0.3\textwidth}
        \subfigure[][]{\label{fig::beta_freqepid_ebcr_0} 
        \includegraphics[height=0.75\textwidth]{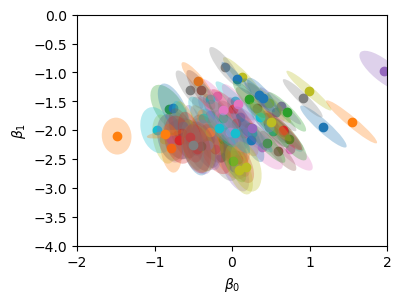}}
    \end{minipage}
    \hspace{-0.01\textwidth}
    \begin{minipage}{0.3\textwidth}
        \subfigure[][]{\label{fig::beta_freqepid_ebcr_1} 
        \includegraphics[height=0.75\textwidth]{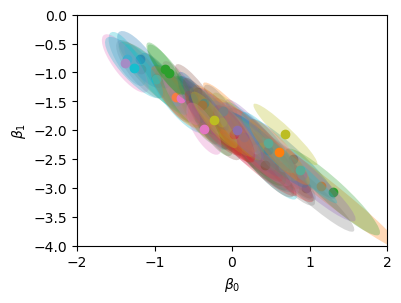}}
    \end{minipage}
    \hspace{-0.01\textwidth}
    \begin{minipage}{0.3\textwidth}
        \subfigure[][]{\label{fig::beta_freqepid_ebcr_2} 
        \includegraphics[height=0.75\textwidth]{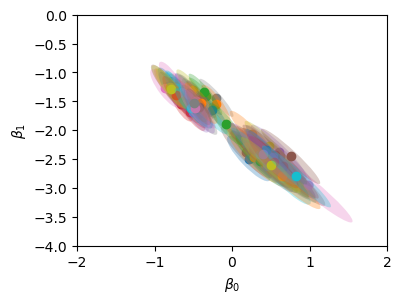}}
    \end{minipage}
    \vspace{-0.1in}
    \caption{Shrunk ML estimates of $\hat{\beta}$ and confidence intervals for the true $\beta$'s in \cref{fig::beta_ebcr_true}.}
    \label{fig::beta_ebcr_freqepid} 
%\end{figure}
\bigskip
%\begin{figure}[t!]
    \centering
    \begin{minipage}{0.3\textwidth}
        \subfigure[][]{\label{fig::beta_epidemia_ebcr_0} 
        \includegraphics[height=0.75\textwidth]{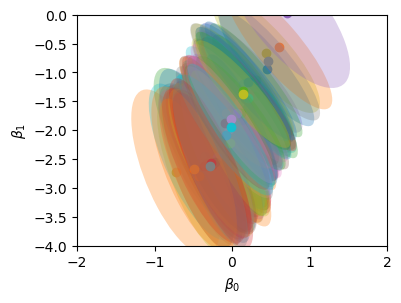}}
    \end{minipage}
    \hspace{-0.01\textwidth}
    \begin{minipage}{0.3\textwidth}
        \subfigure[][]{\label{fig::beta_epidemia_ebcr_1} 
        \includegraphics[height=0.75\textwidth]{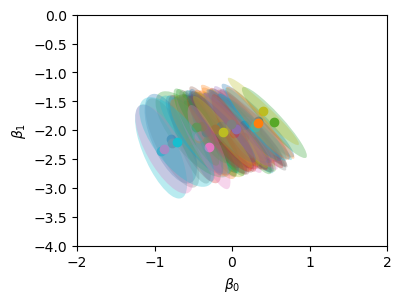}}
    \end{minipage}
    \hspace{-0.01\textwidth}
    \begin{minipage}{0.3\textwidth}
        \subfigure[][]{\label{fig::beta_epidemia_ebcr_2} 
        \includegraphics[height=0.75\textwidth]{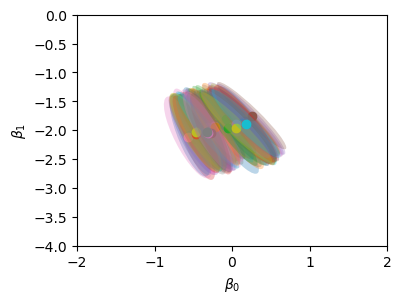}}
    \end{minipage}
    \vspace{-0.1in}
    \caption{Bayesian estimates and credible intervals for the true $\beta$'s in \cref{fig::beta_ebcr_true}, using \citet{bhatt2020semi}.}
    \label{fig::beta_ebcr_epidemia}
\end{figure}

\begin{table}
  \caption{\emph{Shrinkage study result.} Coverages (\%) of ML confidence and Bayesian credible intervals for the three sets of parameters shown in \cref{fig::beta_ebcr_true}. \label{tab:result_shrinkage}}
  \centering
  \begin{tabular}{p{4cm}p{2cm}p{2cm}c}
    \toprule
    % & \multicolumn{5}{c}{\textbf{Coverage (\%) at $\beta^* = $}} \\
    % \cmidrule{2-6}
    \textbf{Scenario} & (a) & (b) & (c) \\
    \midrule
    ML fit & 92 & 98 & 95 \\
    \citet{bhatt2020semi} & 91 & 62 & 34 \\
    \bottomrule
  \end{tabular}
\end{table}

\subsection{Analysis of Covid-19 time series of deaths in the United States\label{sec::data}}

We analyzed the effect of a mobility measure on the Covid-19 death data in US states used in \citet{Bonvini2022covid}. The data are from the Delphi repository at Carnegie Mellon University
\url{delphi.cmu.edu}.
The data consist of daily observations, at the state level, on the
number of Covid-19 deaths $Y_t$ (\cref{fig::pred_EY_freqepid_delphi}) and a measure of mobility ``proportion
of full-time work", $A_t$,
(\cref{fig::iv_delphi}), which is the fraction of mobile devices that spent more than 6 hours at a location other than their home during the daytime (SafeGraph’s %\texttt{full\_time\_work\_behavior\_devices/device\_count}
\texttt{full\_time\_work\_prop}).
\V{In all states, mobility varied approximately from 5\% to 11\%, depending on week day, up to mid-March 2020, when mobility dropped to about 3.5\% with much less variability across weekdays, and slowly climbed back up to about 4.5\% by August 2020.}
The time period considered in
the analysis was February 15 2020 to August 1 2020 (168 days).
We focused on the $30$ states that reported over 20 deaths on %
one or more days
at least one day, and we truncated the time series $30$ days prior to $10$ accumulated deaths, as in \citet{bhatt2020semi}. This shaved 10 days of data at the start of the period, leaving 158 days of data in each state for analysis, from February 25 to August 1. 
The death time series showed a strong weekend effect: fewer deaths were reported on Saturdays and Sundays than one would expect from the numbers reported the previous weeks, and these deaths were instead reported mostly on the following Mondays, and Tuesdays and Wednesdays. Because this effect is not accounted for by the model, it adds variability to the analysis that is due to the reporting process rather than to the epidemic process.
For that reason, we pre-processed the data to reduce the effect: for each state, we fitted a nonparametric smooth function to the data together with four additional parameters that estimated the excess Monday, Tuesday and Wednesday effects and deficit saturday effect; we subtracted the estimated effects from all corresponding days and added their sum to all Sundays. \cref{fig::preprocess} shows the original and adjusted data for four states.

\cref{fig::pred_EY_freqepid_delphi} displays the ML mean death fits for the 30 states, assuming NB death data and regression mean specified in \cref{eq::model00,eq::Rt.model0}. While the overall profiles of the fitted means appear reasonable in all states, in several states there are obvious mismatches between observed and fitted onsets of the pandemic. A possible explanation for this mismatch is a mis-specified infection-to-death distribution $\pi$ in \cref{fig::gpi}; perhaps $\pi$ should be state specific to account for differences between states, such as different delays in reporting deaths.

\begin{figure}[t!]
    \centering
    \includegraphics[width=1\textwidth]{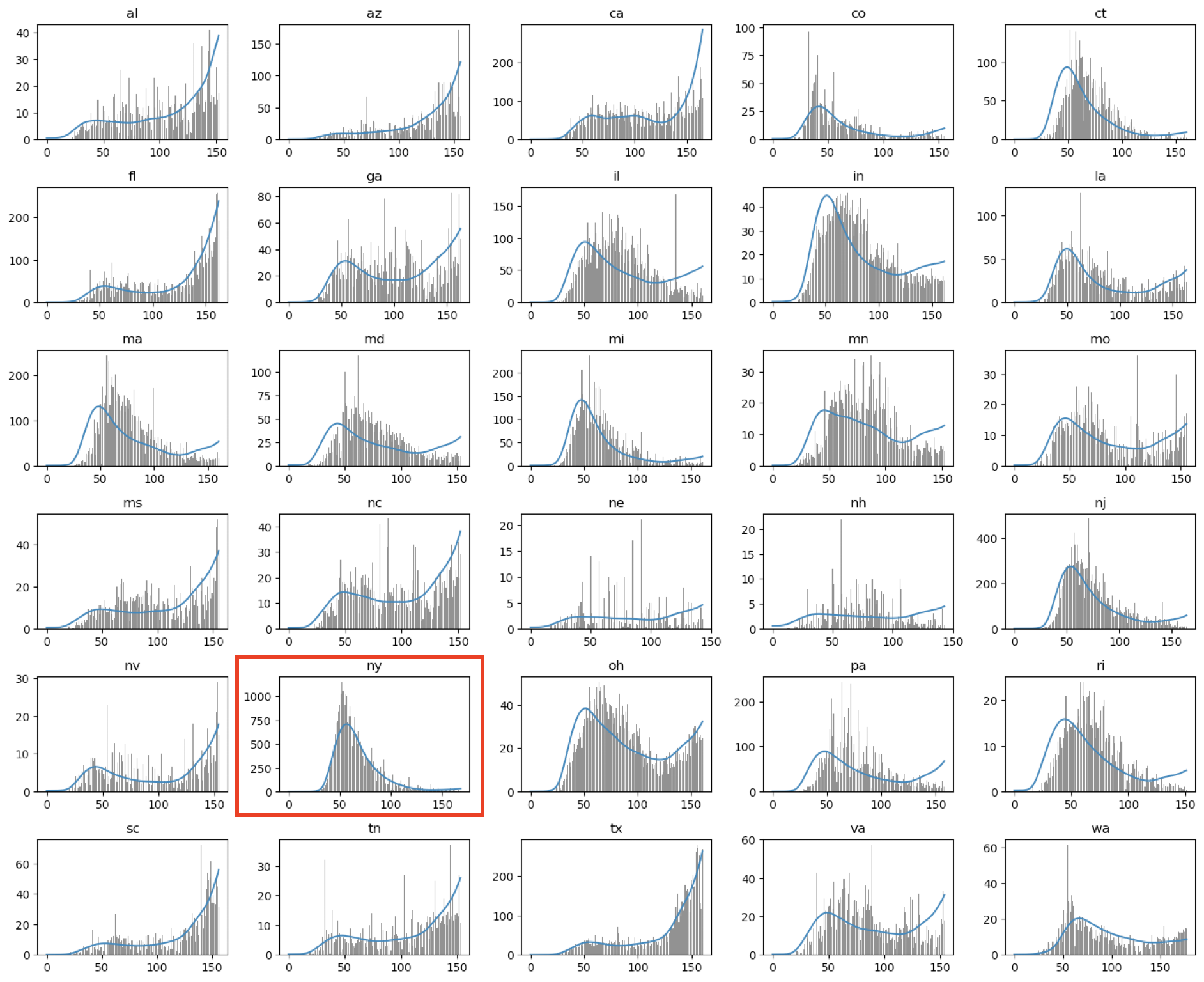}
    \caption{\textbf{ ML fits (blue) to Covid-19 deaths times series in 30 US states} from Feb 15 \V{(day 1)} to August 1st \V{(day 158)} 2020 period . 
    New York is indicated with a box.}\label{fig::pred_EY_freqepid_delphi}
\end{figure}

\cref{fig::diagnostic_ny} shows model and case diagnostics for the ML fit in NY state. In \cref{fig::pred_EY_freqepid_delphi}, there is no apparent onset mismatch in that state but the diagnostics in \cref{fig::diagnostic_ny_vs_t} clearly shows lack of fit at the start and very end of the epidemic, where residuals are not random. A more flexible model for $R_t$ is needed. The other plots don't point to additional issues other than two outliers in (a), only one of which, at $t=140$, has a very modest influence on the fit: the ML estimates of
$(\mu, \beta_0, \beta_1)$ are $(2.08, -7.66, 14.74)$ and $(2.02, -7.82, 15.14)$ with and without that point; the fitted $R_t$ and mean death process with and without that point are shown in Appendix \cref{fig::pred_wo_139}.
Finally, the seemingly influential points at the end of the range in \cref{fig::diagnostic_ny_influence} are due model lack of fit and are not otherwise suspicious.
We also checked the model adequacy in the other states (not shown): apart from the obvious onset mismatches pointed out above, the model and case diagnostics did not raise much concerns.

\begin{figure}[t!]
    \centering
    \begin{minipage}{0.3\textwidth}
        \subfigure[][]{\label{fig::diagnostic_ny_vs_t} 
        \includegraphics[height=0.75\textwidth]{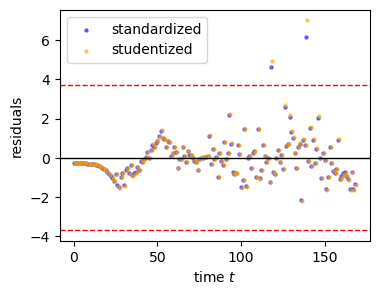}}
    \end{minipage}
    \hspace{-0.01\textwidth}
    \begin{minipage}{0.3\textwidth}
        \subfigure[][]{\label{fig::diagnostic_ny_vs_A} 
        \includegraphics[height=0.75\textwidth]{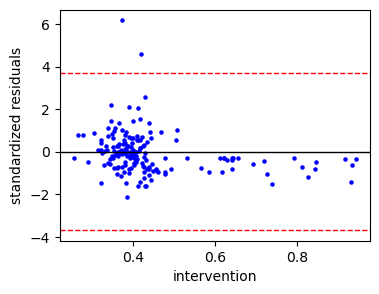}}
    \end{minipage}
    \hspace{-0.01\textwidth}
    \begin{minipage}{0.3\textwidth}
        \subfigure[][]{\label{fig::diagnostic_ny_vs_EYmle} 
        \includegraphics[height=0.75\textwidth]{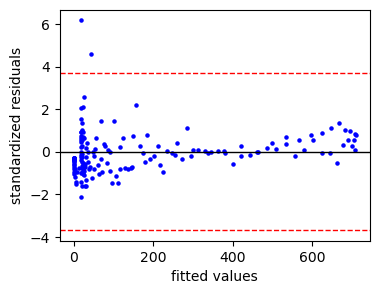}}
    \end{minipage}

    \begin{minipage}{0.3\textwidth}
        \subfigure[][]{\label{fig::diagnostic_ny_qq} 
        \includegraphics[height=0.75\textwidth]{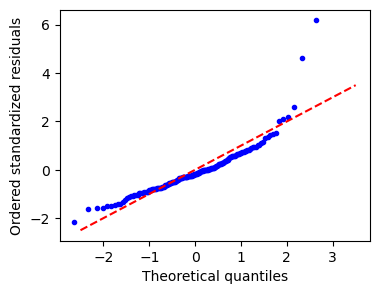}}
    \end{minipage}
    \hspace{-0.01\textwidth}
    \begin{minipage}{0.3\textwidth}
        \subfigure[][]{\label{fig::diagnostic_ny_influence} 
        \includegraphics[height=0.75\textwidth]{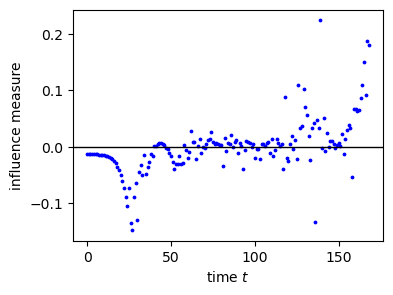}}
    \end{minipage}
    \hspace{-0.01\textwidth}
    \begin{minipage}{0.3\textwidth}
        \subfigure[][]{\label{fig::diagnostic_ny_cook} 
        \includegraphics[height=0.75\textwidth]{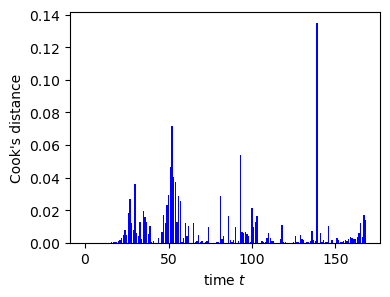}}
    \end{minipage}
    \vspace{-.1in}    
    \caption{\textbf{Model diagnostic plots for NY.} (a) Standardized and studentized residuals versus $t$. (b) Standardized residuals versus $A_t$ and (c) versus $\hat Y_t$. The red dashed lines are the $\alpha/2n$ quantiles of the Gaussian distribution. (d) QQ plot of standadized residuals. (e) Specific influence on $\hat \beta_1$ and (f) Cook's distance versus $t$.}
    \label{fig::diagnostic_ny}
%\end{figure}
    \medskip
%\begin{figure}[t!]
    \centering
    \begin{minipage}{0.318\textwidth}
        \subfigure[][]{\label{fig::beta_freqepid_delphi} 
        \includegraphics[width=\textwidth]{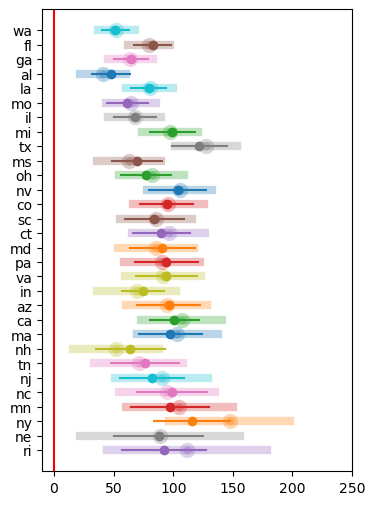}}
    \end{minipage}
    \hspace{-0.01\textwidth}
    \begin{minipage}{0.291\textwidth}
        \subfigure[][]{\label{fig::beta_epidemia_delphi} 
        \includegraphics[width=\textwidth]{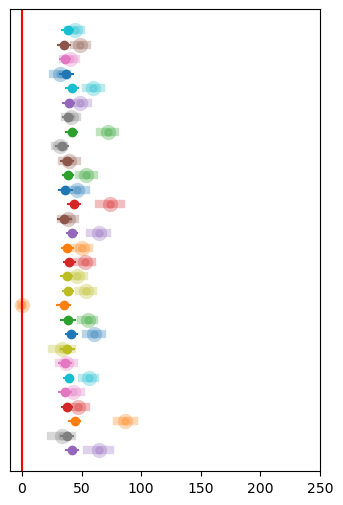}}
    \end{minipage}
    \hspace{-0.01\textwidth}
    \begin{minipage}{0.291\textwidth}
        \subfigure[][]{\label{fig::beta_epininfp_delphi} 
        \includegraphics[width=\textwidth]{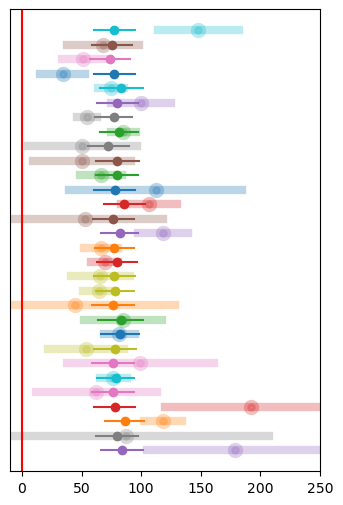}}
    \end{minipage}
    \vspace{-0.1in}
    \caption{\textbf{ Estimates and confidence intervals for $\beta_1$ in \cref{eq::Rt.model0} for 30 states using (a) ML
    \V{and robust empirical Bayes},
    (b) \citet{bhatt2020semi} with their priors and (c) with less informative priors.} The faint thick lines are the estimates and intervals before shrinkage and the dark thin lines after shrinkage. The 30 states are sorted along the y-axis by increasing order of $\Var(\hat \beta_1)$ before shrinkage.}
  \label{fig::beta_delphi}
\end{figure}

\cref{fig::beta_freqepid_delphi} displays the ML estimates of $\beta_1$ for the 30 states, fitted separately and together using the \V{empirical Bayes} shrinkage method in \cref{sec::shrinkage}. 
\cref{fig::beta_epidemia_delphi} displays the Bayesian estimates fitted separately and together, as in \citet{bhatt2020semi}.
Note that we scaled and shifted $A_t$ so that the values would be mostly in the $[0,1]$ range, to match the binary interventions in \citet{bhatt2020semi}, so we could use their prior distributions;
see \cref{fig::iv_A_delphi}.
\V{However, the estimates of $\beta_1$ in
\cref{fig::beta_delphi} are displayed on the original scale.}
The two estimation methods in
\cref{fig::beta_freqepid_delphi,fig::beta_epidemia_delphi}
produce strikingly different inferences: 
the Bayesian estimates of $\beta_1$ take values in a narrow range around 50 and they have very narrow credible intervals,
while the ML estimates have a much larger spread around 100 and
their confidence intervals are wider.
Clearly, the priors are inappropriate for this data; they are centered on the wrong values and their variances are too small.
\cref{fig::beta_epininfp_delphi} shows the Bayesian estimates using less informative priors, specifically \cref{eq::hier} with hyper-priors\footnote{For full disclosure, we originally used $N(0,8)$ for $\beta_0^*$ and $\beta_1^*$, which gave estimates of $\beta_1$ around 600; these were too different from the MLEs for comfort. We suspect that overly uninformative hyper priors yielded improper posteriors. We subsequently tweaked the hyper priors so that the shrunk Bayesian estimates would be closer to the MLEs.}
\begin{equation*}
\begin{aligned}
    e^{\mu^*} & \sim \mathrm{Expon}(0.03), &  \\
    \beta_0^* & \sim N(0, 4), & \sigma_0 \sim \mathrm{Gamma}(1, 8),\\
    \beta_1^* & \sim N(0, 4), & \sigma_1 \sim \mathrm{Gamma}(1, 8).
\end{aligned}
\end{equation*}
Now most Bayesian estimates of $\beta_1$ take values around 100, which is reassuring since they are closer to the ML estimates. But, ultimately, we have more confidence in the ML estimates given the simulation results in \cref{tab:result_shrinkage}.
Note also that, in some states, the shrunk Bayesian estimates are drastically different from the marginal Bayesian estimates, and the credible intervals for the shrunk estimates are rather narrow compared to their un-shrunk counterparts, which may point to some deficiencies of the Bayesian hierarchical model.

\V{Because the logistic function in \cref{eq::Rt.model0} is
non-linear, estimates of $\beta_1$ are useful for comparing qualitatively the effect of mobility across states -- e.g. decreasing mobility has a large effect in NY and TX and a small effect in AL and WA in \cref{fig::beta_delphi} -- but their specific values are not particularly interpretable. Changes in $R_t$ are more meaningful; for example, \cref{fig::pred_wo_139} shows that in NY state, $R_t$ decreased from 6 to just below 1 when mobility dropped suddently mid-March 2020. This drop is in the same ballpark as the drops observed in European countries when government mandated mobility interventions were implemented \citep{epidemia}. }

Lastly, \cref{fig::pred_EY_freqepid_ny} shows future death predictions and prediction intervals (\cref{sec::predictions}) for NY state under two hypothetical trajectory mobility $A_t$, based on the shrunk estimates. The Bayesian and ML predictions are strikingly different. We have much more confidence in the latter, although we do not have perfect confidence since the model exhibits some lack of fit, as evidenced in \cref{fig::diagnostic_ny}.

\begin{figure}[t!]
    \centering
    \begin{minipage}{0.3\textwidth}
        \subfigure[][]{\label{fig::Ae_ny} 
        \includegraphics[height=0.75\textwidth]{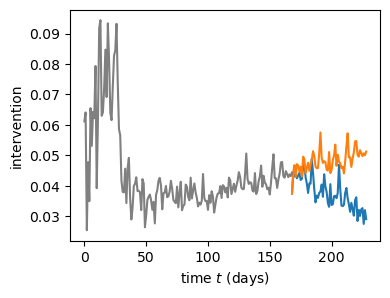}}
    \end{minipage}

    \centering
    \begin{minipage}{0.3\textwidth}
        \subfigure[][]{\label{fig::pred_EY_freqepid_ny_dec} 
        \includegraphics[height=0.75\textwidth]{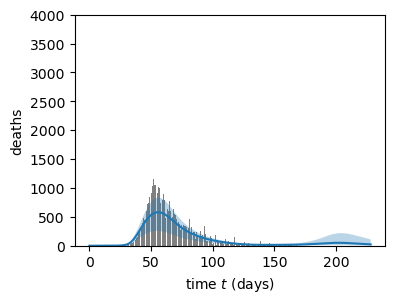}}
    \end{minipage}
    \hspace{-0.01\textwidth}
    \begin{minipage}{0.3\textwidth}
        \subfigure[][]{\label{fig::pred_EY_epidemia_ny_dec} 
        \includegraphics[height=0.75\textwidth]{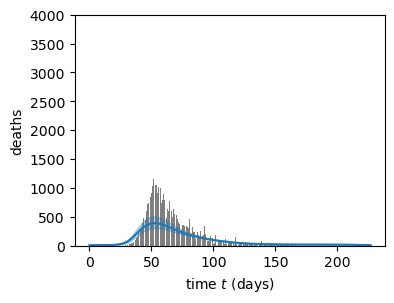}}
    \end{minipage}
    \hspace{-0.01\textwidth}
    \begin{minipage}{0.3\textwidth}
        \subfigure[][]{\label{fig::pred_EY_epininfp_ny_dec} 
        \includegraphics[height=0.75\textwidth]{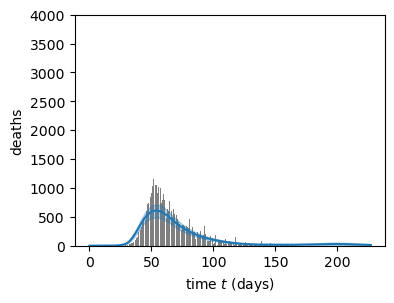}}
    \end{minipage}

    \centering
    \begin{minipage}{0.3\textwidth}
        \subfigure[][]{\label{fig::pred_EY_freqepid_ny_inc} 
        \includegraphics[height=0.75\textwidth]{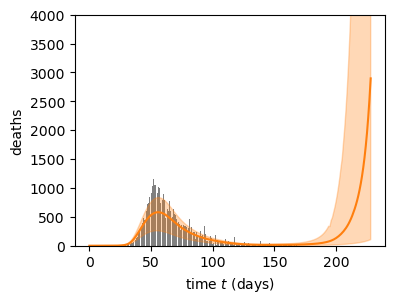}}
    \end{minipage}
    \hspace{-0.01\textwidth}
    \begin{minipage}{0.3\textwidth}
        \subfigure[][]{\label{fig::pred_EY_epidemia_ny_inc} 
        \includegraphics[height=0.75\textwidth]{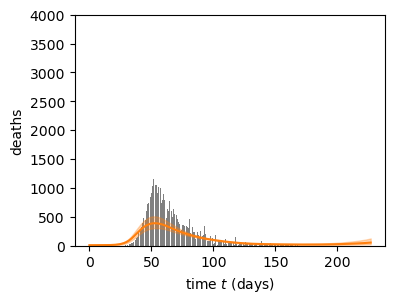}}
    \end{minipage}
    \hspace{-0.01\textwidth}
    \begin{minipage}{0.3\textwidth}
        \subfigure[][]{\label{fig::pred_EY_epininfp_ny_inc} 
        \includegraphics[height=0.75\textwidth]{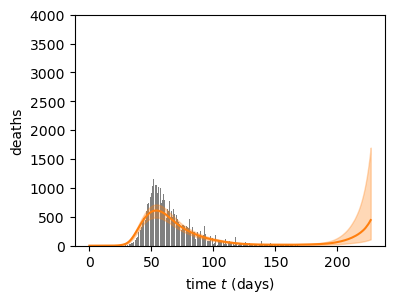}}
    \end{minipage}
     
    \caption{\textbf{ Counterfactual predictions for NY state.} (a) Observed intervention ($t \leq 168$) and two future scenarios ($t > 168$) in blue and orange. (b,e) Counterfactual mean deaths $\Exp[Y_t]$ with $95\%$ global confidence bands based on the ML shrinkage estimator (\cref{sec::shrinkage}). (c,f) Corresponding Bayesian predictions using the informative priors in \cref{eq::hier,eq::hyper}, and (d,g) using the less informative priors.}
    \label{fig::pred_EY_freqepid_ny}
\end{figure}

\section{Discussion \label{sec::discussion}}

We have discussed statistical inference
for a semi-mechanistic epidemic model
using frequentist methods.
The advantage of using
frequentist methods
is that there is no need to specify prior distributions
and the confidence intervals
have valid coverage properties.
However, the disadvantage
is that these coverage properties are asymptotic in nature
in the sense that they hold as the total number of time
points $T$ increases.

If the user has informative priors that they want to
include then it is possible to do so while preserving
frequentist coverage.
For example, inverting a test
based on the integrated
likelihood allows one to include
a prior and still have frequentist coverage.

We made simplifying assumptions that are not always reasonable.
We assumed $\alpha_t = \alpha_0$, meaning that the probability that an infected individual produces an outcome (e.g. dies) remains constant over time, which is not reasonable unless the time series are short enough that nothing in the environment changes. Otherwise, changes in virus mutation, medical treatment, season, etc. will affect $\alpha_t$. The same argument applies to $R_t$. 
Our model does allow $R_t$ to vary with $t$
through $A_t$
but a more elaborate model would allow
$K$ to also vary with $t$
since, for example,
not all variants are as deadly, and part of the population will become immune for a while. 
One could also use a more flexible model for $R_t$,
for example by replacing $A_t$ with a spline basis evaluated at $A_t$.
Another improvement would be to let $\pi$ vary by location.
Furthermore, one could let $g$ vary with time.
All of these generalizations are possible
in principle but one has to balance with desire
for more complex (and realistic) models
with the fact that there is limited information.
Informative priors can help but at the cost
of losing frequentist coverage properties
which is one of the goals of this paper.

The code for the methods in this paper
is freely available
in our Python package \texttt{freqepid}.
Code vignettes we used to generate the results in \cref{sec::examples} are available at \url{github.com/HeejongBong/freqepid}.
Inspiration for this project
came from the R package
\texttt{Epidemia}.
Currently, that package implements a wider range of models
compared to \texttt{freqepid}
but, in principle,
the methods there can be extended to provide the same utility as \texttt{Epidemia}.
We have found that the models in
\texttt{freqepid} often work well
even if the generating processes do not match the model.

\section*{Acknowledgement}
The authors thanks the Delphi group
(\url{delphi.cmu.edu})
for providing funding the first author
on this project.

% \section*{References}

\bibliographystyle{apalike}
\bibliography{refs}

\begin{thebibliography}{}

\bibitem[Armstrong et~al., 2022]{armstrong2022robust}
Armstrong, T.~B., Koles{\'a}r, M., and Plagborg-M{\o}ller, M. (2022).
\newblock Robust empirical bayes confidence intervals.
\newblock {\em Econometrica}, 90(6):2567--2602.

\bibitem[Baker et~al., 2020]{baker2020}
Baker, R.~E., Park, S.~W., Yang, W., Vecchi, G.~A., Metcalf, C. J.~E., and Grenfell, B.~T. (2020).
\newblock The impact of covid-19 nonpharmaceutical interventions on the future dynamics of endemic infections.
\newblock {\em Proceedings of the National Academy of Sciences}, 117(48):30547--30553.

\bibitem[Bates et~al., 2022]{bates2022causal}
Bates, S., Kennedy, E., Tibshirani, R., Ventura, V., and Wasserman, L. (2022).
\newblock Causal inference with orthogonalized regression: Taming the phantom.
\newblock {\em arXiv preprint arXiv:2201.13451}.

\bibitem[Bhatt et~al., 2023]{bhatt2020semi}
Bhatt, S., Ferguson, N., Flaxman, S., Gandy, A., Mishra, S., and Scott, J.~A. (2023).
\newblock Semi-mechanistic bayesian modelling of covid-19 with renewal processes.
\newblock {\em Journal of the Royal Statistical Society Series A: Statistics in Society}, 186(4):601--615.

\bibitem[Bong et~al., 2023]{Bong2023}
Bong, H., Ventura, V., and Wasserman, L. (2023).
\newblock Contribution to the discussion of ‘the second discussion meeting on statistical aspects of the covid-19 pandemic’.
\newblock {\em Journal of the Royal Statistical Society Series A: Statistics in Society}, 186(4):645--646.

\bibitem[Bonvini et~al., 2022]{Bonvini2022covid}
Bonvini, M., Kennedy, E.~H., Ventura, V., and Wasserman, L. (2022).
\newblock {Causal inference for the effect of mobility on COVID-19 deaths}.
\newblock {\em The Annals of Applied Statistics}, 16(4):2458 -- 2480.

\bibitem[Chatzilena et~al., 2019]{chatzilena2019}
Chatzilena, A., van Leeuwen, E., Ratmann, O., Baguelin, M., and Demiris, N. (2019).
\newblock Contemporary statistical inference for infectious disease models using stan.
\newblock {\em Epidemics}, 29:100367.

\bibitem[Fintzi et~al., 2022]{fintzi2022linear}
Fintzi, J., Wakefield, J., and Minin, V.~N. (2022).
\newblock A linear noise approximation for stochastic epidemic models fit to partially observed incidence counts.
\newblock {\em Biometrics}, 78(4):1530--1541.

\bibitem[Giacomini and Kitagawa, 2021]{giacomini2021}
Giacomini, R. and Kitagawa, T. (2021).
\newblock Robust bayesian inference for set-identified models.
\newblock {\em Econometrica}, 89(4):1519--1556.

\bibitem[Gunaratne et~al., 2022]{g2022}
Gunaratne, C., Reyes, R., Hemberg, E., and O’Reilly, U.-M. (2022).
\newblock Evaluating efficacy of indoor non-pharmaceutical interventions against covid-19 outbreaks with a coupled spatial-sir agent-based simulation framework.
\newblock {\em Scientific reports}, 12(1):1--11.

\bibitem[Gustafson, 2015]{gustafson2015}
Gustafson, P. (2015).
\newblock {\em Bayesian inference for partially identified models: Exploring the limits of limited data}, volume 140.
\newblock CRC Press.

\bibitem[James and Stein, 1960]{james1960estimation}
James, W. and Stein, C. (1960).
\newblock Estimation with quadratic loss.
\newblock In {\em Proceedings of the Fourth Berkeley Symposium on Mathematical Statistics and Probability}, volume~1, page 361. Univ of California Press.

\bibitem[Kermack et~al., 1927]{sir}
Kermack, W.~O., McKendrick, A., and Walker, G.~T. (1927).
\newblock A contribution to the mathematical theory of epidemics.
\newblock {\em Proceedings of the Royal Society of London. Series A, Containing Papers of a Mathematical and Physical Character}, 115:700–721.

\bibitem[Lazebnik et~al., 2022]{L2022s}
Lazebnik, T., Shami, L., and Bunimovich-Mendrazitsky, S. (2022).
\newblock Spatio-temporal influence of non-pharmaceutical interventions policies on pandemic dynamics and the economy: the case of covid-19.
\newblock {\em Economic Research-Ekonomska Istra{\v{z}}ivanja}, 35(1):1833--1861.

\bibitem[Li et~al., 2021]{li2021efficient}
Li, Y.~I., Turk, G., Rohrbach, P.~B., Pietzonka, P., Kappler, J., Singh, R., Dolezal, J., Ekeh, T., Kikuchi, L., Peterson, J.~D., et~al. (2021).
\newblock Efficient bayesian inference of fully stochastic epidemiological models with applications to covid-19.
\newblock {\em Royal Society Open Science}, 8(8):211065.

\bibitem[Liu et~al., 2020]{liu2020reproductive}
Liu, Y., Gayle, A.~A., Wilder-Smith, A., and Rockl{\"o}v, J. (2020).
\newblock The reproductive number of covid-19 is higher compared to sars coronavirus.
\newblock {\em Journal of travel medicine}.

\bibitem[Morris, 1983]{morris1983parametric}
Morris, C.~N. (1983).
\newblock Parametric empirical bayes inference: theory and applications.
\newblock {\em Journal of the American statistical Association}, 78(381):47--55.

\bibitem[Newey et~al., 1987]{newey1987simple}
Newey, W.~K., West, K.~D., et~al. (1987).
\newblock A simple, positive semi-definite, heteroskedasticity and autocorrelation consistent covariance matrix.
\newblock {\em Econometrica}, 55(3):703--708.

\bibitem[Perra, 2021]{perra2021non}
Perra, N. (2021).
\newblock Non-pharmaceutical interventions during the covid-19 pandemic: A review.
\newblock {\em Physics Reports}, 913:1--52.

\bibitem[Pötscher and Prucha, 1997]{potscher1997dynamic}
Pötscher, B.~M. and Prucha, I.~R. (1997).
\newblock {\em Dynamic Nonlinear Econometric Models: Asymptotic Theory}.
\newblock Springer, Berlin, Heidelberg.

\bibitem[Robins and Wasserman, 1997]{robins2013estimation}
Robins, J.~M. and Wasserman, L.~A. (1997).
\newblock Estimation of effects of sequential treatments by reparameterizing directed acyclic graphs.
\newblock {\em Proceedings of the Thirteenth Conference on Uncertainty in Artificial Intelligence, Providence Rhode Island}.

\bibitem[Scott et~al., 2021]{epidemia}
Scott, J.~A., Gandy, A., Mishra, S., Bhatt, S., Flaxman, S., Unwin, H. J.~T., and Ish-Horowicz, J. (2021).
\newblock Epidemia: an r package for semi-mechanistic bayesian modelling of infectious diseases using point processes.
\newblock {\em arXiv preprint arXiv:2110.12461}.

\bibitem[Smith, 1995]{smith1995generalized}
Smith, J.~E. (1995).
\newblock Generalized chebychev inequalities: theory and applications in decision analysis.
\newblock {\em Operations Research}, 43(5):807--825.

\bibitem[van~der Vaart, 2000]{van2000asymptotic}
van~der Vaart, A.~W. (2000).
\newblock {\em Asymptotic statistics}, volume~3.
\newblock Cambridge university press.

\bibitem[Vytla et~al., 2021]{vytla2021}
Vytla, V., Ramakuri, S.~K., Peddi, A., Srinivas, K.~K., and Ragav, N.~N. (2021).
\newblock Mathematical models for predicting covid-19 pandemic: a review.
\newblock In {\em Journal of Physics: Conference Series}, volume 1797, page 012009. IOP Publishing.

\end{thebibliography}

\appendix\section{Appendix: Block Coordinate Descent Algorithm}
\label{sec::append.algos}
\label{sec::algorithm}

\V{
In the main paper, $A_t$ denotes 
a vector of interventions.
For notation brevity, in all the appendices we will append $1$ at the start of $A_t$
so that the expression $\beta_0 + \beta_1^\top A_t$
in the main text can simply be written 
$\beta^\top A_t$ in the proofs.
}

Given observed data $\overline Y_T$, we fit our model, (\cref{eq::model00}) by maximizing the log likelihood function
\begin{equation*}
\begin{aligned}
  \ell(\theta) 
    & \equiv \sum_{t=1}^T \log p(Y_t;\theta)
    = \sum_{t=1}^T \ell_t(\theta). \\
\end{aligned}
\end{equation*} Because the MLE for $\theta$ is not in a closed form, we use the coordinate descent algorithm to find the maximizer.
For the first iteration, let $\hat\beta^{(0)}$ and $\hat\mu^{(0)}$ be the starting values given below. For iteration $i > 1$, let $\hat\beta^{(i-1)}$ and $\hat\mu^{(i-1)}$ be the outcomes of the previous iteration. We set $\hat{\nu}^{(i)}$ by the maximizer of the likelihood function given $\beta$ and $\mu$ are fixed at $\hat\beta^{(i-1)}$ and $\hat\mu^{(i-1)}$. For the Normal distribution, the maximizer has a closed form:
\begin{equation} \label{eq::cd_G_sigma}
    \hat{\nu}^{(i)} = \frac{1}{\sqrt{T}} \norm{\overline Y_T - \hat\mu^{(i-1)} (\Pi \Omega(\hat\beta^{(i)}) + \Pi_0) \mathbf{1}}_2.
\end{equation}
% \LW{Heejong: why are you dividing by $\sqrt{T}$?}
For the NB
distribution, the maximizer $\hat{\nu}^{(i)}$ is obtained by numerically solving
\begin{equation*}
    \hat{\nu}^{(i)} = {\arg\min}_{\nu}  \ell(\nu, \hat{\mu}^{(i-1)}, \hat{\beta}^{(i-1)}).
\end{equation*}
Next, 
we use Newton's method to perform a coordinate descent step for $\beta$ and $\mu$.
The steps are:
\begin{equation*}
    (\hat{\beta}^{(i)}, \hat\mu^{(i)}) 
     = (\hat{\beta}^{(i-1)}, \hat{\mu}^{(i-1)}) 
    + \eta ~[\nabla_{(\beta,\mu)}^2 \ell (\hat\beta^{(i-1)}, \hat\mu^{(i-1)}, \hat\nu^{(i)})]^{-1}
    ~\nabla_{(\beta,\mu)} \ell(\hat\beta^{(i-1)}, \hat\mu^{(i-1)}, \hat\nu^{(i)})
\end{equation*}
where $\ell(\beta, \mu, \nu)$ is the log-likelihood function given data $\overline Y_T$, and $\eta$ is a step size hyperparameter. We iterate the algorithm until the log-likelihood converges at a threshold hyperparameter.

\paragraph{Starting Values}
The coordinate descent algorithm and Newton's method require
starting values $\hat\beta^{(0)}$ and $\hat\mu^{(0)}$.
Our approach is as follows.
Start by approximating $\pi$ with a point mass at its mean $\overline{\pi}$
so that $Y_t = \alpha I_{t-\overline{\pi}}$.
Then an initial estimate of $\overline I_{0}$ is $(Y_{-k+\overline\pi+1}/\alpha, \dots, Y_{\overline{\pi}}/\alpha)$. We obtain $\hat\mu^{(0)} = \log(\frac{1}{\alpha k} \sum_{t=-k+\overline{\pi}+1}^{ \overline{\pi}} Y_t)$.
We then have from \cref{eq::Rt.model0}
$$
Y_{t+\overline{\pi}} = K (1 + e^{-\beta^\top A_t})^{-1} \sum_{s<t} Y_{s+\overline{\pi}} g_{t-s}.
$$
Now approximate 
$(1+e^{- \beta^\top A_t})^{-1} \approx \frac{1}{2} + \frac{1}{4} \beta^\top A_t$.
This leads to the linear regression
$$
Y_{t+\overline{\pi}} = \frac{1}{2} K \sum_{s<t} Y_{s+\overline{\pi}} g_{t-s} +
\frac{1}{4} K \beta^\top A_t \sum_{s<t} Y_{s+\overline{\pi}} g_{t-s},
$$
and we obtain the least square estimates $\hat\beta^{(0)}$. We note that the starting value for $\nu$ is not required for the estimation algorithm.

\section{Appendix: Proofs}
\label{sec::append.proofs}

%=================================================
\paragraph{\underline{\bf Proof of Proposition \ref{thm::identifiability}}}
 Conditional on $\overline I_T$, the $Y_t$'s are
 mutually independent and $\overline I_T \equiv \overline I_T(\beta, \mu)$ is deterministic given parameters $\beta$ and $\mu$. 
 Hence, the conditional distribution of $Y_t$ given $\overline{A}_t$ is 
 equal to the conditional distribution of $Y_t$ given $(\overline I_t, \overline Y_{t-1}, \overline{A}_t)$ (and hence given $(\overline I_t, \overline Y_{t-1})$,) 
 which we take to be Gaussian or NB with nuisance parameter $\nu$ and mean $m_t(\beta, \mu) \equiv \Exp_\theta[Y_t | \overline A_t]$,
 where $\Exp_\theta$ denotes the expectation under model \eqref{eq::model00} with parameter $\theta$.
  %For example, the marginal mean of $Y_1$ is $m_1(\beta, \mu)$. 
 Because $\overline I_0 = e^\mu \mathbf{1}$, $m_1(\beta, \mu)$ is given by $\alpha e^\mu \sum_{t=1}^{T_0} \pi_t$,
 and because $\alpha \sum_{s=1}^{T_0} \pi_t > 0$, 
 $\mu$ is identified by the marginal mean of $Y_1$, 
 that is $\mu = \log \left(\frac{\Exp_\theta[Y_1|A_1]}{\alpha \sum_{s=1}^{T_0} \pi_t} \right)$. Similarly, because $\nu$ is an identified parameter of the Gaussian and NB distributions, it is identified by the marginal distribution of $Y_1$.

 Now it is sufficient to show that $\beta$ is identified given that $\mu$ and $\nu$ are fixed. We proceed with a proof by contradiction.
 %
 % Because the conditional distribution of $Y_t$ given $m_t(\beta, \mu) \equiv \Exp_\theta[Y_t | A_{(0,t]}]$ is determined by the nuisance parameter $\nu$ and mean $m_t(\beta, \mu)$, it is sufficient to show that $m_{(0,T]}(\beta, \mu) \equiv (m_1, \dots, m_T)$ is injective.  \LW{Why is this sufficient? How about $\nu$?} %     
 % Suppose that $m_{(0,T]}(\beta, \mu) = m_{(0,T]}(\beta', \mu')$ and $\mu \neq \mu'$. Then, $m_1(\beta, \mu) = \mu \sum_{t=1}^{T_0} \pi_t \neq \mu' \sum_{t=1}^{T_0} \pi_t = m_1(\beta', \mu')$ because $\sum_{t=1}^{T_0} \pi_t > 0$.
 % Now suppose that $\mu = \mu'$. Then, 
 Suppose that the marginal means of $\overline Y_T$ under $\beta$ and $\beta'$ are the same, that is $\overline m_T (\beta, \mu) = \overline m_T (\beta', \mu)$. This implies that $\overline I_{T-T_0}(\beta, \mu) = \overline I_{T-T_0} (\beta', \mu)$. 
 To prove this, suppose that $\exists t \in \{2, \dots, T-T_0\}$ such that $\overline I_{t-1}(\beta, \mu) = \overline I_{t-1} (\beta', \mu)$ and $I_t(\beta,\mu) \neq I_t(\beta',\mu)$. Defining $\tau \equiv \min\{t: g_t > 0\}$, we have $\tau \leq T_0$, and 
 \begin{equation*}
     m_{t+\tau}(\beta, \mu) = \alpha \sum_{s=0}^{t+T_0-1} \pi_{s+\tau} I_{t-s}(\beta,\mu).
 \end{equation*}
 Because $\overline I_{t-1}(\beta, \mu) = \overline I_{t-1}(\beta', \mu)$, $\sum_{s=1}^{t+T_0-1} \pi_{s+\tau} I_{t-s}(\beta,\mu) = \sum_{s=1}^{t+T_0-1} \pi_{s+\tau} I_{t-s}(\beta',\mu)$. As a result, $m_{t+\tau}(\beta, \mu) = m_{t+\tau}(\beta', \mu)$ implies that $I_t(\beta,\mu) = I_t(\beta',\mu)$. This contradicts our premise that $\overline I_{T-T_0}(\beta, \mu) \neq \overline I_{T-T_0}(\beta', \mu)$. The identity in infection $\overline I_{T-T_0}$ implies $\overline R_{T-T_0}(\beta) = \overline R_{T-T_0}(\beta')$. In other words, $(\beta-\beta') A_t = 0$ for all $t \leq T-T_0$, from which we conclude that  $\beta = \beta'$.

%=================================================

\subsection{Lemmas and Proofs for Section~\ref{sec::CLT}} 
\label{app::CLT}

\V{We need the following lemma for the proof of \cref{thm::asymp_norm}.}
\begin{lemma}\label{lemma::compact}
    Suppose that Assumption~\ref{assmp::eigenvalue} holds. For the NB distribution, we further assume that $r_{\min} > 1$. Then $\Theta$ as defined in Definition~\ref{def::Theta} is a compact set with probability $1$. Moreover, there exist $\mathfrak{F}(\overline{A}_T)$-measurable random variables $R_{\min}$, $R_{\max}$, $\nu_{\min}, \nu_{\max} > 0$ such that, for any $t \geq 1$
    \begin{equation*}
        (\beta, \mu, \nu) \in \Theta \implies 
        R_{\min} \leq R(\overline A_{t},\beta) \leq R_{\max}
        \textand
        \nu_{\min} \leq \nu \leq \nu_{\max},
    \end{equation*}
    with probability $1$.
\end{lemma}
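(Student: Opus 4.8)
The plan is to establish two things in sequence: first, that $\Theta$ is bounded and closed (hence compact) almost surely; and second, that the bounds on $R(\overline A_t,\beta)$ and $\nu$ follow from the defining constraints of $\Theta$ together with Assumption~\ref{assmp::eigenvalue}. I would work throughout on the almost-sure event on which the $\mathfrak{F}(\overline A_T)$-measurable quantities $\lambda_{\min,A}$, $A_{\max}$, $I_{\min}$, $I_{\max}$, $D^{(1)}_{\max}$, $D^{(2)}_{\max}$, $r_{\min}$, $r_{\max}$ are all finite and positive, so that every statement below is conditional on $\overline A_T$.

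For boundedness, the key step is to extract a bound on $\beta$ from constraint~(1) of Definition~\ref{def::Theta}. Recall $\Exp_\theta[I_t\mid\overline A_t] = R(\overline A_t,\beta)\sum_{s<t}I_s g_{t-s}$ with $R(\overline A_t,\beta)=K(1+e^{-\beta^\top A_t})^{-1}$ in the appendix's notation (where $A_t$ has a leading $1$). Since Assumption~\ref{assmp::eigenvalue} gives $\pi_t,g_t$ finitely supported and $\pi_t>0$ for some $t\le T_0$, the seeding $I_s=e^\mu$ for $-T_0<s\le 0$ together with $I_{\min}\le \Exp_\theta[I_t\mid\overline A_t]$ forces $e^\mu$ to be bounded below, and $\Exp_\theta[I_t\mid\overline A_t]\le I_{\max}$ together with the recursion forces $e^\mu$ bounded above; this pins $\mu$ to a compact interval. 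Given $\mu$ in a compact set, the ratio $\Exp_\theta[I_{t+1}\mid\overline A_t]/\Exp_\theta[I_t\mid\overline A_t]$ stays in $[I_{\min}/(I_{\max}\cdot c), I_{\max}/(I_{\min}\cdot c')]$ for suitable constants coming from $g$, which by the form of $R$ means $R(\overline A_t,\beta)=K(1+e^{-\beta^\top A_t})^{-1}$ lies in a compact subinterval of $(0,K)$, hence $\beta^\top A_t$ is bounded, uniformly in $t$. Now Assumption~\ref{assmp::eigenvalue}'s lower eigenvalue bound $\lambda_{\min,A}\|\beta\|_2^2 \le \frac{1}{T-t}\|\beta^\top(A_{t+1},\dots,A_T)\|_2^2$ converts the uniform bound on the coordinates $\beta^\top A_t$ into a bound on $\|\beta\|_2$. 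For $\nu$: in the Gaussian case, constraint~(3) reads $r_{\min}\le \nu^2 \bar T / \sum_t \Exp_\theta[Y_t\mid\overline A_t] \le r_{\max}$ and the denominator is already bounded above and below by the $I$-bounds and $\alpha$, $\pi$, so $\nu\in[\nu_{\min},\nu_{\max}]$; in the NB case, $\Var_\theta[Y_t\mid\overline A_t]=\Exp_\theta[Y_t\mid\overline A_t](1+\Exp_\theta[Y_t\mid\overline A_t]/\nu)$, and constraint~(3) with the assumption $r_{\min}>1$ (needed so the variance-to-mean ratio can exceed $1$, which NB forces) again traps $\nu$ in a compact interval after using the mean bounds. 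Closedness of $\Theta$ follows because all the defining inequalities are non-strict except the positivity ones, and on the compact region we have just identified the quantities $I_{\min}\le\Exp_\theta[I_t\mid\overline A_t]$, $r_{\min}\le\cdots$, $\nu>0$ are bounded away from their strict endpoints, so $\Theta$ equals its closure; alternatively, intersect the closed sublevel sets with the compact box and note $\Theta$ is nonempty by Assumption~\ref{assmp::identified_minimizer}. Compactness is then closed-and-bounded in $\reals^d$.

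The second conclusion is essentially a byproduct: having shown $\mu$ lives in a compact interval and $\|\beta\|_2\le B$ for a measurable $B<\infty$, the quantity $\beta^\top A_t$ satisfies $|\beta^\top A_t|\le B(1+A_{\max})$ uniformly in $t$ by Cauchy–Schwarz and Assumption~\ref{assmp::eigenvalue}, so $R(\overline A_t,\beta)=K(1+e^{-\beta^\top A_t})^{-1}\in[R_{\min},R_{\max}]$ with $R_{\min}=K(1+e^{B(1+A_{\max})})^{-1}$ and $R_{\max}=K(1+e^{-B(1+A_{\max})})^{-1}$, both $\mathfrak{F}(\overline A_T)$-measurable and in $(0,\infty)$; the $\nu$ bounds were already produced above. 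I expect the main obstacle to be the first link in the chain — rigorously turning the $I_{\min}/I_{\max}$ sandwich in constraint~(1) into a uniform-in-$t$ two-sided bound on $\beta^\top A_t$, because the recursion $\Exp_\theta[I_t\mid\overline A_t]=R(\overline A_t,\beta)\sum_{s<t}I_sg_{t-s}$ couples consecutive values and one has to argue carefully (using the finite support of $g$ from Assumption~\ref{assmp::eigenvalue} and an induction on $t$) that the successive ratios, and hence each $R(\overline A_t,\beta)$, cannot drift to $0$ or to $K$. The NB subtlety requiring $r_{\min}>1$ is the other place to be careful, since without it the variance-to-mean constraint would be vacuous for large $\nu$ and $\nu$ could escape to infinity.
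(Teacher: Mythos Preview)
Your overall strategy is sound and close to the paper's, but there is a genuine circularity in the order you propose. You plan to bound $\mu$ first, then use the compactness of $\mu$ to control ratios $\Exp_\theta[I_{t+1}\mid\overline A_t]/\Exp_\theta[I_t\mid\overline A_t]$ and thence $R(\overline A_t,\beta)$. However, the \emph{upper} bound on $e^\mu$ cannot be extracted from $\Exp_\theta[I_1\mid\overline A_1]=R(\overline A_1,\beta)\,e^\mu\sum_{s=1}^{T_0}g_s\le I_{\max}$ without already having a \emph{lower} bound on $R(\overline A_1,\beta)$: if $\beta^\top A_1$ is very negative, $R$ is near $0$ and $e^\mu$ can be arbitrarily large while $I_1$ remains below $I_{\max}$. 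So your first step needs the output of your second step.

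The paper breaks this circle by reversing the order and exploiting the finite support of $g$ from Assumption~\ref{assmp::eigenvalue} in a more direct way than your proposed induction. For any $t\ge\tau$ one has $I_t=R(\overline A_t,\beta)\sum_{s=1}^{\tau-1}g_s I_{t-s}$, and every index $t-s$ appearing on the right satisfies $t-s\ge 1$, so every $I_{t-s}$ is already pinned in $[I_{\min},I_{\max}]$ by constraint~(1). Since $\sum_s g_s=1$, the convex combination $\sum_s g_s I_{t-s}$ also lies in $[I_{\min},I_{\max}]$, whence $R(\overline A_t,\beta)=I_t\big/\sum_s g_s I_{t-s}\in[I_{\min}/I_{\max},\,I_{\max}/I_{\min}]$ for all $t\ge\tau$, with no reference to $\mu$ and no induction. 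This immediately gives $|\beta^\top A_t|\le M$ for $t\ge\tau$; Assumption~\ref{assmp::eigenvalue} (applied with $t=\tau$) then bounds $\|\beta\|_2$; Cauchy--Schwarz with $\|A_t\|_2\le A_{\max}$ extends the $R$-bound to all $t\ge 1$; and only \emph{then} does one read off the two-sided bound on $\mu$ from $I_1$. Your treatment of $\nu$ (including the role of $r_{\min}>1$ in the NB case) and your closedness argument are essentially correct, though for closedness the paper simply notes that all defining inequalities in Definition~\ref{def::Theta} are non-strict and the relevant maps are continuous, so $\Theta$ is an intersection of closed preimages.
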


\begin{proof}
    For $t \geq \tau$, $\Exp_\theta[I_t|\overline A_t] = R(\overline A_{t},\beta) \sum_{s=1}^{\tau-1} \pi_s I_{t-s}$. Because $I_{\min} \leq I_{t-s} \leq I_{\max}$ for $s \in \{0, \dots, \tau-1\}$, 
    \begin{equation*}
        R(\overline A_{t}, \beta) I_{\min} \leq \Exp_\theta[I_t|\overline A_t] = I_t \leq I_{\max}.
    \end{equation*}
    This implies for $t \geq \tau$, $R(\overline A_{t}, \beta) \leq \frac{I_{\max}}{I_{\min}}$ and similarly $R(\overline A_{t}, \beta) \geq \frac{I_{\min}}{I_{\max}}$.    
    According to \cref{eq::Rt.model0}, there exists $M > 0$ such that $\frac{I_{\min}}{I_{\max}} \leq R(\overline A_{t}, \beta) \leq \frac{I_{\max}}{I_{\min}} \implies \abs{\beta^\top A_t} < M$. As a result,  $\norm{\beta}_2^2 \leq \frac{1}{\lambda_{\min,A} \cdot (T-\tau)} \norm{\beta^\top (A_{\tau+1}, \dots, A_{T})}_2^2 \leq \frac{M^2}{\lambda_{\min,A}}$. Moreover, $\abs{\beta^\top A_t} \leq \norm{\beta}_2\norm{A_t}_2 \leq \frac{M A_{\max}}{\sqrt{\lambda_{\min,A}}}$, which implies by \cref{eq::Rt.model0} that $R_{\min} \leq R(\overline A_{t},\beta) \leq R_{\max}$ for any $t \geq 1$, where $R_{\min}$ and $R_{\max}$ are constants depending on $M$, $\lambda_{\min,A}$ and $A_{\max}$.
    
    Next, according to \cref{eq::model00},
    \begin{equation*}
    \begin{aligned}
        R_{\min} e^\mu \sum_{t=1}^{T_0} \pi_t 
        & \leq \Exp_\theta[I_1| \overline A_1] =
         R(\overline A_{1}, \beta) e^\mu \sum_{t=1}^{T_0} \pi_t 
        \leq R_{\max} e^\mu \sum_{t=1}^{T_0} \pi_t.
    \end{aligned}
    \end{equation*} 
    Because $I_{\min} \leq \Exp_\theta[I_1|\overline A_1] \leq I_{\max}$,  $\mu$ is bounded by 
    \begin{equation*}
        \log\left(\frac{I_{\min}}{R_{\max} \sum_{t=1}^{T_0} \pi_t}\right) \leq \mu \leq \log\left(\frac{I_{\max}}{R_{\min} \sum_{t=1}^{T_0} \pi_t}\right).
    \end{equation*}

    Finally, for the Normal distribution, $r_{\min} \leq \frac{T \nu^2}{\sum_t \Exp_\theta[Y_t |\overline A_t ]}  \leq r_{\max}$. Because 
    \begin{equation*}
        \min\{e^\mu, I_{\min}\} \sum_{t=1}^{T_0} \pi_t \leq \Exp_\theta[Y_t |\overline A_t] \leq e^\mu + I_{\max},
    \end{equation*} 
    $\nu$ is bounded away from $0$ and $\infty$. For the NB distribution, 
    \begin{equation*}
        r_{\min} \leq 1 + \frac{1}{\nu} \frac{\sum_t \Exp_\theta[Y_t |\overline A_t]^2}{\sum_t \Exp_\theta[Y_t |\overline A_t ] } \leq r_{\max}.
    \end{equation*}
    Because $r_{\min} > 1$, $\nu$ is again bounded away from $0$ and $\infty$. In sum, $\Theta$ is bounded almost surely.
    
    Furthermore, the functions 
    \begin{equation*}
        (\beta,\mu)  \mapsto \Exp_\theta[I_t|\overline A_t] \, \text{ and } \,
        \theta  \mapsto \Var_\theta[Y_t|\overline A_t]
    \end{equation*}
    are continuous. As an intersection of the preimages of 
    closed sets by the three continuous functions, $\Theta$ is closed almost surely. 
    It follows that
    $\Theta$ is compact almost surely.
\end{proof}

%=================================================
\paragraph{\underline{\bf Proof of \cref{thm::asymp_norm}}}

\V{Our proof is based on the asymptotic property of nonlinear time-series models established by \citet{potscher1997dynamic}. In particular, we use the consistency (Theorem 7.1) and asymptotic normality (Theorem 11.2 (b)) of general M-estimators, defined as follows under their own notations:
\begin{equation} \label{eq::M_estimator}
    \hat\beta_n \equiv {\arg\min}_{\beta} ~\vartheta_n\left(n^{-1} \sum_{t=1}^n q_t(\mathbf{z}_t, \hat\tau_n, \beta), \hat\tau_n, \beta\right),
\end{equation}
where $\vartheta_n$ is the risk function, $q_t$ is the contribution of data $\mathbf{z}_t$ at time $t$ to the risk, and $\hat\tau_n$ is an estimator of (potential) nuisance parameter $\tau$. 
To help readers cross-reference between the book and our proof, we explain which component in our MLE setting corresponds to which in \cref{eq::M_estimator} as follows:
\begin{itemize}
    \item the number of data points $n$ is $T$ in our setting,
    \item the risk function $\vartheta_n$ is given by $\vartheta_n(c, \tau, \beta) = c$ in our setting,
    \item the risk contribution $q_t$ at time $t$ is the negative log-likelihood contribution $\ell_t$ in our setting, 
    that is $\ell_t(\theta) = \log p_{Y_t}(Y_t| \overline{A}_t, \theta)$, where $p_{Y_t}(\cdot | \overline{A}_t, \theta)$ is the pdf of $Y_t$ conditional on $\overline{A}_t$ under our model with parameter $\theta$,
    \item the data point $z_t$ at each time $t$ is $Y_t$ in our setting, 
    \item the parameter $\beta$ and its estimate $\hat\beta_n$ correspond to $\theta$ and $\hat\theta$ in our setting, and
    \item the nuisance parameter $\tau$ and its estimate $\hat\tau_n$ can be disregarded in our setting.
\end{itemize}
}

We first prove the consistency of $\hat\theta$. \V{To make the dependence of $\hat\theta$ on $T$ clear, we use the notation $\hat\theta_T$ for $\hat\theta$. The consistency theorem (Theorem 7.1 of \citealp{potscher1997dynamic}) was established under assumptions 7.1, 7.2 therein and the assumption about identifiably unique sequence of minimizers, which consist the common assumption set in the MLE theory. Below we show that all of their assumptions are met almost surely conditional on $\overline{A}_T$ under our setting with given assumptions. Their assumption 7.1 (a) about the compact parameter space assumption is met almost surely by our Definition~\ref{def::Theta} and Assumption~\ref{assmp::eigenvalue} as shown in Lemma~\ref{lemma::compact}, 7.1 (b) about the equicontinuity of $\vartheta_n$ by $\vartheta_n(c, \tau, \beta) = c$, 7.1 (d) about the mixingness of data points by our Assumption~\ref{assmp::mixing}, and 7.1 (e) by the absence of nuisance parameter in our model. The tightness in their assumption 7.2 follows the second part of our Assumption~\ref{assmp::bounded}. The assumption about identifiably unique sequence of minimizers follows our Assumption~\ref{assmp::identified_minimizer}.}
Now it suffices to show the almost sure validity of their assumption 7.1 (c) and the equicontinuity in their assumption 7.2:
\begin{enumerate}
    \item $\sup_T \frac{1}{T} \sum_{t=1}^T \Exp[\sup_{\theta \in \Theta} \abs{\log p_{Y_t}(Y_t; \theta)}^{1+\zeta} | \overline{A}_t] < \infty$ for some $\zeta > 0$; and
    \medskip
    \item for any fixed $y \in \nats_0$, $\{ \log p_{Y_t}(y| \overline{A}_t, \theta): t \in (0,\infty)\}$ is equicontinuous on $\Theta$.
\end{enumerate}

\V{This requires essential boundedness of reproduction numbers $R(\bar{A}_t, \beta)$ and scale parameter $\nu$, which we have shown in Lemma~\ref{lemma::compact}.}
For the Normal distribution, $\log p_{Y_t}(y| \overline{A}_t, \theta) = - \log \nu - \frac{(y - m_t(\beta,\mu))^2}{2 \nu^2}$, where $m_t(\beta, \mu) \equiv \Exp_\theta[Y_t | \overline A_{t}]$. Because $\Theta$ is compact almost surely, and $(\beta, \mu, \nu) \in \Theta$ implies that $m_t(\beta, \mu)$'s are uniformly bounded away from $0$ and $\infty$, taking $\zeta = \frac{\gamma}{2}-1$, 
\begin{equation*}
    \sup_{\theta \in \Theta} \Exp[d_t(Y_t)^{1+\zeta}] \leq 2^{1+\zeta} \sup_{\theta \in \Theta} \left( \abs{\log\nu}^{1+\zeta} + \frac{\Exp[Y_t^\gamma] + (m_t(\beta, \mu))^\gamma}{\nu^\gamma} \right) < \infty,
\end{equation*}
almost surely.
For the NB distribution, $\log p_{Y_t}(y| \overline{A}_t, \theta) = \sum_{k=1}^{y} \log\left(1 + \frac{\nu-1}{k}\right) + \nu \log \nu + y \log m_t(\beta, \mu) - (\nu+y) \log(\nu + m_t(\beta, \mu))$ for $y \in \nats_0$. Because $\abs*{\sum_{k=1}^{Y_t} \log\left(1 + \frac{\nu-1}{k}\right)} \leq \abs{Y_t \log \nu}$, where $\zeta = \gamma-1$
\begin{equation*}
\begin{aligned}
    \sup_{\theta \in \Theta} \Exp[d_t(Y_t)^{1+\zeta}]
    & \leq 3^{1+\zeta} \sup_{\theta \in \Theta} \left[
        \left(\nu \log \frac{\nu + m_t(\beta,\mu)}{\nu} \right)^{1+\zeta} 
        + \Exp[Y_t^\gamma] \left( \abs{\log \nu}^\gamma + \left(\log \frac{\nu + m_t(\beta,\mu)}{m_t(\beta,\mu)}\right)^\gamma \right)
    \right] \\
    & < \infty,
\end{aligned}
\end{equation*}
almost surely. In sum, for the both cases, their assumption 7.1 (c) holds.

For the equicontinuity in their assumption 7.2, we note that $\log p_{Y_t}(y | \overline{A}_t, \theta) = f(\nu, m_t(\beta, \mu); y, \overline{A}_t)$ for a fixed function $f$ in both cases of the Normal and NB distributions. So it is sufficient to show that $m_t(\beta, \mu)$ is equicontinuous on $\Theta$. The equicontinuity of $m_t$ is implied almost surely by our Definition~\ref{def::Theta} of $\Theta$ because
\begin{equation*}
    \sup_t \sup_{\theta \in \Theta} \norm{\nabla_\theta \Exp_\theta[I_t | \overline A_{t}]}_\infty < \infty.
\end{equation*}
\V{In sum, $\hat\theta_T$ is consistent to $\theta^*$ conditional on $\overline{A}_T$ almost surely: for any $\delta > 0$,
\begin{equation*}
    \lim_{T \rightarrow \infty} \Pr[\norm{\hat\theta_T - \theta^*} > \delta | \overline{A}_T] = 0,
\end{equation*}
almost surely. By Fubini's theorem,
\begin{equation*}
\begin{aligned}
    \lim_{T \rightarrow \infty} \Pr[\norm{\hat\theta_T - \theta^*} > \delta]
    & = \lim_{T \rightarrow \infty} \int \Pr[\norm{\hat\theta_T - \theta^*} > \delta | \overline{A}_T] ~dP_{\bar{A}} \\
    & = \int \lim_{T \rightarrow \infty} \int \Pr[\norm{\hat\theta_T - \theta^*} > \delta | \overline{A}_T] ~dP_{\bar{A}} \\
    & = \int 0 ~dP_{\bar{A}}
    = 0.
\end{aligned}
\end{equation*}}
This concludes our proof for the consistency of $\hat\theta_T$.

Now we move on to the asymptotic normality of $\hat\theta_T$. The asymptotic normality (Theorem 11.2 (b) of \citealp{potscher1997dynamic}) was established under assumptions 11.1, 11.2, 11.3 and 11.5 therein. We again show that all of their assumptions are met almost surely conditional on $\overline{A}_T$ under our setting with given assumptions. Their assumption 11.1 (a) is held almost surely by our Definition~\ref{def::Theta} of $\Theta \in \reals^{d+2}$ where $d$ is the dimension of covariates $A_t$, 11.1 (b) by the twice differentiability of the negative log-likelihood contribution $\ell_t$ with respect to $\theta$, 11.1 (c) by our definition of MLE $\hat\theta_T$ as the minimizer of the risk function, 11.1 (d) by the previously shown consistency of $\hat\theta_T$, 11.1(e) by the mixing condition of $Y_t$ in our Assumption~\ref{assmp::mixing}, and 11.1(f) by the second part of our Assumption~\ref{assmp::bounded}. Their assumption 11.3 (a) follows the definition of $\theta^*_T$ as the minimizer of $\Exp[\sum_{t=1}^T \ell_t(\theta) | \overline{A}_T]$, and 11.3 (c, d) follows our Assumption~\ref{assmp::identified_minimizer}. Their assumption 11.3 (b) is satisfied trivially by the absence of nuisance parameter in our setting. % Assumptions 11.1 and 11.3 are trivial based on our setting and the consistency of $\hat\theta_T$ we have shown above. 
\V{Their other assumptions 11.2 and 11.5, namely,
\begin{enumerate}
    \item $\sup_T \frac{1}{T} \sum_{t=1}^T \Exp[\sup_{\theta \in \Theta} \abs{\nabla_\theta^2 \ell_t(\theta)}^{1+\zeta} | \overline{A}_t] < \infty$ for some $\zeta > 0$; and
    \medskip
    \item $\sup_T \sup_t \Exp[\abs{\nabla_\theta \ell_t(\theta^*_T)}^\gamma | \overline{A}_t] < \infty$,
\end{enumerate}
are weaker versions of their assumptions 13.1 and 11.5*, respectively, which are used in their consistency result of the HAC variance estimator. We relegate their almost sure validaitions to the proof of Proposition~\ref{thm::sandwich}.}

\V{As a result, $\hat\theta_T$ satisfies conditional asymptotic normality almost surely: for some collection $\mathcal{U}$ of subsets in $\reals^d$, with probability $1$, 
\begin{equation*}
    \lim_{T \rightarrow \infty} \Pr[T^{1/2} \Upsilon^{-1/2} (\hat\theta - \theta^*) \in U | \overline{A}_T]
    = \Pr[Z_d \in U], ~\forall U \in \mathcal{U}
\end{equation*}
where $Z_d$ is an independent $d$-dimensional standard Normal random variable. By Fubini's theorem,
\begin{equation*}
\begin{aligned}
    \lim_{T \rightarrow \infty} \Pr[T^{1/2} \Upsilon^{-1/2} (\hat\theta - \theta^*) \in U]
    & = \lim_{T \rightarrow \infty} \int \Pr[T^{1/2} \Upsilon^{-1/2} (\hat\theta - \theta^*) \in U | \overline{A}_T] ~dP_{\bar{A}} \\
    & = \int \lim_{T \rightarrow \infty} \int \Pr[T^{1/2} \Upsilon^{-1/2} (\hat\theta - \theta^*) \in U | \overline{A}_T] ~dP_{\bar{A}} \\
    & = \int \Pr[Z_d \in U] ~dP_{\bar{A}}
    = \Pr[Z_d \in U].
\end{aligned}
\end{equation*}
This concludes the proof of asymptotic normality.}

\paragraph{\underline{\bf Proof of Proposition~\ref{thm::sandwich}}}
We use the consistency result of the HAC variance estimator established by \citet{potscher1997dynamic}. The corresponding theorem (Theorem 13.1 (b) therein) uses assumptions 11.1, 11.2, 11.3*, 11.5* and 13.1 therein. We already checked their assumptions 11.1, 11.2 and 11.3 in the proof of \cref{thm::asymp_norm}. 
\V{Their assumption 11.3* is stronger than 11.3 by necessitating $\Exp[\nabla_\theta\ell_t(\theta^*_T)] = 0$ at each $t$, rather than $\sum_t\Exp[\nabla_\theta\ell_{t=1}^T(\theta^*_T)] = 0$. This essentially means that the true mean $\Exp[Y_t|A_t]$ is correctly specified by the model mean $\Exp_{\theta}[Y_t|A_t]$ at $\theta = \theta^*$. That is why our Proposition 5.9 requires the additional assumption. Now it sufficies to validate their assumptions 11.5* and 13.1, namely,
\begin{enumerate}
    \item[(i)] $\nabla_\theta \ell_t(\theta^*)$ is near epoch dependent of size $-2(r-1)/(r-2)$ on an $\alpha$-mixing baseline process with coefficient of size $-2r/(r-2)$;
    \item[(ii)] $\sup_T \sup_t \Exp[\abs{\nabla_\theta \ell_t(\theta^*_T)}^{2\gamma} | \overline{A}_t] < \infty$;
    \medskip
    \item[(iii)] $\sup_T \frac{1}{T} \sum_{t=1}^T \Exp[\sup_{\theta \in \Theta} \abs{\nabla_\theta \ell_t(\theta)}^{2} | \overline{A}_t] < \infty$; and
    \medskip
    \item[(iv)] $\sup_T \frac{1}{T} \sum_{t=1}^T \Exp[\sup_{\theta \in \Theta} \abs{\nabla_\theta^2 \ell_t(\theta)}^{2} | \overline{A}_t] < \infty$,
\end{enumerate} 
in almost sure sense.
% are strengthened versions of 
% their assumptions 11.5 and 11.2, where the same proof is already sufficient to validate them.
}

\V{Part (i) follows our Assumption~\ref{assmp::mixing} about the mixingness of the data points.
For parts (ii) -- (iv), it is sufficient to show that $\nabla_{(\nu, m_t)} \ell_t(\theta^*_T)$ has a finite $2\gamma$-moment and that $\sup_{\theta \in \Theta} \nabla_{(\nu, m_t)} \ell_t(\theta)$ and $\sup_{\theta \in \Theta} \nabla_{(\nu, m_t)}^2 \ell_t(\theta)$ have finite $2$-moments, because 
\begin{equation*}
\begin{aligned}
     & \sup_t \sup_{\theta \in \Theta} \norm{\nabla_\theta \Exp_\theta[I_t | \overline A_{t}]}_\infty < \infty, \\
     & \sup_t \sup_{\theta \in \Theta} \norm{\nabla_\theta^2 \Exp_\theta[I_t | \overline A_{t}]}_\infty < \infty
\end{aligned}
\end{equation*}
and Definition~\ref{def::Theta} implies
\begin{equation*}
\begin{aligned}
     \sup_t \sup_{\theta \in \Theta} \norm{\nabla_\theta m_t}_\infty < \infty
     \textand
     \sup_t \sup_{\theta \in \Theta} \norm{\nabla_\theta^2 m_t}_\infty < \infty.
\end{aligned}
\end{equation*}}

\V{For the Normal distribution, the derivatives are
\begin{equation*}
\begin{aligned}
    \frac{\partial \ell_t (\theta)}{\partial \nu}  
    & = - \frac{1}{\nu} + \frac{1}{\nu^3}(Y_t - m_t)^2, \\
    \frac{\partial \ell_t (\theta)}{\partial m_t}
    & = \frac{1}{\nu^2}(Y_t - m_t), \\
    \frac{\partial^2 \ell_t (\theta)}{\partial \nu^2}  
    & = \frac{1}{\nu^2} - \frac{3}{\nu^4} (Y_t - m_t)^2, \\
    \frac{\partial^2 \ell_t (\theta)}{\partial \nu \partial m_t} 
    & = -\frac{2}{\nu^3}(Y_t - m_t), \\
    \frac{\partial^2 \ell_t (\theta)}{\partial m_t^2}  
    & = -\frac{1}{\nu^2}.
\end{aligned}
\end{equation*}
By our Assumption~\ref{assmp::bounded} and Definition~\ref{def::Theta}, $\nabla_{(\nu, m_t)} \ell_t(\theta^*_T)$ has a finite $2\gamma$-moment, and $\sup_{\theta \in \Theta} \nabla_{(\nu, m_t)} \ell_t(\theta)$ and $\sup_{\theta \in \Theta} \nabla_{(\nu, m_t)}^2 \ell_t(\theta)$ have finite $2$-moments almost surely.}

\V{For the NB distribution, the derivatives are
\begin{equation*}
\begin{aligned}
    \frac{\partial \ell_t (\theta)}{\partial \nu} 
    & = \sum_{k=1}^{Y_t} \frac{1}{k + \nu -1} + \log \nu + 1 - \log(\nu + m_t) - \frac{\nu + Y_t}{\nu + m_t}, \\
    \frac{\partial \ell_t (\theta)}{\partial m_t} 
    & = \frac{Y_t}{m_t} - \frac{\nu + Y_t}{\nu + m_t}, \\
    \frac{\partial^2 \ell_t (\theta)}{\partial \nu^2} 
    & = - \sum_{k=1}^{Y_t} \frac{1}{(k+\nu-1)^2} + \frac{1}{\nu} - \frac{1}{\nu + m_t} + \frac{Y_t - m_t}{(\nu + m_t)^2}, \\
    \frac{\partial^2 \ell_t (\theta)}{\partial \nu \partial m_t} 
    & = \frac{Y_t - m_t}{(r+m_t)^2}, \\
    \frac{\partial^2 \ell_t (\theta)}{\partial m_t^2}  
    & = \frac{r+Y_t}{(r+m_t)^2} - \frac{Y_t}{m_t^2}.
\end{aligned}
\end{equation*}
Based on our observation that
\begin{equation*}
\begin{aligned}
    {\sum_{k=1}^{Y_t} \frac{1}{k + \nu -1}} & \leq \frac{Y_t}{\nu}, \\
    {\sum_{k=1}^{Y_t} \frac{1}{(k+\nu-1)^2}} & \leq \frac{Y_t}{\nu^2},
\end{aligned}
\end{equation*}
Assumption~\ref{assmp::bounded} and Definition~\ref{def::Theta} implies $\nabla_{(\nu, m_t)} \ell_t(\theta^*_T)$ has a finite $2\gamma$-moment and $\sup_{\theta \in \Theta} \nabla_{(\nu, m_t)} \ell_t(\theta)$ and $\sup_{\theta \in \Theta} \nabla_{(\nu, m_t)}^2 \ell_t(\theta)$ have finite $2$-moments almost surely.}

\V{
In sum, the HAC variance estimator $\hat\Upsilon$ is almost surely consistent to $\Upsilon$ conditional on $\overline{A}_T$:
for any $\delta > 0$,
\begin{equation*}
    \lim_{T \rightarrow \infty} \Pr[\norm{\hat\Upsilon - \Upsilon} > \delta | \overline{A}_T] = 0,
\end{equation*}
almost surely. By Fubini's theorem,
\begin{equation*}
\begin{aligned}
    \lim_{T \rightarrow \infty} \Pr[\norm{\hat\Upsilon - \Upsilon} > \delta]
    & = \lim_{T \rightarrow \infty} \int \Pr[\norm{\hat\Upsilon - \Upsilon} > \delta | \overline{A}_T] ~dP_{\bar{A}} \\
    & = \int \lim_{T \rightarrow \infty} \int \Pr[\norm{\hat\Upsilon - \Upsilon} > \delta | \overline{A}_T] ~dP_{\bar{A}} \\
    & = \int 0 ~dP_{\bar{A}}
    = 0.
\end{aligned}
\end{equation*}
This concludes our proof for the consistency of $\hat\Upsilon$.
}

\subsection{Lemmas and Proofs for Section~\ref{sec::shrinkage}}
\label{app::shrinkage}

In this appendix, we provide supplementary lemmas for the theoretical arguments regarding the asymptotic converage of the robust empirical Bayesian confidence regions proposed in \cref{sec::shrinkage}.

\begin{lemma} \label{thm::uniform_continuity_r}
    The collection $\{r_0(u, \chi), r_1(u, \chi), \dots \}$ of functions is uniformly equicontinuous with respect to both $u$ and $\chi$.
\end{lemma}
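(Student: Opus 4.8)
The plan is to rewrite the generic member $r_k$ ($k\ge 0$) of the collection as a single explicit probability and then produce a modulus of continuity that is uniform in $k$. Writing $u\ge 0$ for the first argument (which plays the role of $\norm{b_{[j]}}_2^2$), \cref{eq::r_d-1} gives, for every $k\ge 0$,
\[
r_k(u,\chi)=\Pr\!\big[\,\chi^2_k+(Z-\sqrt u)^2\ge\chi^2\,\big]=\Pr\!\big[\,\norm{Z_{k+1}-c}_2\ge\chi\,\big]=\Pr[G_k\ge\chi],
\]
where $Z_{k+1}\sim\distNorm(0,id)$ on $\reals^{k+1}$, $c$ is any fixed vector with $\norm{c}_2=\sqrt u$, and $G_k\equiv\norm{Z_{k+1}-c}_2$. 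Two ingredients drive the proof: (i) the elementary bound $\abs{\sqrt u-\sqrt{u'}}\le\sqrt{\abs{u-u'}}$ for all $u,u'\ge 0$; and (ii) a uniform density bound -- the law of $G_k$ has a density no larger than a constant $C_0$ that does not depend on $k$ or $u$. For (ii) I would use the Poisson-mixture form of the noncentral chi-square, $\norm{Z_{k+1}-c}_2^2\overset{d}{=}\chi^2_{k+1+2J}$ with $J\sim\mathrm{Pois}(u/2)$, so that $G_k$ given $J$ is a chi variable with $k+1+2J$ degrees of freedom and the density of $G_k$ at $y$ equals $\Exp_J[h_{k+1+2J}(y)]$, where $h_m(y)=y^{m-1}e^{-y^2/2}/(2^{m/2-1}\Gamma(m/2))$; a Stirling estimate yields $C_0\equiv\sup_{m\ge 1}\sup_{y>0}h_m(y)<\infty$, hence the density of $G_k$ is at most $C_0$ for all $k,u$.

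With (i)--(ii) in hand I would estimate a joint modulus of continuity in two moves. Holding $u$ fixed, $\abs{r_k(u,\chi)-r_k(u,\chi')}=\Pr[G_k\in I]$ with $I$ the interval between $\chi$ and $\chi'$, so $\abs{r_k(u,\chi)-r_k(u,\chi')}\le C_0\abs{\chi-\chi'}$ uniformly in $k,u$. Holding $\chi$ fixed, couple $G_k(u)\equiv\norm{Z_{k+1}-\sqrt u\,e_1}_2$ and $G_k(u')\equiv\norm{Z_{k+1}-\sqrt{u'}\,e_1}_2$ through a common $Z_{k+1}$; then $\abs{G_k(u)-G_k(u')}\le\abs{\sqrt u-\sqrt{u'}}\le\sqrt{\abs{u-u'}}$ almost surely, so $\mathbb{I}\{G_k(u)\ge\chi\}$ and $\mathbb{I}\{G_k(u')\ge\chi\}$ can differ only on $\{\,\abs{G_k(u)-\chi}\le\sqrt{\abs{u-u'}}\,\}$, whence $\abs{r_k(u,\chi)-r_k(u',\chi)}\le 2C_0\sqrt{\abs{u-u'}}$ uniformly in $k,\chi$. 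Adding the two bounds,
\[
\abs{r_k(u,\chi)-r_k(u',\chi')}\le C_0\abs{\chi-\chi'}+2C_0\sqrt{\abs{u-u'}}\qquad\text{for every }k\ge 0,
\]
and since the right-hand side tends to $0$ as $(u,\chi)\to(u',\chi')$ and is free of $k$, the family $\{r_0,r_1,\dots\}$ is uniformly equicontinuous.

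The single substantive step is the uniform density bound $C_0<\infty$; this is a routine Stirling computation, and the case $k=0$ can in any event be checked directly, since $G_0=\abs{Z-\sqrt u}$ has density $\phi(y-\sqrt u)+\phi(y+\sqrt u)\le 2\phi(0)$ where $\phi$ is the standard normal density. Everything else is a short coupling argument. I would also note that the estimate above is valid on all of $[0,\infty)^2$, so no compactness restriction on $u$ or $\chi$ is needed; an alternative to the Poisson-mixture route would be to bound the density of $\norm{Z_{k+1}-c}_2^2$ by conditioning on its component along $c$, but the mixture makes the joint uniformity in $k$ and $u$ most transparent.
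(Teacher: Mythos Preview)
Your proof is correct, and it takes a genuinely different route from the paper's. The paper establishes uniform continuity of $r_0$ first (essentially by the bounded density of the folded normal) and then lifts to $r_{d-1}$ by writing $r_{d-1}(u,\chi)=\Exp_{\chi_{d-1}}[r_0(u,\sqrt{\chi^2-\chi_{d-1}^2})]$, observing that the modulus of continuity in $(u,\chi^2)$ is preserved under this averaging because $(\chi^2-\chi_{d-1}^2)-(\chi'^2-\chi_{d-1}^2)=\chi^2-\chi'^2$. You instead attack all $r_k$ simultaneously by bounding the density of the noncentral chi variable $G_k=\norm{Z_{k+1}-c}_2$ uniformly in $k$ and $u$, via the Poisson-mixture representation $G_k\overset{d}{=}\chi_{k+1+2J}$ with $J\sim\mathrm{Pois}(u/2)$ and a Stirling bound on $\sup_m\sup_y h_m(y)$; Lipschitz in $\chi$ then drops out immediately, and a coupling through a common $Z_{k+1}$ handles $u$. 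What your approach buys is an explicit, quantitative modulus of continuity ($C_0\abs{\chi-\chi'}+2C_0\sqrt{\abs{u-u'}}$), stated directly in $\chi$ rather than $\chi^2$, and with the uniformity in $k$ and $u$ transparent. The paper's argument is shorter and avoids the Stirling computation, but is existential and phrased in $\chi^2$. Both are sound; yours is the more informative.
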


\begin{proof}
    Due to the uniform continuity of the Normal distribution, it is easy to show that $r_0$ is uniformly continuous. So for any $\epsilon > 0$, we can find $\delta > 0$ such that $\abs{u - u'} < \delta$ and $\abs{\chi^2 - \chi'^2} < \delta$ imply $\abs{r_0(u, \chi) - r_0(u', \chi')} < \epsilon$. 
    
    For any $d > 0$, based on \cref{eq::r_d-1}, $r_{d-1}(u,\chi) = \Exp_{\chi_{d-1}}[r_0(u, \sqrt{\chi^2 - \chi_{d-1}^2})]$, where $\chi_{d-1}^2$ is a $\chi^2$ random variable with degree of freedom $d-1$. Hence, for the $\delta$ defined above, $\abs{u - u} < \delta$ and $\abs{\chi^2 - \chi'^2} < \delta$ imply
    \begin{equation*} 
    \begin{aligned}
        \abs{r_{d-1}(u, \chi) - r_{d-1}(u', \chi')}
        & = \abs{\Exp[r_0(u, \sqrt{\chi^2 - \chi_{d-1}^2}) | \chi_{d-1}] - \Exp[r_0(u', \sqrt{\chi'^2 - \chi_{d-1}^2}) } \\
        & \leq \Exp_{\chi_{d-1}}[\abs{r_0(u, \sqrt{\chi^2 - \chi_{d-1}^2}) ] - \Exp_{\chi_{d-1}}[r_0(u', \sqrt{\chi'^2 - \chi_{d-1}^2})} ] \\
        & \leq \Exp_{\chi_{d-1}}[\epsilon] = \epsilon
    \end{aligned}
    \end{equation*}
    because $(\sqrt{\chi^2 - \chi_{d-1}^2})^2 - (\sqrt{\chi'^2 - \chi_{d-1}^2})^2 < \delta$ almost surely.
\end{proof}

Suppose that 
\begin{equation*}
    \mathcal{M} \equiv \left\{(m^{(2)}, m^{(4)}): m^{(2)} = \int u dF(u), m^{(4)} = \int u^2 dF(u), F \in \mathcal{F} \right\}.
\end{equation*} 
In light of Hölder's inequality, it is easy to see that
\begin{equation*}
    \mathcal{M} = \{(m^{(2)}, m^{(4)}) \in (0,\infty) \times (0, \infty): m^{(4)} \geq m^{(2)2} \} \cup \{(0,0)\}.
\end{equation*}

\V{\begin{lemma} \label{thm::continuity_rho}
    $\rho_{m^{(2)},m^{(4)}}(\chi)$ is continuous in $[0, \infty)$. Furthermore, $\rho_{m^{(2)}, m^{(4)}}(\chi)$ is continuous with respect to $(m^{(2)}, m^{(4)})$ in $\mathcal{M}$.
\end{lemma}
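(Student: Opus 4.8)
The plan is to prove the two claims separately: continuity in $\chi$ follows almost immediately from \cref{thm::uniform_continuity_r}, while continuity in $(m^{(2)},m^{(4)})$ splits into an easy concavity argument on the interior of $\mathcal{M}$ and a short concentration argument on its boundary.

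For continuity in $\chi$, I would fix $(m^{(2)},m^{(4)})\in\mathcal{M}$. By \cref{thm::uniform_continuity_r} the family $\{r_{d-1}(\cdot,\cdot)\}$ is uniformly equicontinuous, so given $\epsilon>0$ there is $\delta>0$ with $\sup_{u\ge 0}|r_{d-1}(u,\chi)-r_{d-1}(u,\chi')|\le\epsilon$ whenever $|\chi-\chi'|<\delta$. Averaging this bound against any feasible $F$ (harmless since $r_{d-1}\le 1$) gives $|\rho_{m^{(2)},m^{(4)}}(\chi)-\rho_{m^{(2)},m^{(4)}}(\chi')|\le\epsilon$, so $\rho_{m^{(2)},m^{(4)}}$ is in fact uniformly continuous on $[0,\infty)$.

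For continuity in $(m^{(2)},m^{(4)})$, fix $\chi$. Note that $\rho$ is a well-defined $[0,1]$-valued function on $\mathcal{M}$ and that, by the Hölder characterization quoted just before the lemma, $\mathcal{M}$ is convex with interior $\{m^{(4)}>m^{(2)2}\}$. The first step is to observe that $\rho(\cdot,\cdot)$ is concave on $\mathcal{M}$: if $F_0$ and $F_1$ are feasible for $(m^{(2)}_0,m^{(4)}_0)$ and $(m^{(2)}_1,m^{(4)}_1)$ respectively, then $\lambda F_0+(1-\lambda)F_1$ is feasible for the corresponding convex combination of moment pairs, and $F\mapsto\Exp_F[r_{d-1}(u,\chi)]$ is affine in $F$; taking $F_0,F_1$ to be near-optimizers and sending the optimality gap to $0$ yields concavity. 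Since a finite concave function on a convex subset of $\reals^2$ is continuous (indeed locally Lipschitz) on its interior, $\rho$ is continuous on $\{m^{(4)}>m^{(2)2}\}$, and this already covers continuity at interior points against arbitrary approaching sequences. The second step handles the boundary $m^{(4)}=m^{(2)2}$ of $\mathcal{M}$ (which includes $(0,0)$): there a vanishing variance forces every feasible measure to be $\delta_{m^{(2)}}$, so $\rho_{m^{(2)},m^{(2)2}}(\chi)=r_{d-1}(m^{(2)},\chi)$. If $(m^{(2)}_n,m^{(4)}_n)\to(m^{(2)},m^{(2)2})$ inside $\mathcal{M}$, then every feasible $F_n$ has variance $m^{(4)}_n-m^{(2)2}_n\to 0$ and hence, by Chebyshev, concentrates near $m^{(2)}_n\to m^{(2)}$; combining this with $r_{d-1}\le 1$ and the continuity of $r_{d-1}(\cdot,\chi)$ from \cref{thm::uniform_continuity_r} gives $\sup_{F_n\ \mathrm{feasible}}|\Exp_{F_n}[r_{d-1}(u,\chi)]-r_{d-1}(m^{(2)},\chi)|\to 0$, whence $\rho_{m^{(2)}_n,m^{(4)}_n}(\chi)\to r_{d-1}(m^{(2)},\chi)=\rho_{m^{(2)},m^{(2)2}}(\chi)$.

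The part I expect to require the most care is the behaviour at the boundary of $\mathcal{M}$: concavity by itself does not force continuity there, and the supremum defining $\rho$ is taken over a family of probability measures on $[0,\infty)$ that is not weakly compact (atoms can escape to infinity with vanishing mass). What saves the argument is that approaching the boundary forces the variance, and hence every feasible measure, to collapse, so the supremum reduces to evaluating the continuous function $r_{d-1}(\cdot,\chi)$ at a single point; away from the boundary, concavity does all the work and no compactness is needed.
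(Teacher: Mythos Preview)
Your proof is correct and follows the paper's overall structure (continuity in $\chi$ via \cref{thm::uniform_continuity_r}; interior of $\mathcal{M}$ handled by a convexity/concavity observation; boundary handled by a concentration argument), but the interior step is genuinely different. The paper invokes the duality theorem of \citet{smith1995generalized} to write $\rho_{m^{(2)},m^{(4)}}(\chi)$ as an infimum of affine functions of $(m^{(2)},m^{(4)})$, and then appeals to the continuity of concave functions on open convex sets. You obtain concavity directly and more elementarily by observing that mixtures $\lambda F_0+(1-\lambda)F_1$ are feasible for the convex combination of moment pairs, so the external reference is not needed. For the boundary, both arguments are the same in spirit: the paper works explicitly with the representation $r_{d-1}(u,\chi)=\Pr[\chi_{d-1}^2+(Z-\sqrt{u})^2\ge\chi^2]$ and Markov's inequality to get two-sided bounds in terms of $\rho_{m^{(2)},m^{(2)2}}(\sqrt{\chi^2\pm(\epsilon^{1/4}+\epsilon^{1/2})})$, then closes with continuity in $\chi$; your Chebyshev-plus-uniform-continuity argument is a cleaner abstraction of the same idea. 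Your route is slightly more self-contained; the paper's route makes the connection to the generalized moment problem explicit, which is also what motivates the numerical scheme used to compute $\rho$.
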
}

\begin{proof}
    For the continuity with respect to $\chi$, 
    \begin{equation*}
        \abs{\rho_{m^{(2)},m^{(4)}}(\chi_1) - \rho_{m^{(2)},m^{(4)}}(\chi_2)}
        \leq \sup_{F \in \mathcal{F}_{m^{(2)},m^{(4)}}} \int \abs{r_{d-1}(u, \chi_1) - r_{d-1}(u, \chi_2)} ~dF(u),
    \end{equation*}
    where $\mathcal{F}_{m^{(2)}, m^{(4)}} \equiv \{F \in \mathcal{F}: \int u dF(u) = m^{(2)}, \int u^2 dF(u) = m^{(4)}\}$. Due to the uniform continuity of $r_{d-1}$, 
    \begin{equation*}
        \sup_{u \in [0, \infty)} \abs{r_{d-1}(u, \chi_1) - r_{d-1}(u, \chi_2)}
        \leq C \abs{\chi_1 - \chi_2}
    \end{equation*}
    for some constant $C > 0$. This shows the continuity with respect to $\chi$ in $[0, \infty)$.
    
    For the continuity with respect to $(m^{(2)}, m^{(4)})$, we first consider the case where $m^{(4)} > m^{(2)2} > 0$. Then, $(m^{(2)}, m^{(4)})$ is in the interior of $\mathcal{M}$. So by the duality theorem in \citet{smith1995generalized}, for $(m^{(2)'}, m^{(4)'})$ in a small enough neighborhood of $(m^{(2)}, m^{(4)})$, 
    \begin{equation*}
    \begin{aligned}
        & \rho_{m^{(2)'}, m^{(4)'}}(\chi) = \inf_{\lambda_0, \lambda_1, \lambda_2} \lambda_0 + \lambda_1 m^{(2)'} + \lambda_2 m^{(4)'} \\
        & \text{s.t.} \quad
        \lambda_0 + \lambda_1 u + \lambda_1 u^2 \geq r_{d-1}(u, \chi)
        \quad \text{for all} \quad u \in [0,\infty),
    \end{aligned}
    \end{equation*}
    and the infimum is finite. As the infimum of the collection of affine functions, $\rho_{m^{(2)}, m^{(4)}}(\chi)$ is a convex function of $(m^{(2)}, m^{(4)})$ in the neighborhood and hence is continuous.
    
    Now, consider the case where $m^{(4)} = m^{(2)2} \geq 0$. Then for $F \in \mathcal{F}_{m^{(2)}, m^{(4)}}$, $u \sim F$ is almost surely $m^{(2)}$. As a result, $\rho_{m^{(2)}, m^{(4)}}(\chi) = r_{d-1}(m^{(2)}, \chi)$ and is continuous in $\partial\mathcal{M} \equiv \{(m^{(2)}, m^{(4)}) \in \mathcal{M}: m^{(4)} = m^{(2)2} \geq 0\}$. It is now sufficient to show that $\rho_{m^{(2)}, m^{(4)}}(\chi)$ is continuous with respect to $m^{(4)}$ at $m^{(4)} = m^{(2)2}$. For any $\epsilon > 0$, suppose that $u \sim F \in \mathcal{F}_{m^{(2)}, m^{(2)2} + \epsilon^2}$. Then,
    \begin{equation*}
        \Exp[(u-m^{(2)})^2]
        = \Exp[u^2] - 2\Exp[u]m^{(2)} + m^{(2)2}
        = (m^{(2)2}+ \epsilon^2) - 2m^{(2)2} + m^{(2)2}
        = \epsilon^2.
    \end{equation*}
    By Markov's inequality, $\Pr[\abs{u - m^{(2)}} > \sqrt{\epsilon}] \leq \epsilon$, and
    \begin{equation*}
    \begin{aligned}
        & \Pr[\chi_{d-1}^2 + (Z - \sqrt{u})^2 \geq \chi^2] \\
        & \leq \Pr[\chi_{d-1}^2 + (Z-\sqrt{u})^2 \geq \chi^2, \abs{u-m^{(2)}} \leq \sqrt{\epsilon}]
        + \Pr[\abs{u-m^{(2)}} > \sqrt{\epsilon}] \\
        & \leq \Pr[\chi_{d-1}^2 + (Z-\sqrt{u})^2 \geq \chi^2, \abs{(Z-\sqrt{u})^2 - (Z - m^{(2)})^2} \leq \epsilon^{1/4} + \epsilon^{1/2}]
        + \epsilon \\
        & \leq \Pr[\chi_{d-1}^2 + (Z-m^{(2)})^2 \geq \chi^2 - \epsilon^{1/4} - \epsilon^{1/2}]
        + \epsilon \\
        & = \rho_{m^{(2)},m^{(2)2}}\left(\sqrt{\chi^2 - \epsilon^{1/4} - \epsilon^{1/2}}\right) + \epsilon,
    \end{aligned}
    \end{equation*}
    where $\chi_{d-1}^2$ and $Z$ are a $\chi^2$ random variable with degree of freedom $d-1$ and a univariate standard Gaussian random variable. 
    Taking the supremum over $F \in \mathcal{F}_{m^{(2)},m^{(4)}}$, we get $\rho_{m^{(2)},m^{(2)2}+\epsilon^2}(\chi) \leq \rho_{m^{(2)},m^{(2)2}}(\sqrt{\chi^2-\epsilon^{1/4}-\epsilon^{1/2}}) + \epsilon$.
    Similarly, we get $\rho_{m^{(2)},m^{(2)2}+\epsilon^2}(\chi) \geq \rho_{m^{(2)},m^{(2)2}}(\sqrt{\chi^2+\epsilon^{1/4}+\epsilon^{1/2}}) - \epsilon$. Because $\rho_{m^{(2)}, m^{(2)2}}(\chi)$ is continuous with respect to $\chi$, we get $\lim_{\epsilon \rightarrow 0+} \rho_{m^{(2)},m^{(2)2}+\epsilon}(\chi) = \rho_{m^{(2)},m^{(2)2}}(\chi)$. This proves the desired claim.
\end{proof}

\V{\begin{lemma} \label{thm::uniform_continuity_rho}
    Let $\mathcal{M}^o$ be a compact subset of $\mathcal{M}$. Then
    $\lim_{\chi \rightarrow \infty} \sup_{\mathcal{M}^o} \rho_{m^{(2)},m^{(4)}}(\chi) = 0$,
    and $\rho_{m^{(2)},m^{(4)}}(\chi)$ is uniformly continuous with respect to $(\chi, m^{(2)}, m^{(4)})$ in $[0,\infty) \times \mathcal{M}^o$.
\end{lemma}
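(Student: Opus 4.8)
The plan is to prove the two assertions in turn, leaning on the uniform equicontinuity of $\{r_0,r_1,\dots\}$ from \cref{thm::uniform_continuity_r} and the pointwise continuity of $\rho$ from \cref{thm::continuity_rho}; throughout, $d$ is the fixed dimension of $\theta$.

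First I would establish the limit via Markov's inequality. For a distribution $F$ on $[0,\infty)$ with $\Exp_F[u]=m^{(2)}$ and $\Exp_F[u^2]=m^{(4)}$, let $U\sim F$ be independent of the standard Gaussian $Z$ and of $\chi_{d-1}^2$; then $\Exp_F[r_{d-1}(u,\chi)] = \Pr[\chi_{d-1}^2+(Z-\sqrt{U})^2\ge\chi^2]$ by \cref{eq::r_d-1}. Using $(Z-\sqrt U)^2\le 2Z^2+2U$ and Markov's inequality, this is at most $((d-1)+2+2m^{(2)})/\chi^2=(d+1+2m^{(2)})/\chi^2$. Since $\mathcal M^o$ is compact and $(m^{(2)},m^{(4)})\mapsto m^{(2)}$ is continuous, $m^{(2)}\le M$ on $\mathcal M^o$ for some finite $M$; taking the supremum over $F$ and over $(m^{(2)},m^{(4)})\in\mathcal M^o$ gives $\sup_{\mathcal M^o}\rho_{m^{(2)},m^{(4)}}(\chi)\le (d+1+2M)/\chi^2\to 0$.

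Next I would prove uniform continuity by peeling off the tail. Given $\epsilon>0$, use the first part to pick $\chi_0$ with $\sup_{\mathcal M^o}\rho\le\epsilon/3$ on $[\chi_0,\infty)$, so on that strip any two values of $\rho$ differ by at most $\epsilon$. It then suffices to obtain uniform continuity on the compact box $[0,\chi_0+1]\times\mathcal M^o$; by Heine--Cantor, joint continuity of $\rho$ on that box is enough, and gluing the two regions with any modulus $\delta\le 1$ finishes (if $\min(\chi,\chi')<\chi_0$ both points lie in the box, otherwise both lie in the tail strip). Joint continuity I would get from the triangle inequality $|\rho_{m^{(2)},m^{(4)}}(\chi)-\rho_{n^{(2)},n^{(4)}}(\chi')| \le |\rho_{m^{(2)},m^{(4)}}(\chi)-\rho_{m^{(2)},m^{(4)}}(\chi')| + |\rho_{m^{(2)},m^{(4)}}(\chi')-\rho_{n^{(2)},n^{(4)}}(\chi')|$: since $\rho$ is a supremum over admissible $F$ of $\Exp_F[r_{d-1}(\cdot,\chi)]$, the first term is at most $\sup_{u\ge0}|r_{d-1}(u,\chi)-r_{d-1}(u,\chi')|$, which tends to $0$ as $\chi'\to\chi$ by \cref{thm::uniform_continuity_r} with a modulus independent of $(m^{(2)},m^{(4)})$, while the second term tends to $0$ as $(n^{(2)},n^{(4)})\to(m^{(2)},m^{(4)})$ by \cref{thm::continuity_rho}.

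The main obstacle is that \cref{thm::continuity_rho} gives continuity of $\rho$ in $(m^{(2)},m^{(4)})$ only pointwise in $\chi$, not uniformly, while the $\chi$-range is unbounded, so one cannot simply invoke Heine--Cantor on the full domain. Both difficulties are resolved above: the unboundedness by the tail estimate of the first part, and the missing joint uniformity by upgrading separate continuity to genuine joint continuity using the $u$-uniform (hence $(m^{(2)},m^{(4)})$-uniform) $\chi$-modulus supplied by \cref{thm::uniform_continuity_r}, after which Heine--Cantor applies on the remaining compact box. A minor technical point to watch is that \cref{thm::uniform_continuity_r} is naturally phrased with increments of $\chi^2$; restricted to the bounded window $[0,\chi_0+1]$ this is equivalent to increments of $\chi$, which is all that the argument uses.
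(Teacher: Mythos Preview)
Your proposal is correct and follows essentially the same approach as the paper: a Markov bound for the tail (the paper uses the exact computation $\Exp[\chi_{d-1}^2+(Z-\sqrt u)^2]=d+u$ rather than your looser $(Z-\sqrt U)^2\le 2Z^2+2U$, but either suffices), followed by splitting $[0,\infty)$ into a compact $\chi$-window and a tail strip, invoking Heine--Cantor on the former and smallness on the latter. If anything you are more careful than the paper, which simply asserts joint continuity on the compact box from \cref{thm::continuity_rho} (a separate-continuity statement) without the bridging argument via the $(m^{(2)},m^{(4)})$-uniform $\chi$-modulus that you supply.
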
}

\begin{proof}
    The first claim follows by Markov's inequality and compactness of $\mathcal{M}^o$. 
    Specifically, because $\Exp_{u \sim F}[\chi_{d-1}^2 + (Z - \sqrt{u})^2] = d + \Exp_{u \sim F}[u]$,
    by Markov's inequality
    \begin{equation*}
    \begin{aligned}
        \rho_{m^{(2)},m^{(4)}}(\chi)
        & = \sup_{F \in \mathcal{F}_{m^{(2)},m^{(4)}}}
        \Pr_{u \sim F}[\chi_{d-1}^2 + (Z - \sqrt{u})^2 \geq \chi^2] \\
        & \leq \sup_{F \in \mathcal{F}_{m^{(2)},m^{(4)}}} \frac{1}{\chi^2} \Exp_{u \sim F}[\chi_{d-1}^2 + (Z - \sqrt{u})^2] \\
        & = \sup_{F \in \mathcal{F}_{m^{(2)},m^{(4)}}} \frac{1}{\chi^2} (d + \Exp_{u \sim F}[u])
        = \frac{d + m^{(2)}}{\chi^2}.
    \end{aligned}
    \end{equation*}
    So the compactness of $\mathcal{M}^o$ implies that $\sup_{\mathcal{M}^o} m^{(2)} < \infty$, and $\lim_{\chi \rightarrow \infty} \sup_{\mathcal{M}^o} \rho_{m^{(2)}, m^{(4)}}(\chi) = 0$. 

    Now given $\epsilon > 0$, let $\bar\chi$ be large enough so that $\chi \geq \bar\chi \implies \rho_{m^{(2)}, m^{(4)}}(\chi) < \epsilon$. By Lemma~\ref{thm::continuity_rho}, $\rho_{m^{(2)}, m^{(4)}}(\chi)$ is continuous on $[0, \bar\chi+1] \times \mathcal{M}^o$. Suppose that $(\chi_k, m^{(2)}_k, m^{(4)}_k) \in [0, \infty) \times \mathcal{M}^o$ for $k=1,2$. If both $\chi_1$ and $\chi_2$ are in $[0, \bar\chi+1]$,
    due to the compactness of $[0, \bar\chi+1] \times \mathcal{M}^o$, there exists $\delta < 1$ such that
    \begin{equation*}
        \max\{\abs{\chi_1 - \chi_2}, \abs{m^{(2)}_1, m^{(2)}_2}, \abs{m^{(4)}_1, m^{(4)}_1}\} \leq \delta \implies \abs{\rho_{m^{(2)}_1, m^{(4)}_1}(\chi_1) - \rho_{m^{(2)}-2, m^{(4)}_2}(\chi_2)} \leq \epsilon.
    \end{equation*}
    On the other hand, if (without loss of generality) $\chi_1 > \bar\chi+1$, 
    \begin{equation*}
    \begin{aligned}
        & \max\{\abs{\chi_1 - \chi_2}, \abs{m^{(2)}_1, m^{(2)}_2}, \abs{m^{(4)}_1, m^{(4)}_1}\} \leq \delta 
        \implies \chi_2 > \bar\chi \\
        & \implies \abs{\rho_{m^{(2)}_1, m^{(4)}_1}(\chi_1) - \rho_{m^{(2)}-2, m^{(4)}_2}(\chi_2)} 
        \leq \abs{\rho_{m^{(2)}_1, m^{(4)}_1}(\chi_1)} + \abs{\rho_{m^{(2)}-2, m^{(4)}_2}(\chi_2)}
        \leq 2\epsilon.
    \end{aligned}
    \end{equation*}
    In sum, $\rho_{m^{(2)},m^{(4)}}(\chi)$ is uniformly continuous with respect to $(\chi, m^{(2)}, m^{(4)})$ in $[0,\infty) \times \mathcal{M}^o$.
\end{proof}

% \V{Next, we state four assumptions that are needed for \cref{thm::average_coverage}}

%=================================================
\paragraph{\underline{\bf Proof of \cref{thm::average_coverage}}} 

    First, we show that for any deterministic $\chi_{[1]}, \dots, \chi_{[N]}$, subset $\mathcal{X}$ of $\mathcal{A}$ and $\epsilon > 0$, there exists $\bar{N}_\mathcal{X}$ such that $N_\mathcal{X} \geq \bar{N}_\mathcal{X}$ implies
    \V{\begin{equation} \label{eq::conv_to_r_d-1}
        \frac{1}{N_\mathcal{X}} \left(
        \begin{aligned}
            & \sum_{j \in \mathcal{I}_\mathcal{X}} \Pr[\norm{T_{[j]}^{1/2} (\hat{W}_{[j]} \hat\Upsilon_{[j]}^{1/2})^{-1}(\tilde\theta_{[j]} - \theta_{[j]})}_2 > \chi_{[j]} 
            | \theta_{[1,\dots,N]}, \Upsilon_{[1,\dots,N]} ] \\
            & - \sum_{j \in \mathcal{I}_\mathcal{X}} r_{d-1}(\norm{b_{[j]}}_2^2, \chi_{[j]})
        \end{aligned}
        \right)
        < \epsilon,
    \end{equation}}
    where \V{$\hat{W}_{[j]} \equiv \hat\Phi^{(2)} (\hat\Phi^{(2)} + T_{[j]}^{-1} \hat\Upsilon_{[j]})^{-1}$}, $\mathcal{I}_\mathcal{X} \equiv \{j: \Upsilon_{[j]} \in \mathcal{X}\}$ and $N_\mathcal{X} \equiv \abs{\mathcal{I}_\mathcal{X}}$. For the first term,
    \V{\begin{equation*}
    \begin{aligned}
        T_{[j]}^{1/2} (\hat{W}_{[j]} \hat\Upsilon_{[j]}^{1/2})^{-1}(\tilde\theta_{[j]} - \theta_{[j]})
        & = T_{[j]}^{1/2}(\hat{W}_{[j]} \hat\Upsilon_{[j]}^{1/2})^{-1}(\hat\theta_o + \hat{W}_{[j]}(\hat\theta_{[j]} - \hat\theta_o) - \theta_{[j]}) \\
        & = T_{[j]}^{1/2} \hat\Upsilon_{[j]}^{-1/2} (\hat\theta_{[j]} - \theta_{[j]}) 
        + T_{[j]}^{1/2} (\hat{W}_{[j]} \hat\Upsilon_{[j]}^{1/2})^{-1} (id - \hat{W}_{[j]}) (\hat\theta_o - \theta_{[j]}) \\
        & = T_{[j]}^{1/2} \hat\Upsilon_{[j]}^{-1/2} (\hat\theta_{[j]} - \theta_{[j]}) 
        + \hat{b}_{[j]}, \\
    \end{aligned}
    \end{equation*}}
    where \V{$\hat{b}_{[j]} \equiv - T_{[j]}^{-1/2} \hat\Upsilon_{[j]}^{1/2} \hat\Phi^{(2)-1} (\theta_{[j]} - \hat\theta_o)$}. 
    For any $\delta > 0$, \V{$\frac{1}{N_\mathcal{X}} \sum_{j \in \mathcal{I}_\mathcal{X}} \mathbb{I}\{\norm{T_{[j]}^{1/2} (\hat{W}_{[j]} \hat\Upsilon_{[j]}^{1/2})^{-1}(\tilde\theta_{[j]} - \theta_{[j]})}_2 > \chi_{[j]}\}$} is upper bounded by
    \begin{equation*}
    \begin{aligned}
        & \frac{1}{N_\mathcal{X}} \sum_{j \in \mathcal{I}_\mathcal{X}} \mathbb{I}\{\norm{T_{[j]}^{1/2}(\hat{W}_{[j]} \hat\Upsilon_{[j]}^{1/2})^{-1}(\tilde\theta_{[j]} - \theta_{[j]}) - \hat{b}_{[j]} + b_{[j]}}_2 > \chi_{[j]} - \delta\} \\
        & + \frac{1}{N_\mathcal{X}} \sum_{j \in \mathcal{I}_\mathcal{X}} \mathbb{I}\{\norm{\hat{b}_{[j]} - b_{[j]}}_2 \geq \delta \}.
    \end{aligned}
    \end{equation*}
    Because Assumption~\ref{assmp::asymp_normality} implies that \V{$(T_{[j]}^{1/2} \hat{W}_{[j]} \hat\Upsilon_{[j]}^{1/2})^{-1}(\tilde\theta_{[j]} - \theta_{[j]}) - \hat{b}_{[j]} = T_{[j]}^{1/2} \hat\Upsilon_{[j]}^{-1/2} (\hat\theta_{[j]} - \theta_{[j]}) \overset{d}{\rightarrow} Z_d$} conditional on $\theta_{[1,\dots,N]}$ and $\Upsilon_{[1,\dots,N]}$, where $Z_d$ is the $d$-dimensional standard Normal random variable, there exists $\bar{N}_\mathcal{X}^{(1)}$ such that $N_\mathcal{X} \geq \bar{N}_\mathcal{X}^{(1)}$ implies
    \begin{equation*}
    \begin{aligned}
        & \frac{1}{N_\mathcal{X}} \sum_{j \in \mathcal{I}_\mathcal{X}} \Pr[\norm{T_{[j]}^{1/2}(\hat{W}_{[j]} \hat\Upsilon_{[j]}^{1/2})^{-1}(\tilde\theta_{[j]} - \theta_{[j]}) - \hat{b}_{[j]} + b_{[j]}}_2 > \chi_{[j]} - \delta 
        | \theta_{[1,\dots,N]}, \Upsilon_{[1,\dots,N]} ] \\
        & - \frac{1}{N_\mathcal{X}} \sum_{j \in \mathcal{I}_\mathcal{X}} r_{d-1}(b_{[j]}, \chi_{[j]} - \delta) 
        \leq \epsilon.
    \end{aligned}
    \end{equation*}
    On the other hand, by Assumption~\ref{assmp::var_space} and Markov's inequality, we can take $C > 0$ such that
    \begin{equation*}
        \frac{1}{N_\mathcal{X}} \sum_{j \in \mathcal{I}_\mathcal{X}} \mathbb{I}\{\norm{\theta_{[j]}}_2 > C\} \leq \epsilon.
    \end{equation*}
    Then,
    \begin{equation*}
    \begin{aligned}
        & \frac{1}{N_\mathcal{X}} \sum_{j \in \mathcal{I}_\mathcal{X}} \Pr[\norm{\hat{b}_{[j]} - b_{[j]}}_2 \geq \delta | \theta_{[1,\dots,N]}, \Upsilon_{[1,\dots,N]} ] \\
        & \leq \frac{1}{N_\mathcal{X}} \sum_{j \in \mathcal{I}_\mathcal{X}} \Pr[\norm{\hat{b}_{[j]} - b_{[j]}}_2 \geq \delta | \theta_{[1,\dots,N]}, \Upsilon_{[1,\dots,N]} ] \mathbb{I}\{\norm{\theta_{[j]}}_2 \leq C\}
        + \frac{1}{N_\mathcal{X}} \sum_{j \in \mathcal{I}_\mathcal{X}} \mathbb{I}\{\norm{\theta_{[j]}}_2 > C\} \\
        & \leq \frac{1}{N_\mathcal{X}} \sum_{j \in \mathcal{I}_\mathcal{X}} \Pr[\norm{\hat{b}_{[j]} - b_{[j]}}_2 \geq \delta | \theta_{[1,\dots,N]}, \Upsilon_{[1,\dots,N]} ] \mathbb{I}\{\norm{\theta_{[j]}}_2 \leq C\}
        + \epsilon.
    \end{aligned}
    \end{equation*}
    The eigenvalue conditions in Assumption~\ref{assmp::var_space} imply that $\norm{\hat{b}_{[j]} - b_{[j]}}_2 \leq T_{[j]}^{-1/2} \{ C \lambda_{\max,*} ( \norm{\hat{\Upsilon}_{[j]}^{1/2} - \Upsilon_{[j]}^{1/2}}_\mathrm{op} + \norm{\hat\Phi^{(2)-1} - \Phi^{(2)-1}}_\mathrm{op} ) + \lambda_{\max,*}^2 \norm{\hat\theta_o - \theta_o}_2\}$ for $j$ satisfying $\norm{\theta_{[j]}}_2 \leq C$, and Assumption~\ref{assmp::moment_consistency} implies that for some $\bar{N}_\mathcal{X}^{(2)}$, 
    \begin{equation*}
        N_\mathcal{X} \geq \bar{N}_\mathcal{X}^{(2)} 
        \implies \frac{1}{N_\mathcal{X}} \sum_{j \in \mathcal{I}_\mathcal{X}} \Pr[\norm{\hat{b}_{[j]} - b_{[j]}}_2 \geq \delta | \theta_{[1,\dots,N]}, \Upsilon_{[1,\dots,N]}]
        \leq \epsilon.
    \end{equation*}
    In summary, for $N_\mathcal{X} \geq \max\{\bar{N}_\mathcal{X}^{(1)}, \bar{N}_\mathcal{X}^{(2)}\}$,
    \begin{equation*}
    \begin{aligned}
        & \frac{1}{N_\mathcal{X}} \sum_{j \in \mathcal{I}_\mathcal{X}} \Pr[ \norm{T_{[j]}^{1/2} (\hat{W}_{[j]} \hat\Upsilon_{[j]}^{1/2})^{-1}(\tilde\theta_{[j]} - \theta_{[j]})}_2 > \chi_{[j]} | \theta_{[1,\dots,N]}, \Upsilon_{[1,\dots,N]}] \\
        & \leq \frac{1}{N_\mathcal{X}} \sum_{j \in \mathcal{I}_\mathcal{X}} r_{d-1}(\norm{b_{[j]}}_2^2, \chi_{[j]} - \delta) + 3\epsilon.
    \end{aligned}
    \end{equation*}
    Similarly, there exists $\bar{N}_\mathcal{X}$ such that $N_\mathcal{X} \geq \bar{N}_\mathcal{X}$ implies
    \begin{equation*}
    \begin{aligned}
        & \frac{1}{N_\mathcal{X}} \sum_{j \in \mathcal{I}_\mathcal{X}} \Pr[ \norm{T_{[j]}^{1/2} (\hat{W}_{[j]} \hat\Upsilon_{[j]}^{1/2})^{-1}(\tilde\theta_{[j]} - \theta_{[j]})}_2 > \chi_{[j]} | \theta_{[1,\dots,N]}, \Upsilon_{[1,\dots,N]}] \\
        & \geq \frac{1}{N_\mathcal{X}} \sum_{j \in \mathcal{I}_\mathcal{X}} r_{d-1}(\norm{b_{[j]}}_2^2, \chi_{[j]} + \delta) - 3\epsilon.
    \end{aligned}
    \end{equation*}
    Noting that $r_0, r_1, \dots$ are uniformly equicontinuous, we proved the desired claim.
    % \begin{equation*}
    % \begin{aligned}
    %     & \lim_{N \rightarrow \infty} \frac{1}{N_\mathcal{X}} \sum_{j \in \mathcal{I}_\mathcal{X}} \Pr[\norm{ T_{[j]}^{1/2} (\hat{W}_{[j]} \hat\Upsilon_{[j]}^{1/2})^{-1}(\tilde\theta_{[j]} - \theta_{[j]})}_2 > \chi_{[j]} | \theta_{[1,\dots,N]}, \Upsilon_{[1,\dots,N]}] \\
    %     & = \lim_{N \rightarrow \infty} \frac{1}{N_\mathcal{X}} \sum_{j \in \mathcal{I}_\mathcal{X}} r_{d-1}(\norm{b_{[j]}}_2^2, \chi_{[j]}).
    % \end{aligned}
    % \end{equation*}

    We procceed to the proof of the main theorem. 
    Due to the way $\mathcal{U}$ is defined as in Assumption~\ref{assmp::var_space}, $\mathcal{U}$ is compact. Consider the functions $m^{(2)}(\Upsilon)$ and $m^{(4)}(\Upsilon)$ defined as in \cref{eq::m_from_Upsilon} with $\Upsilon$ in place of $\Upsilon_{[j]}$. Due to the continuity of those functions and the compactness of $\mathcal{U}$, $\mathcal{M}^\mathcal{U} \equiv \{(m^{(2)}(\Upsilon), m^{(4)}(\Upsilon)): \Upsilon \in \mathcal{U}\}$ is also compact. \V{In the following proof, we use the uniform continuity of $\rho_{m^{(2)}, m^{(4)}}(\chi)$ with respect to $(\chi, m^{(2)}, m^{(4)})$ in $[0,\infty) \times \mathcal{M}^\mathcal{U}$, as stated in Lemma~\ref{thm::uniform_continuity_rho}. For the full proof of the statement, see \cref{sec::shrinkage}.} 
    For $\delta > 0$, let
    \begin{equation*}
        \bar{\rho}_{m^{(2)}, m^{(4)}}^\delta(\chi)
        \equiv \sup\{\rho_{\tilde{m}^{(2)},\tilde{m}^{(4)}}(\chi): \abs{\tilde{m}^{(2)} - m^{(2)}} < \delta, \abs{\tilde{m}^{(4)} - m^{(4)}} < \delta\},
    \end{equation*}
    \begin{equation*}
        \underline{\rho}_{m^{(2)}, m^{(4)}}^\delta(\chi)
        \equiv \inf\{\rho_{\tilde{m}^{(2)},\tilde{m}^{(4)}}(\chi): \abs{\tilde{m}^{(2)} - m^{(2)}} < \delta, \abs{\tilde{m}^{(4)} - m^{(4)}} < \delta\}.
    \end{equation*}
    If $\delta$ is smaller than a constant depending on $\mathcal{M}^\mathcal{U}$, then $\bar{\rho}_{m^{(2)}, m^{(4)}}^\delta(\chi)$ and $\underline{\rho}_{m^{(2)}, m^{(4)}}^\delta(\chi)$ are continuous in $\chi$, and
    \begin{equation} \label{eq::rho_overline_rho_underline}
        \lim_{\epsilon \rightarrow 0} \sup\{ 
        \bar{\rho}_{m^{(2)}, m^{(4)}}^\delta(\chi) - \underline{\rho}_{m^{(2)}, m^{(4)}}^\delta(\chi):
        \chi \in [0, \infty), (m^{(2)}, m^{(4)}) \in \mathcal{M}^\mathcal{U}\}
        = 0.
    \end{equation}
    Moreover, there exists a finite number of $\norm{\cdot}_\text{op}$-balls $\mathcal{X}_1, \dots, \mathcal{X}_K$ covering $\mathcal{U}$ such that $\Upsilon_{[j]} \in \mathcal{X}_k \implies \abs{m_{[j]}^{(2)} - m^{(2)}(\mathcal{X}_k)} < \delta$ and $\abs{m_{[j]}^{(4)} - m^{(4)}(\mathcal{X}_k)} < \delta$, where $m^{(2)}(\mathcal{X}_k)$ and $m^{(4)}(\mathcal{X}_k)$ are evaluated at the center of $\mathcal{X}_k$. Consider an arbitrary partition $\mathcal{I}_1, \dots, \mathcal{I}_K$ of $\mathcal{I}_\mathcal{X}$ such that $\mathcal{I}_k \subseteq \mathcal{I}_{\mathcal{X} \cap \mathcal{X}_k}$ for all $k$. 
    By the upper claim of \cref{eq::conv_to_r_d-1}, for any $\epsilon > 0$, there exists $\bar{N}_k$ such that $N_k \geq \bar{N}_k$ implies
    \begin{equation} \label{eq::conv_to_r_d-1_k}
        \frac{1}{N_k} \abs*{
        \begin{aligned}
            & \sum_{j \in \mathcal{I}_k} \Pr[\norm{T_{[j]}^{1/2} (\hat{W}_{[j]} \hat\Upsilon_{[j]}^{1/2})^{-1}(\tilde\theta_{[j]} - \theta_{[j]})}_2 > \underline{\chi}_k 
            | \theta_{[1,\dots,N]}, \Upsilon_{[1,\dots,N]} ] \\
            & - \sum_{j \in \mathcal{I}_k} r_{d-1}(\norm{b_{[j]}}_2^2, \underline{\chi}_k)
        \end{aligned}
        }
        < \epsilon,
    \end{equation}
    where $N_k \equiv \abs{\mathcal{I}_k}$ and $\underline{\chi}_k = \min\{\chi: \underline{\rho}_{m^{(2)}(\mathcal{X}_k), m^{(4)}(\mathcal{X}_k)}^{2\epsilon}(\chi) \leq \alpha\}$.
    Let $\bar{N}_* \equiv \max\{\bar{N}_k, k=1,\dots,K\}$, $\mathcal{K} \equiv \{k: N_k \geq \bar{N}_*\}$ and $N_\mathcal{K} \equiv \sum_{k \in \mathcal{K}} N_k$.
    Then,
    \begin{equation*}
        \frac{1}{N_{\mathcal{X}}} \tsum_{j \in \mathcal{I}_\mathcal{X}} \mathbb{I}\{\theta_{[j]} \notin \tilde{\mathcal{C}}_{[j]} \}
        \leq \max_{k \in \mathcal{K}} \frac{1}{N_k} \tsum_{j \in \mathcal{I}_k} \mathbb{I}\{\theta_{[j]} \notin \tilde{\mathcal{C}}_{[j]} \}
        + \frac{N_\mathcal{X} - N_\mathcal{K}}{N_\mathcal{X}}.
    \end{equation*}
    Because $N_\mathcal{X} - N_\mathcal{K}$ is upper bounded by $\bar{N} K$, $\frac{N_\mathcal{X} - N_\mathcal{K}}{N_\mathcal{X}} \rightarrow 0$ as $N_\mathcal{X} \rightarrow \infty$.
    On the other hand,
    \begin{equation*}
        \max_{k \in \mathcal{K}} \frac{1}{N_k} \tsum_{j \in \mathcal{I}_k} \mathbb{I}\{\theta_{[j]} \notin \tilde{\mathcal{C}}_{[j]}\} 
        \leq \max_{k \in \mathcal{K}} \left[ \begin{aligned}
            & \frac{1}{N_k} \tsum_{j \in \mathcal{I}_k} \mathbb{I}\{\abs{\hat{m}_{[j]}^{(l)} - m^{(l)}(\mathcal{X}_k)} > \epsilon ~\text{for}~ l = 2 ~\text{or}~ 4\} \\
            & + \frac{1}{N_k} \tsum_{j \in \mathcal{I}_k} \mathbb{I}\{\norm{T_{[j]}^{1/2} (\hat{W}_{[j]} \hat\Upsilon_{[j]}^{1/2})^{-1}(\tilde\theta_{[j]} - \theta_{[j]})}_2 > \underline{\chi}_k\}
        \end{aligned} \right],
    \end{equation*}
    where $\underline{\chi}_k = \min\{\chi: \underline{\rho}_{m^{(2)}(\mathcal{X}_k), m^{(4)}(\mathcal{X}_k)}^{2\epsilon}(\chi) \leq \alpha\}$. 
    Due to the consistency of $\hat\Upsilon_{[j]}$, $\hat\Phi^{(2)}$ and $\hat\Phi^{(4)}$ in Assumptions~\ref{assmp::var_consistency} and \ref{assmp::moment_consistency} and the homoskedasticity in Assumption~\ref{assmp::var_space}, the first term converges to $0$ as $N_\mathcal{X} \rightarrow \infty$. 
    For the second term, \cref{eq::conv_to_r_d-1_k} implies that
    \begin{equation*}
        \max_{k \in \mathcal{K}} \frac{1}{N_k} \tsum_{j \in \mathcal{I}_k} \mathbb{I}\{\norm{T_{[j]}^{1/2} (\hat{W}_{[j]} \hat\Upsilon_{[j]}^{1/2})^{-1}(\tilde\theta_{[j]} - \theta_{[j]})}_2 > \underline{\chi}_k\}
        \leq \max_{k \in \mathcal{K}} \frac{1}{N_k} r_{d-1}(\norm{b_{[j]}}_2^2, \underline{\chi}_k) + \epsilon
    \end{equation*}
    Because the empirical distribution $F_k$ of $\{b_{[j]}: j \in \mathcal{I}_k\}$ satisfies $\abs{\Exp_{F_k}[\norm{b_{[j]}}_2^2] - m^{(2)}(\mathcal{X}_k)} < \delta$ and $\abs{\Exp_{F_k}[\norm{b_{[j]}}_2^4] - m^{(4)}(\mathcal{X}_k)} < \delta$,
    \begin{equation*}
    \begin{aligned}
        & \frac{1}{N_k} \sum_{j \in \mathcal{I}_k} r_{d-1}(\norm{b_{[j]}}_2^2, \underline\chi_k)
        = \Exp_{F_k} r_{d-1}(\norm{b_{[j]}}_2^2, \underline{\chi}_k)
        \leq \overline{\rho}_{m^{(2)}(\mathcal{X}_k), m^{(4)}(\mathcal{X}_k)}^{2\delta}(\underline{\chi}_k) \\
        & \leq \underline{\rho}_{m^{(2)}(\mathcal{X}_k), m^{(4)}(\mathcal{X}_k)}^{2\delta}(\underline{\chi}_k)
        + (\overline{\rho}_{m^{(2)}(\mathcal{X}_k), m^{(4)}(\mathcal{X}_k)}^{2\delta}(\underline{\chi}_k) -\underline{\rho}_{m^{(2)}(\mathcal{X}_k), m^{(4)}(\mathcal{X}_k)}^{2\delta}(\underline{\chi}_k)) \\
        & \leq \alpha + \sup\{ 
        \bar{\rho}_{m^{(2)}, m^{(4)}}^\delta(\chi) - \underline{\rho}_{m^{(2)}, m^{(4)}}^\delta(\chi):
        \chi \in [0, \infty), (m^{(2)}, m^{(4)}) \in \mathcal{M}^\mathcal{U}\},
    \end{aligned}
    \end{equation*}
    where the last term converges to $0$ as $\delta \rightarrow 0$.
    Therefore, there exists $\delta > 0$ such that the last term is smaller than $\epsilon$. By the $\delta$, we can find the covering $\mathcal{X}_1, \dots, \mathcal{X}_K$, $\bar{N}_k$ and then finally the desired $\bar{N}_\mathcal{X}$ so that $N_\mathcal{X} \geq \bar{N}_\mathcal{X}$ implies $\frac{N_\mathcal{X} - N_\mathcal{K}}{\mathcal{N}_\mathcal{X}} < \epsilon$ and
    \begin{equation*}
        \Exp\left[\left. \frac{1}{N_{\mathcal{X}}} \tsum_{j \in \mathcal{I}_\mathcal{X}} \mathbb{I}\{\theta_{[j]} \notin \tilde{\mathcal{C}}_{[j]} \} \right| \theta_{[1,\dots,N]}, \Upsilon_{[1,\dots,N]} \right]
        \leq \alpha + 3\epsilon.
    \end{equation*}

\section{Appendix: Supplementary Figures}
\label{app::figs}

\begin{figure}[t!]
    \centering
    \hspace{-0.01\textwidth}
    \subfigure[][Bayesian estimates using the least informative prior with $\sigma=1/2$]{\label{fig::theta_epidemia_prior_0} 
    \includegraphics[width=0.66\textwidth]{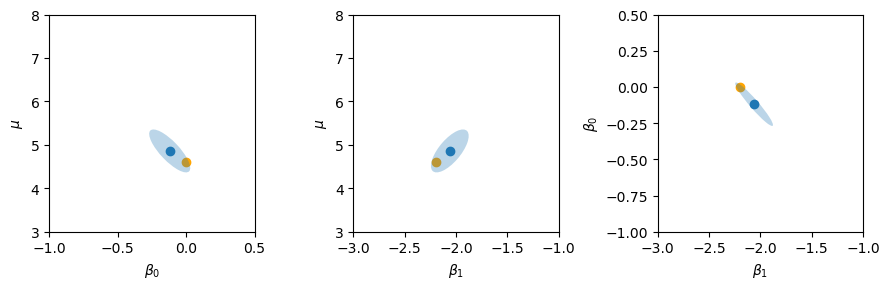}}
    \hspace{-0.01\textwidth}
    \subfigure[][$\sigma=3/8$]{\label{fig::theta_epidemia_prior_1} 
    \includegraphics[width=0.66\textwidth]{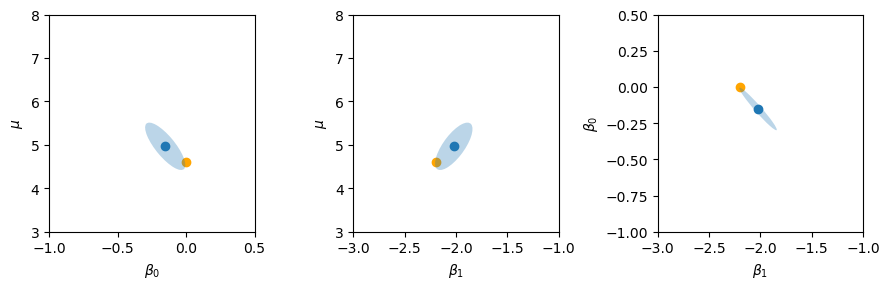}}
    \hspace{-0.01\textwidth}
    \subfigure[][$\sigma=1/4$]{\label{fig::theta_epidemia_prior_2} 
    \includegraphics[width=0.66\textwidth]{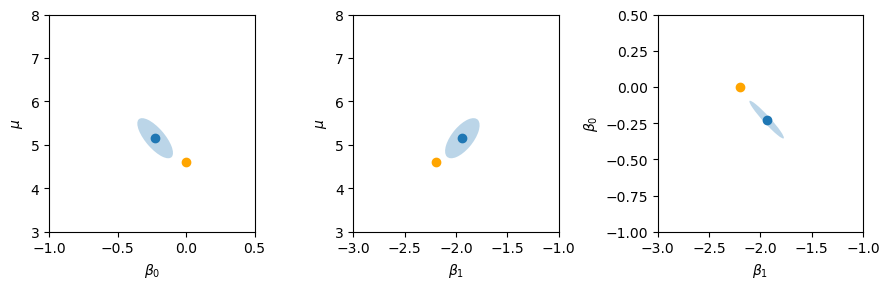}}
    \hspace{-0.01\textwidth}
    \subfigure[][Bayesian estimates using the most informative prior with $\sigma=1/8$]{\label{fig::theta_epidemia_prior_3} 
    \includegraphics[width=0.66\textwidth]{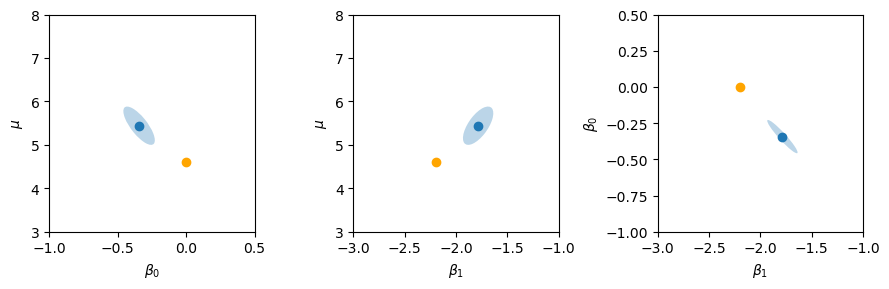}}

    \caption{\V{\textbf{True model parameters (orange), and Bayesian estimates (blue), together with 95\% credible intervals (light blue) using four priors.}
    The infection and death processes were simulated like in \cref{fig::sim_nbinom}.
    The model in \cref{eq::model00,eq::Rt.model0} was fitted using \citet{bhatt2020semi} with priors $N(0, \sigma)$ priors for $\beta_0$ and $\beta_1$, where (a) $\sigma=1/2$ (corresponding to the least informative priors),
    (b) $\sigma=3/8$, (c) $\sigma=1/4$ and (d) $\sigma=1/8$ (corresponding to the most informative priors). }}
\label{fig::theta_epidemia_priors}
\end{figure}

\begin{figure}[t!]
    \centering
    \subfigure[][]{\label{fig::I_theta_0_vs_4} 
    \includegraphics[height=0.22\textwidth]{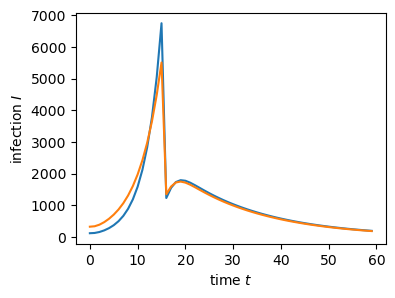}}
    \hspace{-0.01\textwidth}
    \subfigure[][]{\label{fig::EY_theta_0_vs_4} 
    \includegraphics[height=0.22\textwidth]{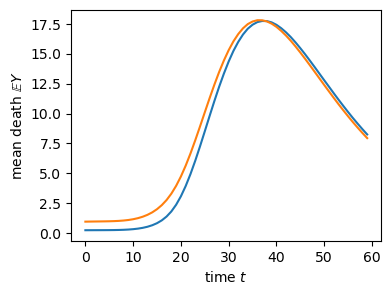}}
    \hspace{-0.01\textwidth}
    \subfigure[][]{\label{fig::Y_theta_0_vs_4} 
    \includegraphics[height=0.22\textwidth]{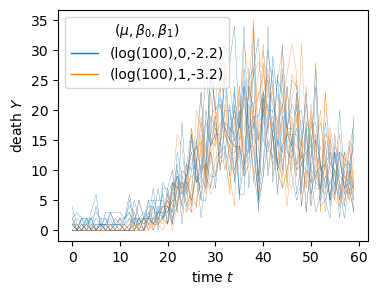}}
    
    \caption{(a) Mean infection and (b) death curves simulated from \cref{eq::model00,eq::Rt.model0} with $(\mu, \beta_0, \beta_1) = (\log(100), 0, -2.2)$ (blue) and $(\log(25), 1, -3.2)$ (orange),
    and (c) ten death processes simulated from each of these models, assuming NB distributions. \V{The observed data look very similar for the two values of $(\mu, \beta_0, \beta_1)$: it would be difficult to pick priors for $(\mu, \beta_0, \beta_1)$ by looking at the data.}}
    \label{fig::theta_0_vs_4}
\end{figure}

\begin{figure}[h!]
     \centering
     \begin{minipage}{0.6\textwidth}
        \subfigure[][]{\label{fig::theta_freqepid_normal} 
        \includegraphics[width=\textwidth]{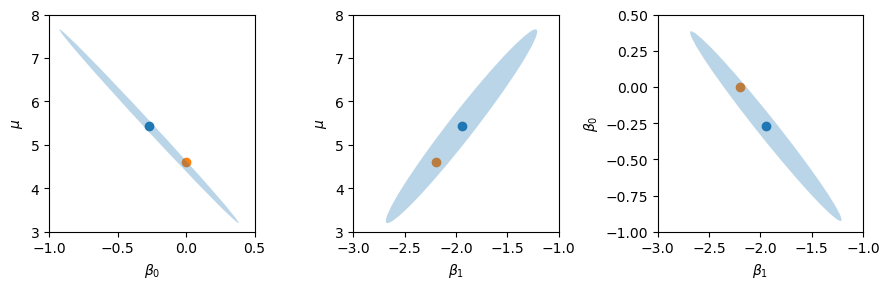}}
    \end{minipage}

    \begin{minipage}{0.3\textwidth}
        \subfigure[][]{\label{fig::pred_I_freqepid_normal} 
        \includegraphics[height=0.75\textwidth]{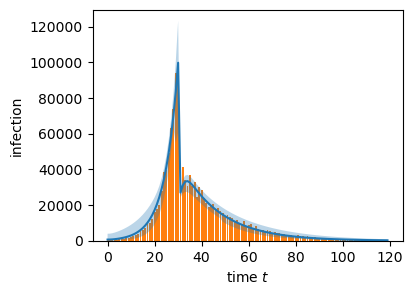}}
    \end{minipage}
    \begin{minipage}{0.3\textwidth}
        \subfigure[][]{\label{fig::pred_EY_freqepid_normal} 
        \includegraphics[height=0.75\textwidth]{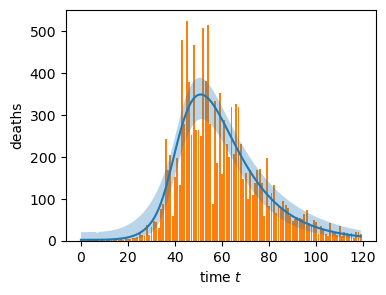}}
    \end{minipage}
    
    \caption{\textbf{ Gaussian model fitted by ML to NB data.} True quantities (model parameters, simulated infections and deaths) are in orange; ML estimates estimates are in blue with 95\% confidence intervals (light blue).
    The infection and death processes in (b,c) were simulated from NB distributions with means 
    in \cref{eq::model0}, $R_t$ in \cref{eq::Rt.model0}, true parameters $\mu= \log(100)$, $\beta_0=0$ and $\beta_1 = -2.2$ in (a), and binary intervention process $A_t=0$ for $t<30$ and $A_t=1$ for $t \ge 30$.
    The model in \cref{eq::model00,eq::Rt.model0} was fitted assuming Gaussian distributed deaths.
    \V{The diagnostics for this fit are in \cref{fig::diagnostic_ny}}.}
     \label{fig::sim_freqepid_normal}
\end{figure}

\begin{figure}
    \centering
    \begin{minipage}{0.3\textwidth}
        \subfigure[][]{\label{fig::diagnostic_nbinom_vs_t} 
        \includegraphics[height=0.75\textwidth]{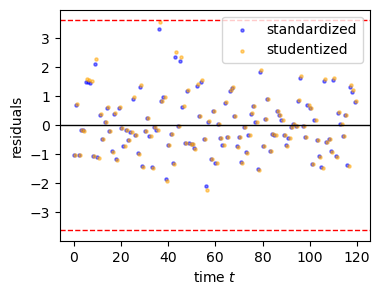}}
    \end{minipage}
    \hspace{-0.01\textwidth}
    \begin{minipage}{0.3\textwidth}
        \subfigure[][]{\label{fig::diagnostic_nbinom_vs_A} 
        \includegraphics[height=0.75\textwidth]{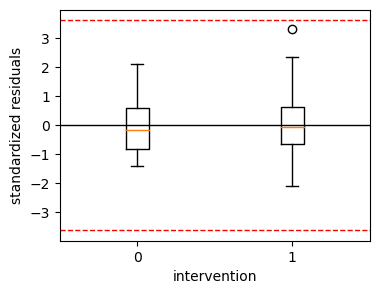}}
    \end{minipage}
    \hspace{-0.01\textwidth}
    \begin{minipage}{0.3\textwidth}
        \subfigure[][]{\label{fig::diagnostic_nbinom_vs_EYmle} 
        \includegraphics[height=0.75\textwidth]{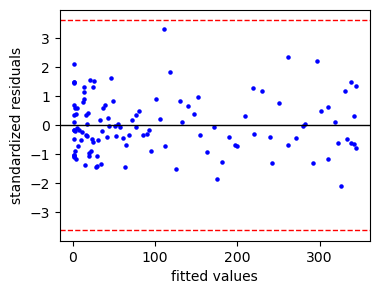}}
    \end{minipage}

    \begin{minipage}{0.3\textwidth}
        \subfigure[][]{\label{fig::diagnostic_nbinom_qq} 
        \includegraphics[height=0.75\textwidth]{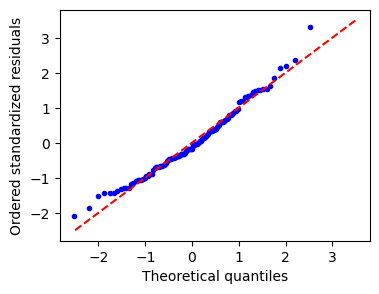}}
    \end{minipage}
    \hspace{-0.01\textwidth}
    \begin{minipage}{0.3\textwidth}
        \subfigure[][]{\label{fig::diagnostic_nbinom_influence} 
        \includegraphics[height=0.75\textwidth]{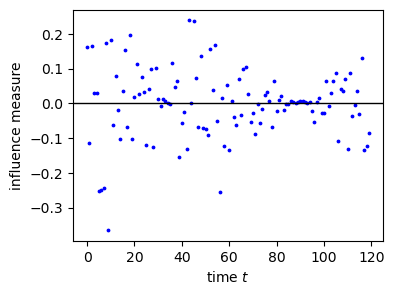}}
    \end{minipage}
    \hspace{-0.01\textwidth}
    \begin{minipage}{0.3\textwidth}
        \subfigure[][]{\label{fig::diagnostic_nbinom_cook} 
        \includegraphics[height=0.75\textwidth]{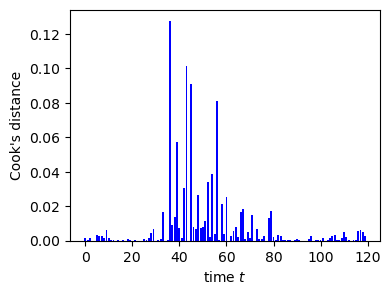}}
    \end{minipage}
    \vspace{-.1in}    
    \caption{\textbf{Diagnostics for the NB model fitted to the NB data shown in \cref{fig::sim_nbinom}}. (a) Standardized and studentized residuals versus $t$. (b) Standardized residuals versus $A_t$ and (c) versus $\hat Y_t$. (d) QQ plot of standardized residuals. (e) Influence on $\hat \beta_1$ and (f) Cook's distance, versus $t$. \V{All diagnostics look fine.}}  
    \label{fig::diagnostic_nbinom}
\end{figure}

\begin{figure}
    \centering
    \includegraphics[width=0.75\textwidth, height=0.35\textheight]{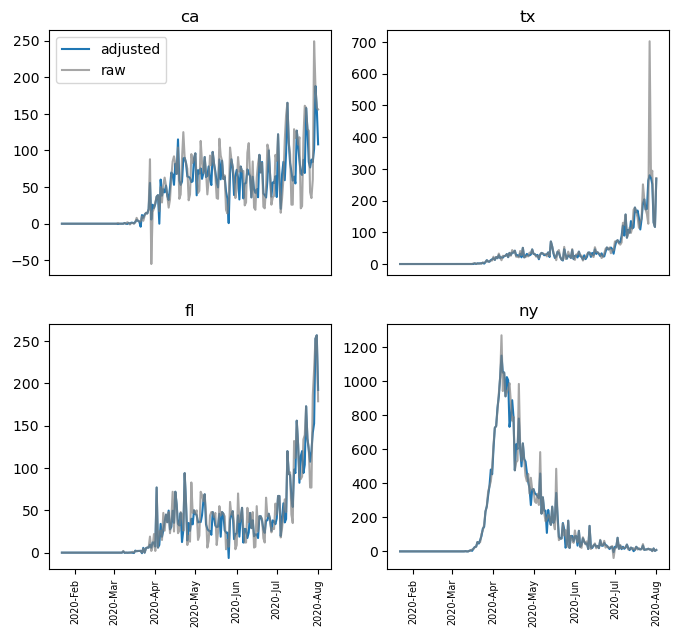}
    
    \caption{\textbf{Observed and weekend corrected death time series for four representative states}. The large grey spikes lying above the blue curves are the monday effects, and those below are the weekend effects.}
    \label{fig::preprocess}
\end{figure}

\begin{figure}[t!]
    \centering
    \begin{minipage}{0.4\textwidth}
        \subfigure[][]{\label{fig::iv_delphi} 
        \includegraphics[height=0.75\textwidth]{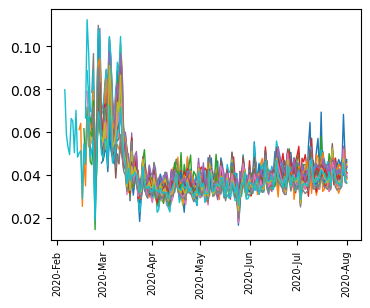}}
    \end{minipage}
    \hspace{-0.01\textwidth}
    \begin{minipage}{0.4\textwidth}
        \subfigure[][]{\label{fig::A_delphi} 
        \includegraphics[height=0.75\textwidth]{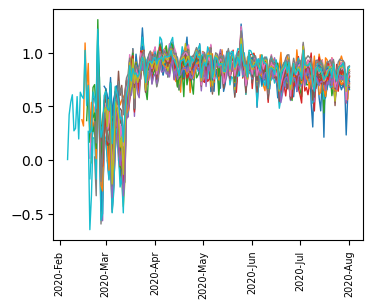}}
    \end{minipage}
    
    \caption{\textbf{Rescaling and shifting mobility $A_t$.} (a) Observed Mobility time series in the 30 states and (b) scaled and shifted versions so the new values are mostly in the $[0,1]$ range. \V{The ML estimates of $\beta_1$ on the original scale are around 
    100. When we rescale $A_t$, the $\beta_1$ estimates have approximately the same magnitude and sign as in \cite{bhatt2020semi}.}}
    \label{fig::iv_A_delphi}
\end{figure}

\begin{figure}
    \centering
    \begin{minipage}{0.3\textwidth}
        \subfigure[][]{\label{fig::pred_R_freqepid_ny_wo_139} 
        \includegraphics[height=0.75\textwidth]{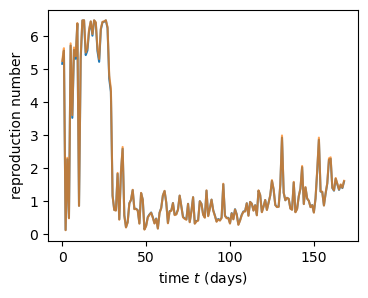}}
    \end{minipage}
    \hspace{-0.01\textwidth}
    \begin{minipage}{0.3\textwidth}
        \subfigure[][]{\label{fig::pred_EY_freqepid_ny_wo_139} 
        \includegraphics[height=0.75\textwidth]{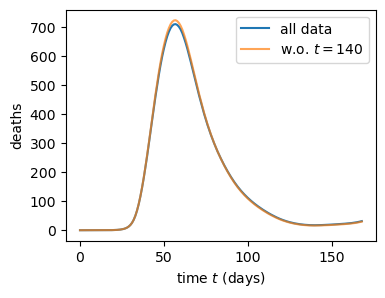}}
    \end{minipage}
    \caption{\textbf{ Influence of $Y_{140}$ for NY deaths}. (a) Fitted reproduction numbers $R_t$ and (b) mean fitted deaths with and without $Y_{140}$.
    The curves are nearly identical. The data and model fit in NY are in \cref{fig::pred_EY_freqepid_delphi} and diagnostics are in \cref{fig::diagnostic_ny}.}
    \label{fig::pred_wo_139}
\end{figure}

\begin{figure}[h!]
    
    \centering
    \subfigure[][]{\label{fig::acf_tmle} \includegraphics[height=0.4\textwidth]{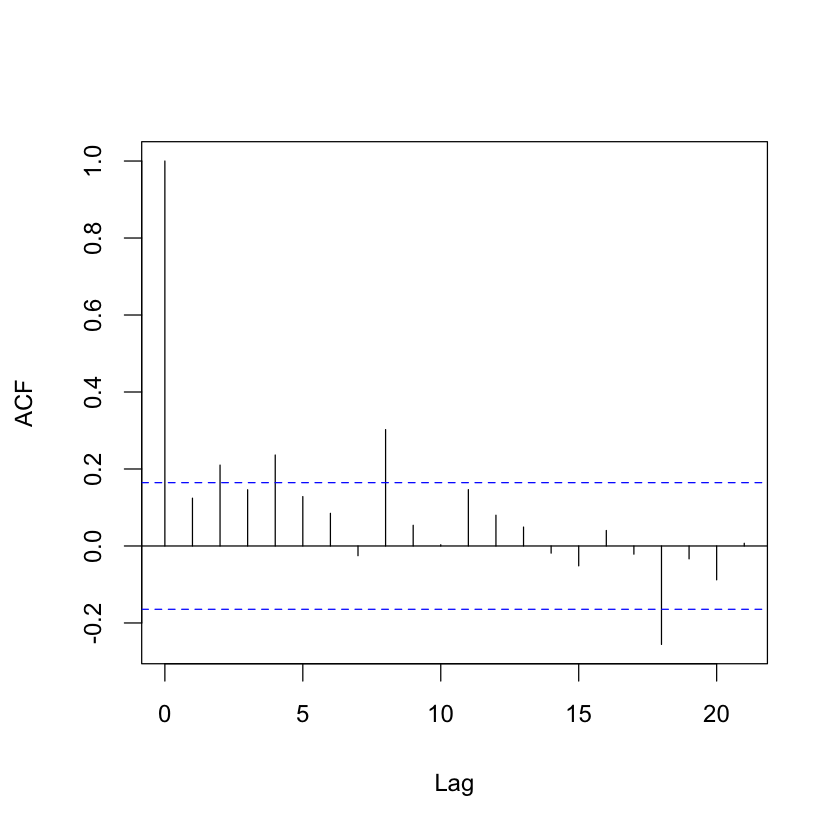}}
    \subfigure[][]{\label{fig::acf_diff} \includegraphics[height=0.4\textwidth]{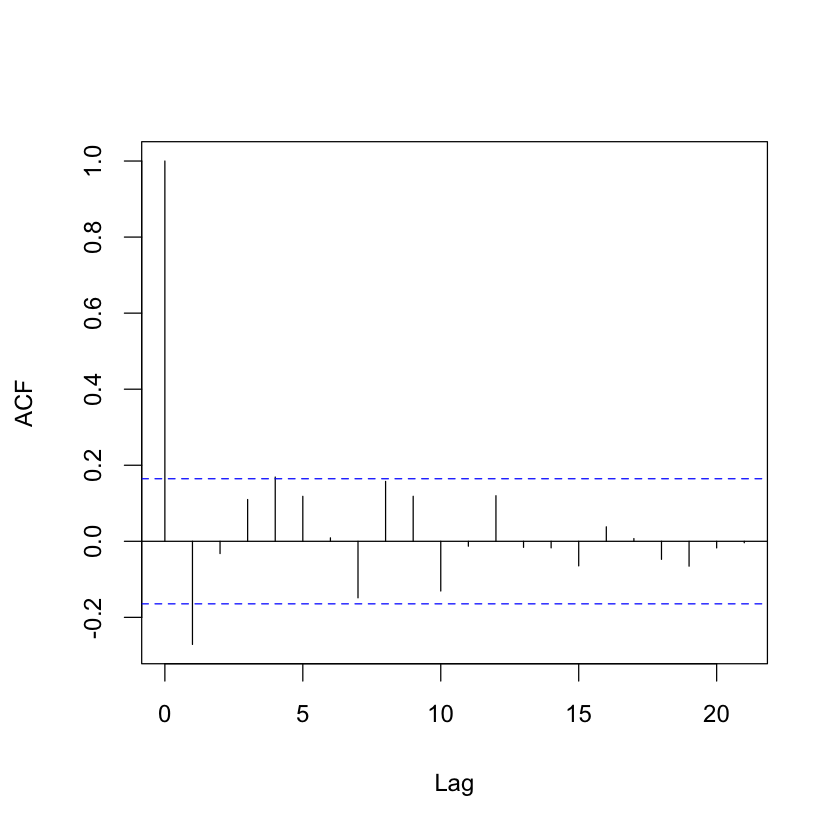}} 
    
    \caption{\V{\textbf{Autocorrelation function (ACF) of NY state residuals in \cref{sec::data}.} (a) ACF using standardized model residuals. The positive correlations for small lags may be due to the model lack of fit seen in \cref{fig::diagnostic_ny_vs_t}. (b) ACF using first differences, which does not rely on a model. There does not appear to be long range correlations in (a) or (b), which is a condition for \cref{thm::asymp_norm}. ACFs for the other states are similar.}}
     \label{fig::ACF}
\end{figure}

\end{document}